% Orivec because it clashes with amsmath otherwise, and we don't use \vec anyway.
\documentclass[orivec]{llncs}
\usepackage[utf8]{inputenc}

% For an attempt at improving the font resolution. Probably not actually required or functional
\usepackage[T1]{fontenc}
\usepackage{lmodern}

% For a large amount of math symbols.
\usepackage{amsmath}
\usepackage{amssymb}

% For bold math symbols
\usepackage{bm}

% For fancy arrows
\usepackage{mathtools}

% For changing text color.
\usepackage{xcolor}

% For overriding enumeration bullet points.
\usepackage[shortlabels]{enumitem}

% For formatting algorithm pseudocode.
\usepackage{algorithm2e}

% For hyperlinks to references and proofs
\usepackage[
colorlinks=true,
linkcolor=blue,
citecolor=blue
]{hyperref}
% Some stuff required according to the springer format example file.
\usepackage{color}

% For Suppressing warnings that don't require any action.
\usepackage[save]{silence}
% Caption does not seem to explicitly support the llncs class, but also does not break it.
% The only real alternative seems to be the subfig package, which I tried and could not get working.
% The current solution is to disable the warning and pass formating options to Caption to make captions visually the same as without it.
\WarningFilter{caption}{Unknown document class}

% Does not work, as it takes both the caption and label as an argument, and does not accept the spec and imp labels generated for the figures.
% \usepackage[caption=false]{subfig}

% For putting figures next to each other
\usepackage[hypcap=true]{subcaption}
\captionsetup{labelsep=period,labelfont={bf}}
% Either needs arguments to do proper formatting of captions according to the lncs style,
% Or alternatives are required for the following commands to replace it's functionality:
    % \begin{subfigure}\end{subfigure}
    % \phantomcaption
    % \captiontext
% No valid alternative was found, so the first solution is chosen through captionsetup.
% Formatting arguments may be incomplete.

% For drawing examples
\usepackage{tikz}
\usetikzlibrary{automata}
\usetikzlibrary{positioning}
\usetikzlibrary{arrows.meta}
\usetikzlibrary{shapes.multipart}
\usetikzlibrary{shapes.geometric}
\usetikzlibrary{matrix}
\usetikzlibrary{backgrounds}

% For LTS related commands
\usepackage{LTS_tools}

% For cutting down on recompile times.
% Turns out this provides no noticable speed benifit on overleaf.
% \usetikzlibrary{external}
% \tikzexternalize[prefix=tikz/]

%Style for transitions only apearing in the implementation.
\tikzstyle{imp}=[blue]
\tikzstyle{imp only}=[imp,dashed]

% For drawing inline arrows
\newlength{\colonheight}
\AtBeginDocument{\settoheight{\colonheight}{:}}
\newcommand{\tikzarrow}[1][]{%
    \tikz[baseline={([yshift=-.5\colonheight]current bounding box.north)}]%
            {\draw[-stealth,#1] (0,0) -- (10pt,0) ;}}

% for citations
\usepackage[
backend=biber,
citestyle=numeric-comp,
style=lncs
]{biblatex}
\addbibresource{references.bib}

% To not have to remove all the \centermath statements that were already in use for historic reasons.
\newcommand{\centermath}[1]{\[#1\]}

% For putting proofs in the appendix
%docs: https://ctan.math.illinois.edu/macros/latex/contrib/apxproof/apxproof.pdf
\usepackage[bibliography=common]{apxproof}

%An attempt at getting a somewhat constant indenting scheme
\newlength{\tabwidth}
\setlength{\tabwidth}{5mm}
\newcommand{\tab}[1][\tabwidth]{\hspace*{#1}}
\newenvironment{case_distinction}{\begin{enumerate}[left= 0mm .. \tabwidth,align=left,nosep,itemindent=*]}{\end{enumerate}}

% Making Theorems and Definitions
% XXXrep versions are required to attach proofs.
\usepackage{amsthm}
\newtheoremrep{lemma}{Lemma}
\newtheoremrep{theorem}{Theorem}
\newtheoremrep{conjecture}{Conjecture}
% \newtheorem{proposition}{Proposition}

% For cref
\usepackage{cleveref}
\crefname{conjecture}{conjecture}{conjectures}

% Used to reference subpoints within a definition, because apparently there is no easy way to do this. I only marginally understand what half this stuff does, so there might be a better way but #it just works.
\def\itemrefInternal#1:#2:#3\relax{\cref{#1:#2}.\ref{item:#2_#3}}
\newcommand*{\itemref}[2]{\itemrefInternal#1:#2\relax}
%capital letter version, with some nice code duplication because this is complicated enough as it is.
\def\ItemrefInternal#1:#2:#3\relax{\Cref{#1:#2}.\ref{item:#2_#3}}
\newcommand*{\Itemref}[2]{\ItemrefInternal#1:#2\relax}

% Formats an explanation of a step in a proof.
\newcommand*{\proofstep}[1]{{\color{blue}(*#1*)}}

%For cross referencing specifications and implementations
\setupcounter{spec}{S}
\setupcounter{imp}{I}

% Command for equalizing the values of both spec and imp counters to make matching specs and imps have the same number. Sets the smallest counter to the value of the biggest one.
\newcommand{\equalizeCounters}{%
 \ifnum\value{specCounter}>\value{impCounter}
    \setcounter{impCounter}{\value{specCounter}}%
  \else 
    \setcounter{specCounter}{\value{impCounter}}%
\fi
}

\title{Compositionality in Model-Based Testing}

\author{
    Gijs van Cuyck\inst{1}%
    %\orcidID{0000-0002-5653-8243}
    \and
    Lars van Arragon\inst{1}%
    %\orcidID{0009-0006-8333-5499}
    \and
    Jan Tretmans\inst{1,2}%
    %\orcidID{0000-0003-3305-8017}
}
\institute{
    Radboud University,
    Institute iCIS,
    Nijmegen,
    The Netherlands
    \and
    TNO-ESI,
    Eindhoven,
    The Netherlands \\
    \email{\{gijs.vancuyck,lars.vanarragon,jan.tretmans\}@ru.nl}\thanks{
    This work is part of the project
    \textit{TiCToC - Testing in Times of Continuous Change},
    project nr 17936, part of the research program
    \textit{MasCot - Mastering Complexity},
    which is supported by the Dutch Research Council NWO.}
}

\begin{document}

\maketitle

\begin{abstract}
Model-based testing (MBT) promises a scalable solution to testing large systems, if a model is available. Creating these models for large systems, however, has proven to be difficult. Composing larger models from smaller ones could solve this, but our current MBT conformance relation $\uioco$ is not compositional, i.e.\ correctly tested components, when composed into a system, can still lead to a faulty system. To catch these integration problems, we introduce a new relation over component models called \emph{mutual acceptance}. Mutually accepting components are guaranteed to communicate correctly, which makes MBT compositional. In addition to providing compositionality, mutual acceptance has benefits when retesting systems with updated components, and when diagnosing systems consisting of components.
\keywords{model-based testing \and component-based testing \and compositional testing \and labelled transition systems \and uioco}
\end{abstract}

\section{Introduction}
\label{sec:introduction}
Modern software systems are becoming increasingly large and complex. Traditional testing scales poorly for systems of these sizes. This causes the development and maintenance of test suites to become costly and time consuming, which slows down the development of new functionality. Model-Based Testing (MBT) is a technique that has been developed to increase the efficiency and effectiveness of testing. With MBT, testers create a model of the system under test from which an MBT tool can then automatically generate and execute test cases. This reduces the problem of creating and maintaining a test suite to creating and maintaining a model of the system under test.

Creating models for complex systems, however, is still difficult and laborious, since often no single person understands the whole system well enough.
A solution is to divide and conquer: the system is decomposed into its components which are modelled and tested separately. This requires that the applied MBT methodology is \emph{compositional}: if each component implementation is correct with respect to its component model, then it can be inferred that the composition of component implementations, i.e.\ the system under test, is correct with respect to the composition of component models, i.e.\ the system model.

In this paper, we investigate compositionality for MBT with labelled transition systems as models, $\uioco$ as the conformance relation, and parallelism modelling component composition \cite{vanderbijl_CompositionalTestingIoco_2004}. 
We define a relation over component specification models, called \emph{mutual acceptance}, which guarantees that components communicate neatly, and that $\uioco$ is preserved under composition. We generalise existing results on compositionality \cite{dealfaro_InterfaceAutomata_2001,frantzen_ModelBasedTestingEnvironmental_2007,vanderbijl_CompositionalTestingIoco_2004} by making less restrictive assumptions and using a composition operator that is associative so that also compositions of more than two components can be easily considered. Moreover, we use the more recent $\uioco$ conformance relation instead of $\ioco$ \cite{tretmans_GoodbyeIoco_2022}. A more detailed comparison with related work can be found in \cref{sec:related_work}.

In addition to compositionality, mutual acceptance also benefits testing evolving systems and software product lines. It enables more effective testing when a component is replaced by an updated version, as will be elaborated in \cref{sec:componentSubstitution}.
Diagnosis is the converse of compositionality: if the whole system has a failure, then diagnosis tries to localise the failure in one of its components; \cref{sec:componentSubstitution} will also discuss the use of mutual acceptance in diagnosis.

\paragraph{Overview}
\Cref{sec:background} contains preliminaries. \Cref{sec:exampleIntro} shows why the current approach to compositional model-based testing is not desirable by means of an example.
\Cref{sec:definitions} formalises what it means for two models to be compatible with each other for use in model-based testing, and defines the mutual acceptance relation $\mutuallyaccepts$. Then \cref{sec:acceptingSystems} goes on to prove that this leads to desirable properties, after which \cref{subsec:acceptingExample} revisits the example. \Cref{sec:componentSubstitution} discusses how these properties also lead to a reduced testing effort when substituting components, and how $\mutuallyaccepts$ can be used in diagnosis. \Cref{sec:related_work} describes some of the large body of related work previously done in the area of compositional model-based testing. Finally, \cref{sec:future_work,sec:conclusion} discuss possible future work and summarise the main results of this paper, respectively. This is an extended version with proofs. The original publication will be available trough Springer. All proofs for lemmas and theorems are given in the appendix.

\section{Preliminaries}
\label{sec:background}
We give the formal definitions for the MBT theory that we consider. We base our work on the theory developed in \cite{tretmans_ModelBasedTesting_2008,vanderbijl_CompositionalTestingIoco_2004}.
The main formalism used is that of labelled transition systems (LTS) (\cref{def:LTS}). An LTS has states and transitions between states that model events. An event can be an input, an output or $\tau$; $\tau$ represents an internal transition which is not observable from the outside and can therefore not be tested. $I_s$, $U_s$, etc., indicate inputs and outputs, respectively, coming from $LTS$ $s$. The shorthand $L_s$ means $I_s \cup U_s$. The name of an $LTS$ is sometimes used as shorthand for its starting state. $\LTS[I,U]{}$ denotes the domain of labelled transition systems with inputs $I$ and outputs $U$, or just $\LTS$ if $I$ and $U$ are known.
For technical reasons we restrict this class to strongly converging and image-finite systems. Strong convergence means that infinite sequences of $\tau$-actions are not allowed to occur. Image-finiteness means that the number of non-deterministically reachable states shall be finite.
In examples, inputs and outputs are given implicitly by prefixing inputs with $?$, and outputs with $!$. The same label can be in the input set of one $LTS$ and in the output set of another.

\begin{definition}
\label{def:LTS}
A \emph{Labelled Transition System} is a 5-tuple $\langle Q,I,U,T,q_0 \rangle$ where:
\begin{itemize}
    \item $Q$ is a non-empty, countable set of states;
    \item $I$ is a countable set of input labels;
    \item $U$ is a countable set of output labels, which is disjoint from $I$;
    \item $T\subseteq Q \times (I\cup U\cup \{\tau\})\times Q$ is a set of triples, the transition relation;
    \item $q_0 \in Q$ is the initial state.
\end{itemize}
\end{definition}

\noindent
Reasoning about labelled transition systems uses the concept of traces. A trace is a sequence of labels that can occur when walking trough an LTS. Common notation used when describing traces is repeated in \cref{def:arrowdefs}.

\begin{definition}
Let $s \in \LTS$; $p_1,\,p_2\in Q_s$; $\ell\in L_s$; $\sigma \in L_s^*$; $\ell_\tau\in L_s\cup\{\tau\}$; $\sigma_\tau\in (L_s\cup\{\tau\})^*$, where $\epsilon$ denotes the empty sequence of labels.
$$\begin{array}[t]{r@{~~~}c@{~~~}l}
    p_1\xrightarrow{\epsilon}p_2 & \defeq & p_1=p_2 \\
    p_1\xrightarrow{\ell_\tau}p_2 & \defeq & (p_1,\ell_\tau,p_2)\in T_s \\
    p_1\xrightarrow{\ell_\tau \cdot \sigma_\tau}p_2 & \defeq & \exists p_3\in Q_s: p_1 \xrightarrow{\ell_\tau} p_3 \land  p_3\xrightarrow{\sigma_\tau}p_2 \\
    p_1 \xrightarrow{\sigma} & \defeq & \exists p_3 \in Q_s : p_1 \xrightarrow{\sigma} p_3 \\
    p_1 \nottrans{\sigma} & \defeq & \nexists p_3 \in Q_s : p_1 \xrightarrow{\sigma} p_3 \\
    p_1\xRightarrow{\epsilon}p_2 & \defeq & \exists \varphi \in \{\tau\}^* : p_1 \xrightarrow{\varphi} p_2\\
    p_1 \xRightarrow{\sigma\cdot\ell}p_2 & \defeq & \exists p_3,p_4 \in Q_s :
      p_1 \xRightarrow{\sigma} p_3 \land
      p_3 \xrightarrow{\ell} p_4 \land
      p_4 \xRightarrow{\epsilon} p_2\\
    p_1 \xRightarrow{\sigma} & \defeq & \exists p_3 \in Q_s :
     p_1 \xRightarrow{\sigma} p_3 \\
    p_1 \Nottrans{\sigma} & \defeq & \nexists p_3 \in Q_s : p_1 \xRightarrow{\sigma} p_3 \\
    \end{array}$$
\label{def:arrowdefs}
\end{definition}

\noindent
While specifications are often given as an $LTS$, $IOTS$ are used to represent implementations. In MBT, we commonly assume that we can always give any input to an implementation. $\IOTS$ denotes the domain of all input-enabled transition systems, and  $\IOTS[I,U]$ denotes the domain of all input enabled transition systems with input set $I$ and output set $U$.
 
\begin{definition}
\label{def:iots}
$i\in \LTS$ is an \emph{Input-Enabled Transition System} ($IOTS$) if in every state for every input, its transition relation either contains that input, or reaches with just internal transitions another state that does so:
\centermath{\forall q \in Q_i,~ \ell \in I_i:~ q \xRightarrow{\ell}}
\end{definition}

\noindent
Multiple labelled transition systems can be composed to form larger models. For component specifications this is often done using parallel composition (\cref{def:parcomp}). The result of parallel composition represents a system where all the components are being executed at the same time independently of each other. Synchronisation occurs on shared labels. An overview of the label sets of a parallel composition is shown in \cref{fig:parcomp_schema}. Note that we do not require the input sets of the components to be disjoint, which will be explained below. Parallel composition assumes synchronous communication between components. Systems with asynchronous communication can still be modelled, but this requires giving explicit specification for the communication medium.

\begin{definition}
\label{def:composable}
$s, e\in \LTS$ are \emph{composable} iff their respective output sets $U_s$ and $U_e$ are disjoint:~~ $U_s \cap U_e = \emptyset$
\end{definition}

\begin{definition}
\emph{Parallel composition} $\parcomp$ on two composable labelled transition systems $s$ and $e$ is defined as:~
$s\parcomp e ~\defeq~ \langle Q,I,U,T,q_0 \rangle$,~ where 
\begin{itemize}
    \item $Q ~=~ \{\;p\parcomp q \:\setbar\: p\in Q_s, q\in Q_e\;\}$
    \item $I\hspace{3pt} ~=~ (I_s \setminus U_e) \cup (I_e \setminus U_s)$
    \item $U ~=~ U_s\cup U_e$
    \item $q_0 ~=~ {q_0}_s\parcomp {q_0}_e$
    \item $T$ is the minimal set satisfying the following inference rules \\
    (where $p,p_1,p_2 \in Q_s, q,q_1,q_2 \in Q_e$):
    $$\begin{array}[t]{l@{~}r@{~~~~}c@{~~~~}l}
        p_1 \xrightarrow{\ell} p_2 &
        \ell \in (L_s \cup \{\tau\})\setminus L_e &
        \vdash &
        p_1 \parcomp q_{\hphantom{1}} \xrightarrow{\ell} p_2 \parcomp q \\
        q_1 \xrightarrow{\ell} q_2 &
        \ell \in (L_e \cup \{\tau\})\setminus L_s &
        \vdash &
        p_{\hphantom{1}} \parcomp q_1 \xrightarrow{\ell} p_{\hphantom{1}} \parcomp q_2 \\
        p_1 \xrightarrow{\ell} p_2,~ q_1 \xrightarrow{\ell} q_2 &
        \ell \in L_s \cap L_e &
        \vdash &
        p_1 \parcomp q_1 \xrightarrow{\ell} p_2 \parcomp q_2 \\
    \end{array}$$
\end{itemize}
\label{def:parcomp}
\end{definition}

\begin{figure}[t!]
    \centering
    \includegraphics[scale=1.7]{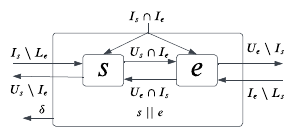}
    \caption{Parallel composition of system $s$ and its environment $e$.}
    \label{fig:parcomp_schema}
\end{figure}

\begin{toappendix}
    \begin{definition}
     $s,s'\in \LTS$ are \textbf{isomorphic} iff:\\ 
    $U_s = U_{s'} \land I_s = I_{s'}$, and there exists a bijective function $f:: Q_s \rightarrow Q_{s'}$ such that $f(s_0) = {s'}_0$ and $\forall p,p'\in Q_s, \ell \in L_s^\delta\cup\{\tau\}: p \xrightarrow{\ell} p' \iff f(p) \xrightarrow{\ell} f(p')$
    \label{def:isomorphism}
    \end{definition}

    \begin{lemma} This lemma is a part of the proof for \cref{lemma:parcomp_assoc}\\
    $\forall a \in  ((U_s\cap L_e) \cup (L_s \cap U_e))\setminus L_t : p \nottrans{a} \lor\; q \nottrans{a}\;\land$\\
    $\forall a \in ((U_s \cup U_e) \cap L_t) \cup ((L_s\cup L_e) \cap U_t) : r \nottrans{a} \lor\; ($\\
    \tab$a \in ((L_s\cap U_t) \cup (U_s \cap L_t))\setminus L_e \implies p \nottrans{a}\;\land$\\
    \tab$a \in ((U_e\cap L_t)\cup (L_e \cap U_t))\setminus L_s \implies q \nottrans{a}\;\land$\\
    \tab$a \in (U_s\cap L_e \cap L_t) \cup (L_s\cap U_e \cap L_t) \cap (L_s \cap L_e \cap U_t) \implies (p \nottrans{a}\lor\; q \nottrans{a}))\;\implies$\\
    
    $\forall b \in (U_s \cap (L_e \cup L_t))\cup(L_s \cap (U_e \cup U_t)) : p \nottrans{b} \lor\; ($\\
    \tab$b \in ((U_s\cap L_e) \cup (L_s \cap U_e))\setminus L_t \implies q \nottrans{b}\;\land$\\
    \tab$b \in ((U_s\cap L_t) \cup (L_s \cap U_t))\setminus L_e \implies r \nottrans{b}\;\land$\\
    \tab$b \in (U_s\cap L_e \cap L_t) \cup (L_s \cap U_e \cap L_t) \cup (L_s \cap L_u \cap U_t) \implies (q \nottrans{b}\lor\; r \nottrans{b}))$\\
    \label{lemma:parcomp_assoc_sublemma_1}
    \end{lemma}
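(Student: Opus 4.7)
The plan is to prove this purely by case analysis on where the label $b$ lies with respect to the three alphabets $L_s, L_e, L_t$, matching each case with the corresponding clause of the premise. First I would unfold the domain of the universal quantifier: any $b$ in $(U_s \cap (L_e \cup L_t))\cup(L_s \cap (U_e \cup U_t))$ is necessarily in $L_s$, and moreover it is in $L_e \cup L_t$ (either directly from the left summand, or because $U_e \cup U_t \subseteq L_e \cup L_t$). Hence the possibilities reduce to three: (i) $b \in (L_s \cap L_e)\setminus L_t$, (ii) $b \in (L_s \cap L_t)\setminus L_e$, and (iii) $b \in L_s \cap L_e \cap L_t$. The case $b \in L_s \setminus (L_e \cup L_t)$ is excluded because it would require $b \in U_s$ and $b \in L_e \cup L_t$ simultaneously.

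In case (i), membership in the domain forces $b \in U_s$ or $b \in U_e$, hence $b \in ((U_s \cap L_e)\cup(L_s \cap U_e))\setminus L_t$, and the first premise directly yields $p \nottrans{b} \lor q \nottrans{b}$, which is exactly the required disjunct for this sub-case. In case (ii), membership in the domain similarly forces $b \in U_s$ or $b \in U_t$, so $b$ lies in the antecedent of the second premise. If that premise returns $r \nottrans{b}$, the goal's second sub-case conclusion is satisfied; otherwise the first inner implication of premise~2 applies (its antecedent $b \in ((L_s\cap U_t)\cup(U_s\cap L_t))\setminus L_e$ is exactly where we are), giving $p \nottrans{b}$, which discharges the main disjunct of the goal.

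In case (iii), the fact that $b$ lies in the conclusion's domain, combined with $b \in L_s \cap L_e \cap L_t$, places $b$ into $(U_s \cap L_e \cap L_t) \cup (L_s \cap U_e \cap L_t) \cup (L_s \cap L_e \cap U_t)$, so the antecedent of the second premise holds. Either $r \nottrans{b}$, which already yields $q \nottrans{b} \lor r \nottrans{b}$ as required, or premise~2's third inner implication fires and delivers $p \nottrans{b} \lor q \nottrans{b}$; the first disjunct closes the main goal, the second delivers the required sub-case conclusion.

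I expect the only real obstacle to be the bookkeeping: the premise and the conclusion are built from very similar-looking but asymmetric unions of intersections, and one must be careful that each case genuinely matches exactly one branch of the antecedent of each inner implication (in particular in case (iii), where $b$ can simultaneously satisfy several input/output membership conditions, producing several applicable inner implications that must all be shown compatible with the conclusion). A small secondary issue is the apparent typo mixing $\cap$ and $\cup$ in premise~2's third inner antecedent, which I would read uniformly as $\cup$ so that the three intersections line up symmetrically with those appearing in the conclusion.
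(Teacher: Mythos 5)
Your proposal is correct and follows essentially the same route as the paper's own proof: a case analysis on where $b$ sits relative to $L_e$ and $L_t$, dispatching the $b\notin L_t$ case to the first premise and the $b\in L_t$ cases to the second, with a further split on whether $r \nottrans{b}$ holds or the inner implications fire. If anything, your three-way split is slightly cleaner: the paper labels its second case as $b \in (U_s\cap L_t)\cup(L_s\cap U_t)$, which taken literally misses elements of $L_s\cap U_e\cap L_t$, whereas your case (iii) covers them, and your reading of the stray $\cap$ in the third inner antecedent as $\cup$ matches how the paper itself uses that set later in the derivation.
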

    \begin{proof}
    % Proof for lemma:parcomp_assoc_sublemma_1.
% Checked by No-one. gijs en lars. 9-3-23
\ \\
$\forall a \in  ((U_s\cap L_e) \cup (L_s \cap U_e))\setminus L_t: p \nottrans{a} \lor\; q \nottrans{a}\;\land$\\
$\forall a \in ((U_s \cup U_e) \cap L_t) \cup ((L_s\cup L_e) \cap U_t) : r \nottrans{a} \lor\; ($\\
\tab$a \in ((L_s\cap U_t) \cup (U_s \cap L_t))\setminus L_e \implies p \nottrans{a}\;\land$\\
\tab$a \in ((U_e\cap L_t)\cup (L_e \cap U_t))\setminus L_s \implies q \nottrans{a}\;\land$\\
\tab$a \in (U_s\cap L_e \cap L_t) \cup (L_s\cap U_e \cap L_t) \cap (L_s \cap L_e \cap U_t) \implies (p \nottrans{a}\lor\; q \nottrans{a}))$\\
\proofstep{Case distinction on $b$}

\begin{case_distinction}
    \item[$b \in  ((U_s\cap L_e) \cup (L_s \cap U_e))\setminus L_t$:]\ \\
    This gives $p \nottrans{b} \lor\; q \nottrans{b}$. $p \nottrans{b}$ Immediately gives the goal, so assume $q \nottrans{b}$.\\
    We have $b \notin L_t$, which makes two of the three preconditions of the goal conjunction false. The third one follows from $q \nottrans{b}$. So all in all:\\
    $True \implies q \nottrans{b}\;\land$\\
    $False \implies r \nottrans{b}\;\land$\\
    $False \implies (q \nottrans{b}\lor r\nottrans{b})$\\
    \proofstep{$\implies$}\\
    $b \in ((U_s\cap L_e) \cup (L_s \cap U_e))\setminus L_t \implies q \nottrans{b}\;\land$\\
    $b \in ((U_s\cap L_t) \cup (L_s \cap U_t))\setminus L_e \implies r \nottrans{b}\;\land$\\
    $b \in (U_s\cap L_e \cap L_t) \cup (L_s \cap U_e \cap L_t) \cup (L_s \cap L_u \cap U_t)  \implies (q \nottrans{b}\lor r\nottrans{b})$\\
    \proofstep{$\implies$}\\
    $p\nottrans{b}\; \lor($\\
    \tab$b \in ((U_s\cap L_e) \cup (L_s \cap U_e))\setminus L_t \implies q \nottrans{b}\;\land$\\
    \tab$b \in ((U_s\cap L_t) \cup (L_s \cap U_t))\setminus L_e \implies r \nottrans{b}\;\land$\\
    \tab$b \in (U_s\cap L_e \cap L_t) \cup (L_s \cap U_e \cap L_t) \cup (L_s \cap L_u \cap U_t)  \implies (q \nottrans{b}\lor r\nottrans{b}))$\\

    \item[$b \in  ((U_s\cap L_t) \cup (L_s \cap U_t))$:]\ \\
    $ r \nottrans{b} \lor\; ($\\
    \tab$b \in ((L_s\cap U_t) \cup (U_s \cap L_t))\setminus L_e \implies p \nottrans{b}\;\land$\\
    \tab$b \in ((U_e\cap L_t)\cup (L_e \cap U_t))\setminus L_s \implies q \nottrans{b}\;\land$\\
    \tab$b \in (U_s\cap L_e \cap L_t) \cup (L_s\cap U_e \cap L_t) \cap (L_s \cap L_e \cap U_t) \implies (p \nottrans{b}\lor\; q \nottrans{b}))$\\
    \proofstep{Case distinction on $r \nottrans{b}$}
    \begin{case_distinction}
        \item[$r \nottrans{b}$:]\ \\
        Similarly to the previous case, one of the assumptions in the goal conjunction is now false due to $b\in L_t$. This gives:\\
        $r \nottrans{b}$\\
        \proofstep{$\iff$}\\
        $False \implies q \nottrans{b}\;\land$\\
        $True \implies r \nottrans{b}\;\land$\\
        $True \implies  r\nottrans{b}$\\
        \proofstep{$\iff$}\\
        $False \implies q \nottrans{b}\;\land$\\
        $True \implies r \nottrans{b}\;\land$\\
        $True \implies (q \nottrans{b}\lor r\nottrans{b})$\\
        \proofstep{$\iff$}\\
        $b \in ((U_s\cap L_e) \cup (L_s \cap U_e))\setminus L_t \implies q \nottrans{b}\;\land$\\
        $b \in ((U_s\cap L_t) \cup (L_s \cap U_t))\setminus L_e \implies r \nottrans{b}\;\land$\\
        $b \in (U_s\cap L_e \cap L_t) \cup (L_s \cap U_e \cap L_t) \cup (L_s \cap L_u \cap U_t)  \implies (q \nottrans{b}\lor r\nottrans{b})$\\
        \proofstep{$\implies$}\\
        $p\nottrans{b}\; \lor($\\
        \tab$b \in ((U_s\cap L_e) \cup (L_s \cap U_e))\setminus L_t \implies q \nottrans{b}\;\land$\\
        \tab$b \in ((U_s\cap L_t) \cup (L_s \cap U_t))\setminus L_e \implies r \nottrans{b}\;\land$\\
        \tab$b \in (U_s\cap L_e \cap L_t) \cup (L_s \cap U_e \cap L_t) \cup (L_s \cap L_u \cap U_t)  \implies (q \nottrans{b}\lor r\nottrans{b}))$\\

        \item[$r \trans{b}$:]\ \\
        $b \in ((L_s\cap U_t) \cup (U_s \cap L_t))\setminus L_e \implies p \nottrans{b}\;\land$\\
        $b \in (U_s\cap L_e \cap L_t) \cup (L_s\cap U_e \cap L_t) \cap (L_s \cap L_e \cap U_t) \implies (p \nottrans{b}\lor\; q \nottrans{b})$\\
        \proofstep{One last case distinction on $b\in L_e$}\\
        If $b\notin L_e$, then we have $p\nottrans{b}$ Which gives the proof goal directly.
        If $b\in L_e$ we have:\\
        $p \nottrans{b}\lor\; q \nottrans{b}$\\
        \proofstep{$\iff$}\\
        $p \nottrans{b}\lor\;($\\
        \tab$False \implies q\nottrans{b}\;\land$\\
        \tab$False \implies r\nottrans{b}\;\land$\\
        \tab$True \implies q \nottrans{b})$\\
        \proofstep{$\implies$}\\
        $p \nottrans{b}\lor\;($\\
        \tab$False \implies q\nottrans{b}\;\land$\\
        \tab$False \implies r\nottrans{b}\;\land$\\
        \tab$True \implies q \nottrans{b}\lor\; r \nottrans{b})$\\
        \proofstep{$\iff$}\\
        $p \nottrans{b}\lor\;($\\
        \tab$b \in ((U_s\cap L_e) \cup (L_s \cap U_e))\setminus L_t \implies q\nottrans{b}\;\land$\\
        \tab$b \in ((U_s\cap L_t) \cup (L_s \cap U_t))\setminus L_e \implies r\nottrans{b}\;\land$\\
        \tab$b \in (U_s\cap L_e \cap L_t) \cup (L_s \cap U_e \cap L_t) \cup (L_s \cap L_u \cap U_t) \implies q \nottrans{b}\lor\; r \nottrans{b})$\\

    \end{case_distinction}

\end{case_distinction}

    \end{proof}

\end{toappendix}

\begin{lemmarep}
Parallel composition is \emph{commutative} and \emph{associative} (up to isomorphism $\equiv$), i.e.\ for $s,e,t \in \LTS$, we have:
    $$\begin{array}[t]{l@{~~~~}r@{~~}c@{~~}l}
    commutativity:  & s \parcomp e              & \equiv & e \parcomp s \\
    associativity: & (s \parcomp e) \parcomp t & \equiv & s \parcomp (e \parcomp t) \\
    \end{array}$$

\label{lemma:parcomp_assoc}
\label{lemma:parcomp_comm}
\end{lemmarep}
    
\begin{proof}
    % Proof for lemma:parcomp_assoc and lemma:parcomp_comm.
% checked by: Gijs van Lars. 9-3-23
~\begin{case_distinction}
    \item[$commutativity:$]\ \\
    $s\parcomp e$ is defined $\iff$ $e \parcomp s$ is defined. Follows directly from \cref{def:composable}, which is symmetric.
    We prove $s\parcomp e $ and $e \parcomp s$ are isomorphic (\cref{def:isomorphism}).
    $I_{s \parcomp e} \equiv I_{e \parcomp s}$ and $U_{s \parcomp e} \equiv U_{e \parcomp s}$ follow directly from \cref{def:parcomp}. Take $f(p\parcomp q) = q \parcomp p$ for $p \in Q_s, q \in Q_e$. This gives $f(s\parcomp e) = e\parcomp s$ by definition.

    To prove: $\forall p,p' \in Q_s, q,q' \in Q_e, \ell \in Q_{s\parcomp e}^\delta\cup\{\tau\}: p\parcomp q \xrightarrow{\ell} p'\parcomp q' \iff q\parcomp p \xrightarrow{\ell} p' \parcomp q'$.\\
    The if and only if cases are symmetric, and both follow directly from \cref{def:parcomp}  as\\
    $p\parcomp q \xrightarrow{\ell} p' \parcomp q'$ and $q \parcomp p \xrightarrow{\ell} q'\parcomp p'$ have the same preconditions.

    \item[$associativity:$]\ \\
    We first prove $(s \parcomp e)\parcomp t$ is defined $\iff$ $s \parcomp (e\parcomp t)$ is defined:\\
    $(s \parcomp e)\parcomp t$ is defined.\\
    \proofstep{\cref{def:parcomp,def:composable}: $\parcomp$ and $\composable$}\\
    $U_s \cap U_e = U_{s \parcomp e} \cap U_t = \emptyset$\\
    \proofstep{\cref{def:parcomp}: $U_{s \parcomp e}$}\\
    $U_s \cap U_e = (U_s \cup U_e) \cap U_t = \emptyset$\\
    \proofstep{Set rewriting}\\
    $U_s \cap U_e = U_s \cap U_t = U_e \cap U_t = \emptyset$\\
    \proofstep{Set rewriting}\\
    $U_s \cap (U_e \cup U_t) = U_e \cap U_t = \emptyset$\\
    \proofstep{\cref{def:parcomp}: $U_{e \parcomp t}$}\\
    $U_s \cap U_{e \parcomp t} = U_e \cap U_t = \emptyset$\\
    \proofstep{\cref{def:parcomp,def:composable}: $\parcomp$ and $\composable$}\\
    $s \parcomp (e\parcomp t)$ is defined.\\
    \\
    We have $U_{(s\parcomp e)\parcomp t} = U_{s\parcomp e} \cup U_t = U_s \cup U_e \cup U_t = U_s \cup U_{e\parcomp t} = U_{s \parcomp (e \parcomp t)}$ and\\
    \begin{align*}
     I_{(s\parcomp e)\parcomp t} &=\\
     (I_{s\parcomp e} \setminus U_t) \cup (I_t \setminus U_{s\parcomp e}) &= \\
    ((I_s \setminus U_e \cup I_e \setminus U_s) \setminus U_t) \cup (I_t \setminus (U_s \cup U_e)) &= \\
    I_s \setminus (U_e \cup U_t) \cup I_e \setminus (U_s\cup U_t) \cup I_t \setminus (U_s \cup U_e) &=\\
     I_s \setminus (U_e \cup U_t) \cup ((I_e \setminus U_t \cup I_t \setminus U_e)\setminus U_t)  &=\\
    (I_s \setminus U_{e\parcomp t}) \cup (I_{e\parcomp t} \setminus U_t) &= \\
    I_{s \parcomp (e \parcomp t)} &=
    \end{align*}

    Take $f((p\parcomp q)\parcomp r) = p \parcomp (q\parcomp r)$ for $p \in Q_s, q \in Q_e, r \in Q_t$. This gives $f((s\parcomp e) \parcomp t) = s\parcomp (e \parcomp t)$ by definition. To prove: $\forall p,p' \in Q_s, q,q' \in Q_e, r,r' \in Q_r, \ell \in L_{(s\parcomp e)\parcomp t}^\delta\cup\{\tau\}:$\\
    $(p\parcomp q)\parcomp r \xrightarrow{\ell} (p'\parcomp q') \parcomp r' \iff p \parcomp (q\parcomp r) \xrightarrow{\ell} p' \parcomp (q' \parcomp r')$.\\
    $\implies$ and $\impliedby$ cases are symmetrical, so only the $\implies$ case is given here.
    We do a case distinction on $\ell$. Most cases are identical and therefore omited.
    \begin{case_distinction}
        \item[$\ell \in (L_s \cap  L_e) \setminus L_t$:]\ \\
        \proofstep{\cref{def:parcomp}: $\parcomp$}\\
        $p \parcomp q \xrightarrow{\ell} p' \parcomp q' \land r = r'$\\
        \proofstep{\cref{def:parcomp}: $\parcomp$}\\
        $p \trans{\ell} p' \land q \trans{\ell} q' \land r = r'$\\
        \proofstep{\cref{def:parcomp}: $\parcomp$}\\
        $p \trans{\ell} p' \land q \parcomp r \trans{\ell} q' \parcomp r'$\\
        \proofstep{\cref{def:parcomp}: $\parcomp$}\\
        $p \parcomp (q\parcomp r) \xrightarrow{\ell} p' \parcomp (q' \parcomp r')$
        
        \item[$\ell \in (L_s \cap L_t) \setminus L_e$:]\ \\
        \proofstep{\cref{def:parcomp}: $\parcomp$}\\
        $p \parcomp q \trans{\ell} p' \parcomp q' \land r \trans{\ell} r'$\\
        \proofstep{\cref{def:parcomp}: $\parcomp$}\\
        $p \trans{\ell} p' \land q = q' \land r \trans{\ell} r'$\\
        \proofstep{\cref{def:parcomp}: $\parcomp$}\\
        $p \trans{\ell} p' \land q \parcomp r \trans{\ell} q' \parcomp r'$\\
        \proofstep{\cref{def:parcomp}: $\parcomp$}\\
        $p \parcomp (q\parcomp r) \xrightarrow{\ell} p' \parcomp (q' \parcomp r')$
        \item[$\ell \in (L_e \cap L_t) \setminus L_s$,]
        \item[$\ell \in L_s \setminus (L_e\cup L_t)$,]
        \item[$\ell \in L_e \setminus (L_s\cup L_t)$,]
        \item[$\ell \in L_t \setminus (L_s\cup L_e)$,]
        \item[$\ell \in L_s\cap L_e\cap L_t$ :] All similar to previous cases.
        \item[$\ell = \tau$:]\ \\
        \proofstep{\cref{def:parcomp}: $\parcomp$}\\
        $(p\parcomp q \trans{\tau} p'\parcomp q' \land r = r') \lor (r \trans{\tau} r'\land p\parcomp q = p' \parcomp q')$\\
        \proofstep{\cref{def:parcomp}: $\parcomp$}\\
        $(p \trans{\tau} p' \land q = q' \land r = r')\; \lor$\\
        $(q \trans{\tau} q' \land p=p' \land r = r')\;\lor$\\
        $(r \trans{\tau} r'\land p= p' \land q =  q')$\\
        \proofstep{\cref{def:parcomp}: $\parcomp$}\\
        $(p \trans{\tau} p' \land q \parcomp r = q' \parcomp r')\; \lor$\\
        $(q \parcomp r \trans{\tau} q' \parcomp r' \land p=p')\;\lor$\\
        $(q \parcomp r \trans{\tau} q'\parcomp r'\land p= p')$\\
        \proofstep{\cref{def:parcomp}: $\parcomp$}\\
        $(p \parcomp (q\parcomp r) \trans{\tau} p' \parcomp (q'\parcomp r'))\; \lor$\\
        $(p \parcomp (q\parcomp r) \trans{\tau} p' \parcomp (q'\parcomp r'))\;\lor$\\
        $(p \parcomp (q\parcomp r) \trans{\tau} p' \parcomp (q'\parcomp r'))$\\
        \proofstep{$\iff$}\\
        $p \parcomp (q\parcomp r) \xrightarrow{\ell} p' \parcomp (q' \parcomp r')$
        
        \item[$\ell = \delta$:]\ \\
        $(p\parcomp q)\parcomp r \xrightarrow{\delta} \land\; p = p' \land q = q' \land r = r'$\\
        \proofstep{remember $p = p' \land q= q' \land r = r'$ as subproof, dropped here for brevity}\\
        $(p\parcomp q)\parcomp r \xrightarrow{\delta}$\\
        \proofstep{\cref{def:parcomp}:$\parcomp$}\\
        $\forall a \in (U_{s\parcomp e} \setminus L_t) \cup \{\tau\}: p\parcomp q \nottrans{a} \land$\\
        $\forall a \in (U_t \setminus L_{s\parcomp e}) \cup \{\tau\}: r \nottrans{a} \land$\\
        $\forall a \in (U_{s\parcomp e} \cap L_t) \cup (L_{s\parcomp e} \cap U_t): p \parcomp q \nottrans{a} \lor\; r \nottrans{a}$\\
        \proofstep{\cref{def:parcomp}: $U_{s\parcomp e}$ and $L_{s\parcomp e}$}\\
        $\forall a \in (U_s \setminus L_t) \cup (U_e \setminus L_t) \cup \{\tau\}: p\parcomp q \nottrans{a} \land$\\
        $\forall a \in (U_t \setminus (L_s \cup L_e)) \cup \{\tau\}: r \nottrans{a} \land$\\
        $\forall a \in ((U_s \cup U_e) \cap L_t) \cup ((L_s\cup L_e) \cap U_t): p \parcomp q \nottrans{a} \lor\; r \nottrans{a}$\\
        \proofstep{\cref{def:parcomp}: $\parcomp$}\\
        $\forall a \in  (U_s \setminus (L_e \cup L_t)) \cup \{\tau\}: p \nottrans{a}\;\land$\\
        $\forall a \in  (U_e \setminus (L_s \cup L_t)) \cup \{\tau\}: q \nottrans{a}\;\land$\\
        $\forall a \in  ((U_s \cap L_e) \cup (U_e \cap L_s)) \setminus L_t: p \nottrans{a} \lor\; q \nottrans{a}\;\land$\\
        $\forall a \in (U_t \setminus (L_s \cup L_e)) \cup \{\tau\}: r \nottrans{a} \land$\\
        $\forall a \in ((U_s \cup U_e) \cap L_t) \cup ((L_s\cup L_e) \cap U_t): p \parcomp q \nottrans{a} \lor\; r \nottrans{a}$\\
        \proofstep{\cref{def:parcomp}: $\parcomp$}\\
        $\forall a \in  (U_s \setminus (L_e \cup L_t)) \cup \{\tau\}: p \nottrans{a}\;\land$\\
        $\forall a \in  (U_e \setminus (L_s \cup L_t)) \cup \{\tau\}: q \nottrans{a}\;\land$\\
        $\forall a \in  ((U_s \cap L_e) \cup (U_e \cap L_s)) \setminus L_t: p \nottrans{a} \lor\; q \nottrans{a}\;\land$\\
        $\forall a \in (U_t \setminus (L_s \cup L_e)) \cup \{\tau\}: r \nottrans{a} \land$\\
        $\forall a \in ((U_s \cup U_e) \cap L_t) \cup ((L_s\cup L_e) \cap U_t) : r \nottrans{a} \lor\; ($\\
        \tab$a \in ((L_t\cap U_s) \cup (U_t \cap L_s))\setminus L_e \implies p \nottrans{a}\;\land$\\
        \tab$a \in ((U_t\cap L_e)\cup (L_t \cap U_e))\setminus L_s \implies q \nottrans{a}\;\land$\\
        \tab$a \in (U_s\cap L_e \cap L_t) \cup (L_s\cap U_e \cap L_t) \cap (L_s \cap L_e \cap U_t) \implies (p \nottrans{a}\lor\; q \nottrans{a})$\\
        \proofstep{\cref{lemma:parcomp_assoc_sublemma_1}: subproof with many case distinctions in separate lemma.}\\
        $\forall a \in  (U_s \setminus (L_e \cup L_t)) \cup \{\tau\}: p \nottrans{a}\;\land$\\
        $\forall a \in  (U_e \setminus (L_s \cup L_t)) \cup \{\tau\}: q \nottrans{a}\;\land$\\
        $\forall a \in (U_t \setminus (L_s \cup L_e)) \cup \{\tau\}: r \nottrans{a} \land$\\
        $\forall a \in ((U_s \cup U_e) \cap L_t) \cup ((L_s\cup L_e) \cap U_t) : r \nottrans{a} \lor\; ($\\
        \tab$a \in ((L_t\cap U_s) \cup (U_t \cap L_s))\setminus L_e \implies p \nottrans{a}\;\land$\\
        \tab$a \in ((U_t\cap L_e)\cup (L_t \cap U_e))\setminus L_s \implies q \nottrans{a}\;\land$\\
        \tab$a \in (U_s\cap L_e \cap L_t) \cup (L_s\cap U_e \cap L_t) \cap (L_s \cap L_e \cap U_t) \implies (p \nottrans{a}\lor\; q \nottrans{a})$\\
        $\forall a \in (U_s \cap (L_e \cup L_t))\cup(L_s \cap (U_e \cup U_t)) : p \nottrans{a} \lor\; ($\\
        \tab$a \in ((U_s\cap L_e) \cup (L_s \cap U_e))\setminus L_t \implies q \nottrans{a}\;\land$\\
        \tab$a \in ((U_s\cap L_t) \cup (L_s \cap U_t))\setminus L_e \implies r \nottrans{a}\;\land$\\
        \tab$a\in (U_s\cap L_e \cap L_t) \cup (L_s \cap U_e \cap L_t) \cup (L_s \cap L_u \cap U_t) \implies (q \nottrans{b}\lor\; r \nottrans{b}))$\\
        \proofstep{Shrink domain of fourth assumption}\\
        $\forall a \in  (U_s \setminus (L_e \cup L_t)) \cup \{\tau\}: p \nottrans{a}\;\land$\\
        $\forall a \in  (U_e \setminus (L_s \cup L_t)) \cup \{\tau\}: q \nottrans{a}\;\land$\\
        $\forall a \in (U_t \setminus (L_s \cup L_e)) \cup \{\tau\}: r \nottrans{a} \land$\\
        $\forall a \in ((U_e \cap L_t) \cup (L_e \cap U_t))\setminus L_s : r \nottrans{a} \lor\; ($\\
        \tab$False \implies p \nottrans{a}\;\land$\\
        \tab$True\implies q \nottrans{a}\;\land$\\
        \tab$False \implies (p \nottrans{a}\lor\; q \nottrans{a})$\\
        $\forall a \in (U_s \cap (L_e \cup L_t))\cup(L_s \cap (U_e \cup U_t)) : p \nottrans{a} \lor\; ($\\
        \tab$a \in ((U_s\cap L_e) \cup (L_s \cap U_e))\setminus L_t \implies q \nottrans{a}\;\land$\\
        \tab$a \in ((U_s\cap L_t) \cup (L_s \cap U_t))\setminus L_e \implies r \nottrans{a}\;\land$\\
        \tab$a\in (U_s\cap L_e \cap L_t) \cup (L_s \cap U_e \cap L_t) \cup (L_s \cap L_u \cap U_t) \implies (q \nottrans{b}\lor\; r \nottrans{b}))$\\
        \proofstep{$\iff$}\\
        $\forall a \in (U_s \setminus (L_e \cup L_t)) \cup \{\tau\}: p \nottrans{a} \land$\\
        $\forall a \in  (U_e \setminus (L_s \cup L_t)) \cup \{\tau\}: q \nottrans{a}\;\land$\\
        $\forall a \in  (U_t \setminus (L_s \cup L_e)) \cup \{\tau\}: r \nottrans{a}\;\land$\\
        $\forall a \in  ((U_e \cap L_t) \cup (U_t \cap L_e)) \setminus L_s: q \nottrans{a} \lor\; r \nottrans{a}\;\land$\\
        $\forall a \in (U_s \cap (L_e \cup L_t))\cup(L_s \cap (U_e \cup U_t)) : p \nottrans{a} \lor\; ($\\
        \tab$a \in ((U_s\cap L_e) \cup (L_s \cap U_e))\setminus L_t \implies q \nottrans{a}\;\land$\\
        \tab$a \in ((U_s\cap L_t) \cup (L_s \cap U_t))\setminus L_e \implies r \nottrans{a}\;\land$\\
        \tab$a \in (U_s\cap L_e \cap L_t) \cup (L_s \cap U_e \cap L_t) \cup (L_s \cap L_u \cap U_t) \implies (q \nottrans{a}\lor\; r \nottrans{a})$\\
        \proofstep{\cref{def:parcomp}:$\parcomp$}\\
        $\forall a \in (U_s \setminus (L_e \cup L_t)) \cup \{\tau\}: p \nottrans{a} \land$\\
        $\forall a \in  (U_e \setminus (L_s \cup L_t)) \cup \{\tau\}: q \nottrans{a}\;\land$\\
        $\forall a \in  (U_t \setminus (L_s \cup L_e)) \cup \{\tau\}: r \nottrans{a}\;\land$\\
        $\forall a \in  ((U_e \cap L_t) \cup (U_t \cap L_e)) \setminus L_s: q \nottrans{a} \lor\; r \nottrans{a}\;\land$\\
        $\forall a \in (U_s \cap (L_e \cup L_t))\cup(L_s \cap (U_e \cup U_t)): p \nottrans{a} \lor\; q\parcomp r \nottrans{a}\land$\\
        \proofstep{\cref{def:parcomp}: $\parcomp$}\\
        $\forall a \in (U_s \setminus (L_e \cup L_t)) \cup \{\tau\}: p \nottrans{a} \land$\\
        $\forall a \in ((U_e \cup U_t) \setminus L_s) \cup \{\tau\}: q \parcomp r \nottrans{a} \land$\\
        $\forall a \in (U_s \cap (L_e \cup L_t))\cup(L_s \cap (U_e \cup U_t)): p \nottrans{a} \lor\; q\parcomp r \nottrans{a}\land$\\
        \proofstep{\cref{def:parcomp}: $U_{e\parcomp t}$ and $L_{e\parcomp t}$}\\
        $\forall a \in (U_s \setminus L_{e\parcomp t}) \cup \{\tau\}: p \nottrans{a} \land$\\
        $\forall a \in (U_{e\parcomp t} \setminus L_s) \cup \{\tau\}: q \parcomp r \nottrans{a} \land$\\
        $\forall a \in (U_s \cap L_{e\parcomp t})\cup(L_s \cap U_{e\parcomp t}): p \nottrans{a} \lor\; q\parcomp r \nottrans{a}\land$\\
        \proofstep{\cref{def:parcomp}:$\parcomp$}\\
        $\forall a \in U_{s\parcomp(e\parcomp t)}:p\parcomp (q\parcomp r) \nottrans{a} $\\
        \proofstep{\cref{def:delta}:$\delta$}\\
        $p\parcomp (q\parcomp r) \trans{\delta} p\parcomp (q\parcomp r)$\\
        \proofstep{Apply subproof: $p = p' \land q = q' \land r=r'$}\\
        $p\parcomp (q\parcomp r) \trans{\delta} p'\parcomp (q'\parcomp r')$\\
        
    \end{case_distinction}

    \end{case_distinction}

\end{proof}

\noindent
Our definition for \emph{composable} is weaker than the one in other papers: $I_s \cap I_e = U_s \cap U_e = \emptyset$ 
\cite{daca_CompositionalSpecificationsIoco_2014,vanderbijl_CompositionalTestingIoco_2004,benes_CompleteCompositionOperators_2015}. This is because requiring disjoint input sets leads to a composition operator that is not associative \cite{berendsen_ParallelCompositionPaper_2008}.
A more detailed discussion of the properties of various types of parallel composition can be found in \cite{vogler_LineartimeBranchingtimePerspective_2020}. With our less restrictive definition of \emph{composable}, parallel composition is both associative and commutative as expressed in \cref{lemma:parcomp_assoc}. This is important, as it means that more than two components can also be composed and the order in which components are composed does not matter. The remaining restriction of disjoint output sets does not really restrict the applicability of parallel composition. Output sets can always be made disjoint by renaming one output label and then duplicating the synchronising transitions for the new label.

Another common approach to parallel composition is to replace all synchronised transitions with $\tau$ transitions. This is done under the assumption that communication between components is by default not observable by the outside world and therefore should be hidden. A downside is that this removes information, which makes specification-based analysis less useful. Additionally, a large part of the model-based testing theory assumes convergence, i.e.\ the absence of divergence. This means that there are no infinite paths of just $\tau$-transitions possible in the specification. By automatically hiding the labels of synchronised transitions, divergence is often introduced into the composed specification. For these reasons, we choose not to automatically hide labels during composition. 

The main purpose of a labelled transition system when used for model-based testing is to describe when an implementation is considered correct. This is done through a conformance relation.

Two common conformance relations are $\ioco$ \cite{tretmans_ModelBasedTesting_2008} and the more recent $\uioco$ relation \cite{vanderbijl_CompositionalTestingIoco_2004}.
$\uioco$ differs from $\ioco$ in how it deals with \emph{nondeterministic under-specification}, i.e.\ how non-specified inputs are handled. Among others, $\uioco$ is better suited for reasoning about composition. A detailed comparison of the two relations can be found in \cite{tretmans_GoodbyeIoco_2022}.

\begin{definition}
For $s \in \LTS$, ${\delta} \notin L_s$ is a special output denoting the absence of outputs, called \emph{quiescence}. It is defined as follows (with $p_1, p_2 \in Q_s$):
$$\begin{array}[t]{l@{~~~}c@{~~~}l}
p_1 \xrightarrow{\delta} p_2 &
\defeq &
p_1 = p_2 ~\land~ \forall x \in U_s \cup \{\tau\} :~ p_1 \nottrans{x} \\
\end{array}$$
$L^\delta$, $U^\delta$ is used as shorthand for $L \cup \{\delta\}$, $U \cup \{\delta\}$ respectively.
\label{def:delta}
\end{definition}

\begin{definition}
Let $s \in \LTS$; $p_1 \in Q_s$; $P\subseteq Q_s$ and $\sigma\in {L_s^\delta}^*$.
$$\begin{array}[t]{l@{~~}c@{~~}l}
    p_1 \after \sigma & \defeq &
    \{\ p_2 \in Q_s \ \setbar\ p_1\xRightarrow{\sigma}p_2\ \} \\
    \outset{p_1} & \defeq & 
    \{\ x \in U_s^\delta \ \setbar\ p_1 \xrightarrow{x}\ \} \\
    \outset{P} & \defeq &
    \bigcup\ \{\ \outset{p} \ \setbar\ p\in P\ \} \\
\end{array}$$
\label{def:out}
\label{def:after}
\end{definition}

\begin{definition}
Let $i \in \IOTS[I,U]$; $s\in \LTS[I,U]$:
$$\begin{array}[t]{l@{~~}l@{~~}ll}
    \utraces[s] & \defeq & \{\ \sigma \in {L^\delta}^*\ \setbar\ s \xRightarrow{\sigma} \:\land\ 
    (\ & \nexists p \in Q_s,\:\sigma_1\cdot{a}\cdot\sigma_2 = \sigma : \\
       & & & a \in I ~\land~ s\xRightarrow{\sigma_1}p ~\land~ p\Nottrans{a}\ )\ \} \\
    i \uioco s & \defeq & \multicolumn{2}{l}{\forall\sigma \in \utraces[s] :~ \outset{i \after \sigma} \subseteq \outset{s \after \sigma}} \\
\end{array}$$
\label{def:uioco}
\end{definition}

\section{Motivating Example: A Parking System}
\label{sec:exampleIntro}

We argue that parallel composition does not work nicely with $\uioco$, which we will show with an example in this section. Consider two components that together function as an automatic parking system in a car: a sensor which observes the environment and an actuator that parks the car. An illustration of how these two components communicate with each other and their environment is shown in \cref{fig:car_schematic}. Specifications for the behaviour of these components are shown in solid black in \cref{fig:carParkSensorComponents}. Their behaviour is straightforward: the parking component keeps parking as long as the sensor tells it that it is safe to do so, but stops parking if there is an obstacle, at which point it will stop the car and turn the sensor off. These components are left under-specified on purpose: it does not really matter what the sensor does if it detects an obstacle after it has been turned off, as long as it does not start beeping. This gives an implementer of the actual sensor some freedom, but still specifies the important behaviour.

\begin{figure}
    \centering
    \includegraphics[width=\linewidth]{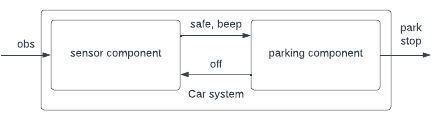}
    \caption{Two component parking system}
    \label{fig:car_schematic}
\end{figure}

Possible implementations that are $\uioco$ correct are also given in \cref{fig:carParkSensorComponents} using the extra dashed blue transitions. On first glance this all seems to make sense, and model-based testing will not find any problems when testing the components. I.E. $\cref{imp:sensor} \uioco \cref{spec:sensor} \land \cref{imp:auto_park} \uioco \cref{spec:auto_park}$. After composing our components using parallel composition, however, which is shown in \cref{fig:carParkSensorComposed}, the composed implementation is not $\uioco$ correct to the composed specification. 

The problem with the implementation in \cref{fig:carParkSensorComposed} is that it contains unspecified output transitions. These can be seen as some of the dashed transitions, which are only present in the implementation and not in the specification. This means that the previously valid implementations are now generating outputs that are not part of the composed specification. Model-based testing will report an error here, while the components are actually behaving as specified. Additionally, hidden within these false positives, there is also an actual error: if the sensor detects an obstacle after already having communicated that there is no obstacle, the parking system will not respond and will just continue parking. This is represented by the $!beep$ transition from B3 to B1, which could for instance happen if a moving obstacle like a person is present. This shows that only looking at the individual components is not enough, as there are real problems that only become visible when looking at combinations of components together.

We argue that the main problem with this example is that the component specifications rely on unspecified behaviour. The sensor specification describes exactly when the sensor is allowed to beep, but the parking specification does not always specify what the result should be. There is no guarantee that the result does not crash the system or violate any requirements. One way this could be resolved is by expanding the specifications to be input complete \cite{vanderbijl_CompositionalTestingIoco_2004}. However, doing so would remove the possibility for under-specification, which is a desirable feature in modelling behaviour. Under-specification keeps models smaller and more readable, and gives more freedom when implementing the specification. Another approach is therefore desired: a specification should specify all the behaviour that is used by other specifications, but leave the possibility of not specifying unused behaviour. This goal will be made more concrete in \cref{sec:definitions}.

\begin{figure}[t!]
\begin{subfigure}[b]{.40\linewidth}
\phantomcaption\label{subfig:carSensor}
\speclabel{sensor}
\implabel{sensor}
\begin{tikzpicture}[LTS]
\node[state,initial] (1) {1};
\node[state, below right= .5\nodedistance and \nodedistance of 1] (2) {2};
\node[state, below=of 1] (3) {3};

\path[->] 
    (1) edge [bend left] node [auto] {$\mathit{?off}$} (2)
        edge [] node [auto] {$?obs$} (3)
        edge [loop above] node [below left=-2mm and 2mm] {$\mathit{!safe}$} (1)
    (2) edge [loop right, imp only] node [below left=2mm and -2mm] {$?obs$\\$\mathit{?off}$} (2)
    (3) edge [bend right] node [auto] {$\mathit{?off}$} (2)
        edge [bend left] node [auto] {$!beep$} (1)
    
        edge [loop left, imp only] node [auto] {$?obs$} (3);

\end{tikzpicture}
\captiontext{\currentspec{} and \currentimp: car sensor component}
\end{subfigure}
\begin{subfigure}[b]{.59\linewidth}
\phantomcaption\label{subfig:carauto_park}
\speclabel{auto_park}
\implabel{auto_park}
\begin{tikzpicture}[LTS]
\node[initial,state] (A) {A};
\node[state, below=of A] (B) {B};
\node[state, below right=.5\nodedistance and \nodedistance of A] (C) {C};
\node[state, above right=.5\nodedistance and \nodedistance of C] (D) {D};
\node[state, below=of D] (E) {E};

\path[->] 
    (A) edge [] node [left] {$\mathit{?safe}$} (B)
        edge [] node [auto] {$?beep$} (C)
    (B) edge [bend right] node [right] {$!park$} (A)
        edge [loop right,imp only] node [auto] {$\mathit{?safe}$\\$?beep$} (B)
    (C) edge [] node [auto] {$!stop$} (D)
        edge [loop below right,imp only] node [below] {$\mathit{?safe}$\\$?beep$} (C)
    (D) edge [] node [auto] {$\mathit{!off}$} (E)
        edge [loop right,imp only] node [auto] {$\mathit{?safe}$\\$?beep$} (D)
    (E) edge [loop right, imp only] node [auto] {$\mathit{?safe}$\\$?beep$} (E);%
\end{tikzpicture}
\captiontext{\currentspec{} and \currentimp{}: automated parking component}%
\end{subfigure}
\caption{Car component specifications ($\tikzarrow$) and implementations ($\tikzarrow[imp only]$)}
\label{fig:carParkSensorComponents}
\end{figure}%
\begin{figure}
\centering
\begin{tikzpicture}[LTS]
\node[initial,state] (A1) {A1};
\node[state, right of =  A1] (B1) {B1};
\node[state, below of= A1] (A3) {A3};
\node[state, right of= A3] (B3) {B3};
\node[state, right of= B3] (C1) {C1};
\node[state, above of= C1] (C3) {C3};
\node[state, right of= C1] (D1) {D1};
\node[state, right of= C3] (D3) {D3};
\node[state, below right = 0.5\nodedistance and \nodedistance of D3] (E2) {E2};

\path[->] 
    (A1)    edge[bend left] node [above] {$\mathit{!safe}$} (B1)
            edge node [left] {$?obs$} (A3)
    (B1)    edge[bend left] node [below] {$!park$} (A1)
            edge[bend left] node [below right] {$?obs$} (B3)
            edge[loop above, imp only] node [above] {$\mathit{!safe}$} (B1) 
    (B3)    edge node [above] {$!park$} (A3)
            edge [bend left,imp only] node [below left=-1pt and 1pt] {$!beep$} (B1)
            edge [loop below, imp only] node [below] {$?obs$} (B3)
    (A3)    edge [bend right=50, looseness=1.3] node [below]
                {$!beep$} (C1)
            edge [loop below,imp only] node[below] {$?obs$} (A3)
    (C1)    edge [bend left] node [above left] {$?obs$} (C3)
            edge node [below] {$!stop$} (D1)
            edge[loop below,imp only] node [below] {$\mathit{!safe}$} (C1)
    (C3)    edge[bend left,imp only] node [below right] {$!beep$}
                (C1)
            edge node [above] {$!stop$} (D3)
            edge [loop above, imp only] node [above] {$?obs$} (C3)
    (D1)    edge [bend left] node [above left] {$?obs$}
                (D3)
            edge node [below right] {$\mathit{!off}$} (E2)
            edge[loop below,imp only] node [below] {$\mathit{!safe}$} (D1)
    (D3)    edge [bend left,imp only] node [right] {$!beep$} (D1)
            edge node [above right] {$\mathit{!off}$} (E2)
            edge [loop above, imp only] node [above] {$?obs$} (D3)
    (E2)    edge [loop right, imp only] node [right] {$?obs$} (E2)
    ;

\end{tikzpicture}
\composedlabel{spec:sensor}{spec:auto_park}{spec:composed_park}
\composedlabel{imp:sensor}{imp:auto_park}{imp:composed_park}
\caption{Car autopark and sensor composed \cref{spec:sensor}$\parcomp$\cref{spec:auto_park} ($\tikzarrow$) and \cref{imp:sensor}$\parcomp$\cref{imp:auto_park} ($\tikzarrow[imp only]$)} 
\label{fig:carParkSensorComposed}
\end{figure}

\section{Mutual Acceptance}
\label{sec:definitions}
In order to reason about specified and unspecified behaviour an explicit notion of what it means for behaviour to be specified is required. For $\uioco$, the allowance of outputs is always explicitly specified. They are either present in the model and therefore allowed, or absent and disallowed. After a specified input, the model again defines what is allowed. Inputs are always implicitly allowed, but if an input is not part of the model all behaviour after that input is allowed. This means that the behaviour after an absent input is unspecified: the model does not tell us what should or should not happen. Therefore, if all outputs given by one component, are inputs present in the model of the other component, there will be no unspecified behaviour. This requirement is formulated in \cref{def:inset,def:projection,def:accepting}: if
after some $\sigma \in \utraces[s\parcomp e]$ some pair of states $s',e'$ is reached, and $s'$ produces a synchronised output, then $e'$ must have this output as an input. Note that this is trivially holds if $e$ is input enabled which generalises earlier results about component-based testing with $\uioco$ \cite{vanderbijl_CompositionalTestingIoco_2004}.

\begin{definition}
For $s\in LTS$; $p \in Q_s$; $P \subseteq Q_s$, the set of enabled inputs is defined as:
\centermath{\begin{array}[t]{l@{~~}l@{~~}l}
    \inset{p} & \defeq & \{ \ \ell \in I_s \setbar p\xRightarrow{\ell}\ \} \\
    \inset{P} & \defeq & \bigcap\,\{\ \inset{q} \setbar q\in P\ \}
\end{array}}
\label{def:inset}
\end{definition}

\begin{definition}
Let $\sigma \in L^{\delta*}$; $\mathcal{L}\subseteq L^\delta$; and $\ell \in L^\delta$.
Projecting a trace to a smaller set of labels is defined as:
\centermath{\begin{array}[t]{l@{~~}l@{~~}l@{~}l}
    \project{\epsilon}{\mathcal{L}} & \defeq & \epsilon & \\
    \project{(\sigma\cdot\ell)}{\mathcal{L}} & \defeq &
    (\project{\sigma}{\mathcal{L}})\cdot\ell & \text{\ if\ \ } \ell \in \mathcal{L} \\
    & & \phantom{(}\project{\sigma}{\mathcal{L}} & \text{\ otherwise}
\end{array}}
\label{def:projection}
\end{definition}

\begin{definition}
Let $s,e \in \LTS$ be $\composable$, then $s$ \textbf{accepts} $e$ iff:
\[\begin{array}[t]{l@{~~~}l@{~~~}l}
    s \accepts e & \defeq &
    \forall \sigma \in \utraces[s\parcomp e],\ s' \in Q_s,\ e' \in Q_e: \\
    & & ~~~~~~~ s \parcomp e \xRightarrow{\sigma} s' \parcomp e' \ \implies\ 
    \outset{e'}\cap I_s \:\subseteq\: \inset{s'} \cap U_e
\end{array}\]
\label{def:accepting}    
\end{definition}

The symmetric version of the $\accepts$ relation is defined in \cref{def:mutually_accepts}. Though it might look like an equivalence relation, it is neither reflexive nor transitive. Reflexivity fails because $\mutuallyaccepts$ is indirectly defined using parallel composition. This means it is only defined on specifications that are composable, and any specification with outputs is not composable with itself. Transitivity is also not true, because each pair of specifications has its own sets of state pairs and shared labels for which the $\accepts$ relation must hold. This means each specification pair must be checked independently of any other specifications.

\begin{definition}
Let $s,e \in \LTS$ be $\composable$, then $s$ \textbf{mutually accepts} $e$:
$$s \mutuallyaccepts e ~~ \defeq ~~ s \accepts e \ \land\ e \accepts s$$
\label{def:mutually_accepts}
\end{definition}

\section{Compositionalility for uioco}
\label{sec:acceptingSystems}

The previous section defined what it means for a specification to not trigger undefined behaviour in another specification using the $\accepts$ relation. This section will prove that this property  allows compositional testing using $\uioco$.

\begin{toappendix}
\Cref{lemma:parcomp_base_properties,lemma:project_from_parcomp,lemma:IOTS_step_both} are based on similar lemmas from \cite{vanderbijl_CompositionalTestingIoco_2004}. The lemmas and proofs are adapted to no longer require input enabled specifications, and use the $\mutuallyaccepts$ relation instead. Additionally, our definition for $\composable$ is also different.
\end{toappendix}

\begin{toappendix}
\begin{lemma}
\label{lemma:parcomp_base_properties}
let $s,e \in \LTS$ be $\composable$, $p_1, p_2\in Q_s, q_1, q_2\in Q_e$

\begin{enumerate}
    \item \label{item:parcomp_base_properties_step_left}
    \[\forall \ell \in L_s \setminus L_e : p_1 \parcomp q_1 \xrightarrow{\ell} p_2 \parcomp q_1 \iff p_1 \xrightarrow{\ell} p_2\]

    \item \label{item:parcomp_base_properties_step_right}
    \[\forall \ell \in L_e \setminus L_s : p_1 \parcomp q_1 \xrightarrow{\ell} p_1 \parcomp q_2 \iff q_1 \xrightarrow{\ell} q_2\]

    \item \label{item:parcomp_base_properties_step_tau}
    \[p_1 \parcomp q_1 \xrightarrow{\tau} p_2 \parcomp q_2 \iff (p_1 \xrightarrow{\tau} p_2 \land q_1 = q_2)\lor (q_1 \xrightarrow{\tau} q_2 \land p_1 = p_2)\]

    \item \label{item:parcomp_base_properties_step_both}
    \begin{align*}
        \forall \ell \in (L_s \cap L_e) \cup \{\delta\} : p_1 \parcomp q_1 \xrightarrow{\ell} p_2 \parcomp q_2\, \land \\
        s \mutuallyaccepts e \,\land \\
        (\exists \sigma \in \utraces[s \parcomp e]: s \parcomp e \xRightarrow{\sigma} p_1 \parcomp q_1) \implies \\ 
        p_1 \xrightarrow{\ell} p_2 \land q_1 \xrightarrow{\ell} q_2
    \end{align*}

    \item \label{item:parcomp_base_properties_both_to_parcomp}
    \[\forall \ell \in (L_s \cap L_e) \cup \{\delta\} : p_1 \xrightarrow{\ell} p_2 \land q_1 \xrightarrow{\ell} q_2  \implies p_1 \parcomp q_1 \xrightarrow{\ell} p_2 \parcomp q_2\]

\end{enumerate}
\end{lemma}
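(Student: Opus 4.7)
The plan is to treat items 1, 2, 3, and 5 as routine unfoldings of the three inference rules in \cref{def:parcomp}, concentrating the effort on item 4, where mutual acceptance does the real work.

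Items 1 and 2 follow because a label in $L_s \setminus L_e$ (resp.\ $L_e \setminus L_s$) can be produced in the composition only by the first (resp.\ second) inference rule, yielding the stated equivalence directly. Item 3 is a case analysis: since $\tau \notin L_s \cup L_e$, only the first two rules can fire on $\tau$, giving the disjunction on which side stepped. For item 5, the subcase $\ell \in L_s \cap L_e$ is an immediate application of the third inference rule; the subcase $\ell = \delta$ (which forces $p_1 = p_2$ and $q_1 = q_2$ by \cref{def:delta}) reduces to showing that $p_1 \parcomp q_1$ has no $\tau$ and no output. I would list the three inference rules and eliminate each possible source of a forbidden transition using $p_1 \xrightarrow{\delta} p_1$, $q_1 \xrightarrow{\delta} q_1$, and composability $U_s \cap U_e = \emptyset$.

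For item 4, the subcase $\ell \in L_s \cap L_e$ again falls out of the third inference rule since the first two cannot produce a shared label. The interesting subcase is $\ell = \delta$: by \cref{def:delta} the composition has no outputs and no $\tau$, and we must deduce the same for each component. A $\tau$-step of $p_1$ or of $q_1$ would lift to the composition via the first or second rule, contradicting quiescence of $p_1 \parcomp q_1$, so neither component has a $\tau$-step. An output $p_1 \xrightarrow{a}$ with $a \in U_s \setminus L_e$ would similarly lift; this leaves the awkward case $a \in U_s \cap L_e$, which by composability forces $a \in I_e$. Here the hypothesis that $p_1 \parcomp q_1$ is reached by some $\sigma \in \utraces[s \parcomp e]$, combined with $e \accepts s$, yields $a \in \inset{q_1}$, i.e.\ $q_1 \xRightarrow{a}$; together with the already-established absence of $\tau$ from $q_1$ this gives $q_1 \xrightarrow{a}$, and then the third inference rule manufactures a forbidden $a$-transition of $p_1 \parcomp q_1$. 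The symmetric argument using $s \accepts e$ disposes of outputs from $q_1$.

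The main obstacle is exactly this last step --- bridging a would-be output of one side that is shared as an input on the other side. This is where the $\accepts$ relation takes over the role played by input-enabledness in the analogous lemma of \cite{vanderbijl_CompositionalTestingIoco_2004}; everything else is bookkeeping over \cref{def:parcomp} and \cref{def:delta}.
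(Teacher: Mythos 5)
Your proposal is correct and follows essentially the same route as the paper's proof: items 1--3 and 5 by direct inspection of the inference rules of \cref{def:parcomp}, and for item 4 with $\ell=\delta$ the use of $\mutuallyaccepts$ on the shared output labels, upgrading $q_1 \xRightarrow{a}$ to $q_1 \xrightarrow{a}$ via the already-established absence of $\tau$. The only (cosmetic) difference is that you phrase the key step as a contradiction with quiescence of $p_1 \parcomp q_1$, whereas the paper derives $p_1 \nottrans{a}$ directly by a contrapositive chain.
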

\begin{proof}
%proof for lemma:parcomp_base_properties
% Checked by: gijs en lars, 17-02-23
% Changed again on 28-02-23
~\begin{enumerate}

    \item Take $a\in L_s \setminus L_e$
    \begin{case_distinction}
        \item[$\Longrightarrow$:]\ \\
        $p_1 \parcomp q_1 \xrightarrow{a} p_2 \parcomp q_2$\\
        \proofstep{Only possible through first case of definition $\parcomp_T$, \cref{def:parcomp}}\\
        $p_1 \xrightarrow{a} p_2$
        \item[$\Longleftarrow$:]\ \\
        $p_1 \xrightarrow{a} p_2$\\
        \proofstep{First case of definition $\parcomp_T$, \cref{def:parcomp}}\\
        $\forall q \in Q_e: p_1 \parcomp q \xrightarrow{a} p_2 \parcomp q$
    \end{case_distinction}
    
    \item Analogous to 1.
    
    \item Analogous to 1.
    
    \item The case $a \in L_s \cap L_e$ is analogous to 1. The other case of $a = \delta$ is given below.\\ 
    Take $\sigma \in \utraces[s\parcomp e]$\\
    
    $p_1 \parcomp q_1 \xrightarrow{\delta} p_2 \parcomp q_2 \land s \mutuallyaccepts e \land s \parcomp e \xRightarrow{\sigma} p_1 \parcomp q_1$\\
    \proofstep{\Cref{def:delta}: $\delta$}\\
    $p_1 \parcomp q_1 \xrightarrow{\delta} p_2 \parcomp q_2 \land s \mutuallyaccepts e \land s \parcomp e \xRightarrow{\sigma} p_1 \parcomp q_1 \land p_1 = p_2 \land q_1 = q_2$\\
    \proofstep{\Cref{def:delta}: $\delta$ + remember $p_1 = p_2 \land q_1 = q_2$ as subproof, dropped here for brevity }\\
    $\forall a \in U_s \cup U_e \cup \{\tau\}: p_1 \parcomp q_1 \nottrans{a} \land s \mutuallyaccepts e \land s \parcomp e \xRightarrow{\sigma} p_1 \parcomp q_1$\\
    \proofstep{\Cref{def:parcomp}: $\parcomp$}\\
    $\forall a \in (U_s \setminus L_e) \cup \{\tau\}: p_1 \nottrans{a} \land$\\
    $\forall a \in (U_e \setminus L_s) \cup \{\tau\}: q_1 \nottrans{a} \land$\\
    $\forall a \in (U_s \cap L_e) \cup (U_e \cap L_s): p_1 \nottrans{a} \lor\; q_1 \nottrans{a}\land$\\
    $s \mutuallyaccepts e\; \land$\\
    $s \parcomp e \xRightarrow{\sigma} p_1 \parcomp q_1$\\
    \proofstep{Apply \cref{def:mutually_accepts,def:accepting}: mutually accepts and accepts}\\
    $\forall a \in (U_s \setminus L_e) \cup \{\tau\}: p_1 \nottrans{a} \land$\\
    $\forall a \in (U_e \setminus L_s) \cup \{\tau\}: q_1 \nottrans{a} \land$\\
    $\forall a \in (U_s \cap L_e) \cup (U_e \cap L_s): p_1 \nottrans{a} \lor\; q_1 \nottrans{a}\land$\\
    $(\forall \sigma' \in \utraces[s \parcomp e], p \in Q_s, q \in Q_e: s \parcomp e \xRightarrow{\sigma'} p \parcomp q \implies$\\
    $out(p) \cap I_e \subseteq in(q) \cap U_s\; \land$\\ 
    $out(q) \cap I_s \subseteq in(p) \cap U_e )\; \land $\\
    $s \parcomp e \xRightarrow{\sigma} p_1 \parcomp q_1$\\
    \proofstep{$\forall$ elimination + $\implies$ elimination}\\
     $\forall a \in (U_s \setminus L_e) \cup \{\tau\}: p_1 \nottrans{a} \land$\\
    $\forall a \in (U_e \setminus L_s) \cup \{\tau\}: q_1 \nottrans{a} \land$\\
    $\forall a \in(U_s \cap L_e) \cup (U_e \cap L_s): p_1 \nottrans{a} \lor\; q_1 \nottrans{a} \land$\\
    $ out(p_1) \cap I_e \subseteq in(q_1) \cap U_s \;\land$\\ 
    $ out(q_1) \cap I_s \subseteq in(p_1) \cap U_e$ \\
    \proofstep{\Cref{def:out,def:inset}: $out$ and $in$}\\
    $\forall a \in (U_s \setminus L_e) \cup \{\tau\}: p_1 \nottrans{a} \land$\\
    $\forall a \in (U_e \setminus L_s) \cup \{\tau\}: q_1 \nottrans{a} \land$\\
    $\forall a \in(U_s \cap L_e) \cup (U_e \cap L_s): p_1 \nottrans{a} \lor\; q_1 \nottrans{a}\land$\\
    $\forall a \in U_s \cap I_e: p_1 \xrightarrow{a} \implies  q_1 \xRightarrow{a} \land$\\ 
    $\forall a \in U_e \cap I_s: q_1 \xrightarrow{a} \implies  p_1 \xRightarrow{a} \land$\\
    \proofstep{$p_1 \nottrans{\tau} \land\; q_1\nottrans{\tau}$}\\
    $\forall a \in (U_s \setminus L_e) \cup \{\tau\}: p_1 \nottrans{a} \land$\\
    $\forall a \in (U_e \setminus L_s) \cup \{\tau\}: q_1 \nottrans{a} \land$\\
    $\forall a \in (U_s \cap L_e) \cup (U_e \cap L_s): p_1 \nottrans{a} \lor\; q_1 \nottrans{a}\land$\\
    $\forall a \in U_s \cap I_e : p_1 \xrightarrow{a} \implies q_1 \xrightarrow{a} \land$\\
    $\forall a \in U_e \cap I_s : q_1 \xrightarrow{a} \implies p_1 \xrightarrow{a}$\\
    \proofstep{$(A \implies B) \iff (\neg B \implies \neg A)$}\\
    $\forall a \in (U_s \setminus L_e) \cup \{\tau\}: p_1 \nottrans{a} \land$\\
    $\forall a \in (U_e \setminus L_s) \cup \{\tau\}: q_1 \nottrans{a} \land$\\
    $\forall a \in(U_s \cap L_e) \cup (U_e \cap L_s): p_1 \nottrans{a} \lor\; q_1 \nottrans{a}\land$\\
    $\forall a \in U_s \cap I_e : q_1 \nottrans{a} \implies p_1 \nottrans{a}\land$\\
    $\forall a \in U_e \cap I_s : p_1 \nottrans{a} \implies q_1 \nottrans{a}$\\
    \proofstep{\cref{def:composable}: $U_s \cap U_e = \emptyset$}\\
    $\forall a \in (U_s \setminus L_e) \cup \{\tau\}: p_1 \nottrans{a} \land$\\
    $\forall a \in (U_e \setminus L_s) \cup \{\tau\}: q_1 \nottrans{a} \land$\\
    $\forall a \in (U_s \cap L_e) \cup (U_e \cap L_s): p_1 \nottrans{a} \lor\; q_1 \nottrans{a}\land$\\
    $\forall a \in U_s \cap L_e : q_1 \nottrans{a} \implies p_1 \nottrans{a}\land$\\
    $\forall a \in U_e \cap L_s : p_1 \nottrans{a} \implies q_1 \nottrans{a}$\\
    \proofstep{$(A \lor B) \land (A \implies B) \implies B$}\\
    $\forall a \in (U_s \setminus L_e) \cup \{\tau\}: p_1 \nottrans{a} \land$\\
    $\forall a \in (U_e \setminus L_s) \cup \{\tau\}: q_1 \nottrans{a} \land$\\
    $\forall a \in U_s \cap L_e :  p_1 \nottrans{a}\land$\\
    $\forall a \in U_e \cap L_s :  q_1 \nottrans{a}$\\
    \proofstep{$(X \setminus Y) \cup (X \cap Y) = X$}\\
    $\forall a \in U_s \cup \{\tau\} : p_1 \nottrans{a} \land$\\ $\forall a \in U_e \cup \{\tau\} : q_1 \nottrans{a}$\\
    \proofstep{\Cref{def:delta}: $\delta$}\\
    $p_1 \xrightarrow{\delta} p_1 \land q_1 \xrightarrow{\delta} q_1$\\
    \proofstep{Apply subproof: $p_1 = p_2 \land q_1 = q_2$}\\
    $p_1 \xrightarrow{\delta} p_2 \land q_1 \xrightarrow{\delta} q_2$

    \item The case $a \in L_s \cap L_e$ is analogous to 1. The other case of $a = \delta$ is given below.\\
    $p_1 \xrightarrow{\delta} p_2 \land q_1 \xrightarrow{\delta} q_2$\\
    \proofstep{\Cref{def:delta}: $\delta$}\\
    $\forall \ell \in U_s \cup \{\tau\} : p_1 \nottrans{\ell} \land\; \forall \ell \in U_e \cup \{\tau\} : q_1 \nottrans{\ell} \land \; p_1 = p_2 \land q_1 = q_2$\\
    \proofstep{\Cref{def:parcomp}: $\parcomp$}\\
    $\forall \ell \in U_s \cup U_e \cup \{\tau\} : p_1 \parcomp q_1 \nottrans{\ell} \land\; p_1 = p_2 \land q_1 = q_2$\\
    \proofstep{\Cref{def:delta}: $\delta$}\\
    $p_1 \parcomp q_1 \xrightarrow{\delta} p_1 \parcomp q_1 \land p_1 = p_2 \land q_1 = q_2$\\
    \proofstep{Rewrite using equalities}\\
    $p_1 \parcomp q_1 \xrightarrow{\delta} p_2 \parcomp q_2$\\
\end{enumerate}
\end{proof}
\end{toappendix}

\begin{toappendix}
\begin{lemma}
\label{lemma:IOTS_step_both}
Let $s,e\in \IOTS$, $\ell \in (L_s \cap L_e) \cup \{\delta\}$, $p_1,p_2 \in Q_s$, $q_1,q_2 \in Q_e$
        \[\forall  : p_1 \parcomp q_1 \xrightarrow{\ell} p_2 \parcomp q_2\ \iff p_1 \xrightarrow{\ell} p_2 \land q_1 \xrightarrow{\ell} q_2\]
\end{lemma}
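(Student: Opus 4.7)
The plan is to handle the two cases $\ell \in L_s \cap L_e$ and $\ell = \delta$ separately, and to observe that the statement is essentially the IOTS specialisation of \cref{lemma:parcomp_base_properties}, where input-enabledness plays the role that mutual acceptance plays there.

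For $\ell \in L_s \cap L_e$, both directions are already immediate from \cref{def:parcomp}. The synchronisation rule is the only inference rule that matches such a label (the other two rules require $\ell \notin L_e$ or $\ell \notin L_s$), so the transition $p_1 \parcomp q_1 \xrightarrow{\ell} p_2 \parcomp q_2$ exists exactly when $p_1 \xrightarrow{\ell} p_2$ and $q_1 \xrightarrow{\ell} q_2$. This is item~4 of \cref{lemma:parcomp_base_properties} restricted to shared labels, where the mutual acceptance hypothesis is not needed.

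For $\ell = \delta$, the $\impliedby$ direction is item~5 of \cref{lemma:parcomp_base_properties}. For the forward direction, unfolding \cref{def:delta} gives $p_1 \parcomp q_1 = p_2 \parcomp q_2$ together with $p_1 \parcomp q_1 \nottrans{x}$ for every $x \in U_s \cup U_e \cup \{\tau\}$. From the three parcomp inference rules this splits into: $p_1 \nottrans{\tau}$, $q_1 \nottrans{\tau}$, $p_1 \nottrans{x}$ for $x \in U_s \setminus L_e$, $q_1 \nottrans{x}$ for $x \in U_e \setminus L_s$, and $p_1 \nottrans{x} \lor q_1 \nottrans{x}$ for $x \in (U_s \cap L_e) \cup (U_e \cap L_s)$. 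It remains to upgrade the disjunctions to conjunctions. Take $x \in U_s \cap L_e$; since $U_s \cap U_e = \emptyset$ by \cref{def:composable}, we must have $x \in I_e$. Because $e \in \IOTS$, \cref{def:iots} gives $q_1 \xRightarrow{x}$, and combined with $q_1 \nottrans{\tau}$ the weak transition collapses to $q_1 \xrightarrow{x}$. If we had $p_1 \xrightarrow{x}$ as well, the synchronisation rule would yield $p_1 \parcomp q_1 \xrightarrow{x}$, contradicting quiescence; hence $p_1 \nottrans{x}$. The symmetric argument handles $x \in U_e \cap L_s$, and putting everything together with $p_1 = p_2$ and $q_1 = q_2$ gives $p_1 \xrightarrow{\delta} p_2$ and $q_1 \xrightarrow{\delta} q_2$.

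The only subtle step is the last one: converting the IOTS condition $q_1 \xRightarrow{x}$ into the strong step $q_1 \xrightarrow{x}$. This is where the proof diverges from \cref{lemma:parcomp_base_properties}, and it relies crucially on having already extracted $q_1 \nottrans{\tau}$ from quiescence of the composition, so that no internal moves can precede the input. Everything else is routine case splitting on the inference rules of \cref{def:parcomp} and bookkeeping on label sets via $\composable$.
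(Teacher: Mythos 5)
Your proof is correct and follows essentially the same route as the paper's: the shared-label case falls out of the synchronisation rule of \cref{def:parcomp}, the $\delta$ case unfolds quiescence of the composition into per-component conditions and resolves the disjunction on synchronised outputs by combining input-enabledness with the absence of $\tau$-steps (the paper phrases this as disjunctive syllogism where you phrase it as a contradiction with quiescence, and it is slightly terser about collapsing $q_1 \xRightarrow{x}$ to $q_1 \xrightarrow{x}$, a step you rightly make explicit), and the converse direction is delegated to \cref{lemma:parcomp_base_properties} exactly as in the paper.
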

\begin{proof}
    %proof for lemma:project_from_parcomp_IOTS
% Checked by: no-one: TODO
\
\begin{case_distinction}
        \item[($\implies$):] Proof by case distinction on $\ell$
        \begin{enumerate}[align=left]
            \item[$\ell \in L_s \cap L_e$:]\ \\
            $p_1 \parcomp q_1 \xrightarrow{\ell} p_2 \parcomp q_2$\\
            \proofstep{\Cref{def:parcomp}: $\parcomp$}\\
            $p_1 \xrightarrow{\ell} p_2 \land q_1 \xrightarrow{\ell} q_2$
            \item[$\ell = \delta$:]\ \\
            $p_1 \parcomp q_1 \xrightarrow{\delta} p_2 \parcomp q_2$\\
            \proofstep{\Cref{def:delta}: $\delta$}\\
            $\forall a \in U_s \cup U_e \cup \{\tau\}: p_1 \parcomp q_1 \nottrans{a} \land p_1 = p_2 \land q_1 = q_2$\\
            \proofstep{\Cref{def:parcomp}: $\parcomp$ + remember $p_1 = p_2 \land q_1 = q_2$ as subproof, dropped here for brevity}\\
            $\forall a \in (U_s \setminus L_e) \cup \{\tau\}: p_1 \nottrans{a} \land$\\
            $\forall a \in (U_e \setminus L_s) \cup \{\tau\}: q_1 \nottrans{a} \land$\\
            $\forall a \in L_s \cap L_e: p_1 \nottrans{a} \lor\; q_1 \nottrans{a}\land$\\
            \proofstep{\Cref{def:iots}: $IOTS$}\\
            $\forall a \in (U_s \setminus L_e) \cup \{\tau\}: p_1 \nottrans{a} \land$\\
            $\forall a \in (U_e \setminus L_s) \cup \{\tau\}: q_1 \nottrans{a} \land$\\
            $\forall a \in L_s \cap L_e: p_1 \nottrans{a} \lor\; q_1 \nottrans{a}\land$\\
            $\forall a \in I_s: p_1 \xrightarrow{a} \land\; \forall a \in I_e: q_1 \xrightarrow{a}\land$\\
            \proofstep{$I_s \cap L_e \subseteq I_s \land I_e \cap L_s \subseteq I_e$}\\
            $\forall a \in (U_s \setminus L_e) \cup \{\tau\}: p_1 \nottrans{a} \land$\\
            $\forall a \in (U_e \setminus L_s) \cup \{\tau\}: q_1 \nottrans{a} \land$\\
            $\forall a \in L_s \cap L_e: p_1 \nottrans{a} \lor\; q_1 \nottrans{a}\land$\\
            $\forall a \in I_s \cap L_e: p_1 \xrightarrow{a}\; \land$\\
            $\forall a \in I_e \cap L_s: q_1 \xrightarrow{a}\land$\\
            \proofstep{$I_s \cap L_e \subseteq L_s \cup L_e\land I_e \cap L_s \subseteq L_s \cup L_e$}\\
            $\forall a \in (U_s \setminus L_e) \cup \{\tau\}: p_1 \nottrans{a} \land$\\
            $\forall a \in (U_e \setminus L_s) \cup \{\tau\}: q_1 \nottrans{a} \land$\\
            $\forall a \in I_s \cap L_e: p_1 \nottrans{a} \lor\; q_1 \nottrans{a}\land$\\
            $\forall a \in I_e \cap L_s: p_1 \nottrans{a} \lor\; q_1 \nottrans{a}\land$\\
            $\forall a \in I_s \cap L_e: p_1 \xrightarrow{a}\; \land$\\
            $\forall a \in I_e \cap L_s: q_1 \xrightarrow{a}\land$\\
            \proofstep{$(\neg A \lor \neg B) \land A \implies \neg B $}\\
            $\forall a \in (U_s \setminus L_e) \cup \{\tau\}: p_1 \nottrans{a} \land$\\
            $\forall a \in (U_e \setminus L_s) \cup \{\tau\}: q_1 \nottrans{a} \land$\\
            $\forall a \in I_e \cap L_s :  p_1 \nottrans{a}\land$\\
            $\forall a \in I_s \cap L_e :  q_1 \nottrans{a}$\\
            \proofstep{$I_e \cap U_s \subseteq I_e \cup L_s\land I_s \cap U_e \subseteq I_s \cup L_e$}\\
            $\forall a \in (U_s \setminus L_e) \cup \{\tau\}: p_1 \nottrans{a} \land$\\
            $\forall a \in (U_e \setminus L_s) \cup \{\tau\}: q_1 \nottrans{a} \land$\\
            $\forall a \in  I_e \cap U_s :  p_1 \nottrans{a}\land$\\
            $\forall a \in  I_s \cap U_e  :  q_1 \nottrans{a}$\\
            \proofstep{\cref{def:composable}: $U_s \cap U_e = \emptyset$}\\
            $\forall a \in (U_s \setminus L_e) \cup \{\tau\}: p_1 \nottrans{a} \land$\\
            $\forall a \in (U_e \setminus L_s) \cup \{\tau\}: q_1 \nottrans{a} \land$\\
            $\forall a \in  L_e \cap U_s :  p_1 \nottrans{a}\land$\\
            $\forall a \in  L_s \cap U_e  :  q_1 \nottrans{a}$\\
            \proofstep{$(X \setminus Y) \cup (X \cap Y) = X$}\\
            $\forall a \in U_s \cup \{\tau\} : p_1 \nottrans{a} \land$\\ $\forall a \in U_e \cup \{\tau\} : q_1 \nottrans{a}$\\
            \proofstep{Apply subproof: $p_1 = p_2 \land q_1 = q_2$}\\
            $\forall a \in U_s \cup \{\tau\} : p_1 \nottrans{a} \land$\\ $\forall a \in U_e \cup \{\tau\} : q_1 \nottrans{a}$\\
            $p_1 = p_2 \land q_1 = q_2$\\
            \proofstep{\Cref{def:delta}: $\delta$}\\
            $p_1 \xrightarrow{\delta} p_2 \land q_1 \xrightarrow{\delta} q_2$
        \end{enumerate}
        \item[($\impliedby$):] Covered by \Itemref{lemma:parcomp_base_properties}{both_to_parcomp}
\end{case_distinction}
\end{proof}

\Cref{lemma:parcomp_base_properties} states the relationships between being able to do a single transition in a composed system, and being able to do that same transition in the component systems. \Itemref{lemma:parcomp_base_properties}{both_to_parcomp} does not hold in the other direction in general for labelled transition systems, but it does work both ways for input enabled transition systems which is shown in \cref{lemma:IOTS_step_both}. The other direction is also proven for regular labeled transition systems in \itemref{lemma:parcomp_base_properties}{step_both}, but requires extra preconditions.
\end{toappendix}

\begin{toappendix}
\begin{proposition}
Let $s\in \LTS, \sigma,\sigma_1, \sigma_2 \in (L_s^\delta)^*$, $\ell \in L_s^\delta$, $p_1,p_2 \in Q_s$
    \begin{enumerate}
        \item \label{item:general_properties_prefix_closed}
        $\utraces$ is prefix closed:
        \[\sigma\cdot \ell \in \utraces[S] \implies \sigma \in \utraces[S]\]
        \item \label{item:general_properties_trans_transitive}
        The transition relation $\xRightarrow{}$ is transitive:
        \[(\exists p_3 \in Q_s: p_1 \xRightarrow{\sigma_1} p_3 \land p_3 \xRightarrow{\sigma_2} p_2) \iff p_1 \xRightarrow{\sigma_1\cdot\sigma_2} p_2\]
    \end{enumerate}
    \label{prop:general_properties}
\end{proposition}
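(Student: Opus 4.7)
\medskip

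\noindent\textbf{Proof plan.} Both claims follow by unfolding the definitions in \cref{def:arrowdefs,def:uioco}; no composition machinery is needed. I will handle part~\ref{item:general_properties_prefix_closed} directly and part~\ref{item:general_properties_trans_transitive} by induction on $\sigma_2$.

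\medskip

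\noindent\emph{Part 1 (prefix closure).} Suppose $\sigma\cdot\ell\in\utraces[s]$. Unfolding \cref{def:uioco}, this gives two facts: (i)~$s\xRightarrow{\sigma\cdot\ell}$, and (ii)~no decomposition $\sigma\cdot\ell=\sigma_1\cdot a\cdot\sigma_2$ with $a\in I_s$ satisfies $s\xRightarrow{\sigma_1}p$ and $p\Nottrans{a}$. From (i) the recursive clause $p_1\xRightarrow{\sigma\cdot\ell}p_2\defeq \exists p_3,p_4:\ p_1\xRightarrow{\sigma}p_3\land p_3\xrightarrow{\ell}p_4\land p_4\xRightarrow{\epsilon}p_2$ immediately yields $s\xRightarrow{\sigma}$, so the first conjunct of the $\utraces$ definition holds for $\sigma$. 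For the second, any bad decomposition $\sigma=\sigma_1\cdot a\cdot\sigma_2'$ would induce the bad decomposition $\sigma\cdot\ell=\sigma_1\cdot a\cdot(\sigma_2'\cdot\ell)$, contradicting~(ii). Hence $\sigma\in\utraces[s]$.

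\medskip

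\noindent\emph{Part 2 (transitivity).} I will prove the equivalence by induction on $\sigma_2$. The base case $\sigma_2=\epsilon$ requires a small auxiliary step: I need that $p_1\xRightarrow{\sigma_1}p_3\land p_3\xRightarrow{\epsilon}p_2 \iff p_1\xRightarrow{\sigma_1}p_2$. The $(\Longleftarrow)$ direction is immediate by taking $p_3=p_2$ and using the empty $\tau$-sequence. The $(\Longrightarrow)$ direction is obtained by a side-induction on $\sigma_1$: for $\sigma_1=\epsilon$, concatenate two $\tau^\ast$-paths; for $\sigma_1=\sigma_1'\cdot\ell'$, unfold using the third and seventh clauses of \cref{def:arrowdefs}, replacing the trailing $\xRightarrow{\epsilon}$ segment by its concatenation with the supplied $p_3\xRightarrow{\epsilon}p_2$.

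In the inductive step $\sigma_2=\sigma_2'\cdot\ell$, I unfold $\xRightarrow{\sigma_2'\cdot\ell}$ on both sides of the equivalence. On the left-hand side this produces intermediate states $p_4,p_5$ with $p_3\xRightarrow{\sigma_2'}p_4\xrightarrow{\ell}p_5\xRightarrow{\epsilon}p_2$; combining $p_1\xRightarrow{\sigma_1}p_3\land p_3\xRightarrow{\sigma_2'}p_4$ via the induction hypothesis gives $p_1\xRightarrow{\sigma_1\cdot\sigma_2'}p_4$, and then the defining clause of $\xRightarrow{\sigma_1\cdot\sigma_2'\cdot\ell}$ delivers the right-hand side. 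Conversely, unfolding the right-hand side gives $p_1\xRightarrow{\sigma_1\cdot\sigma_2'}p_6\xrightarrow{\ell}p_7\xRightarrow{\epsilon}p_2$; applying the induction hypothesis splits this as $p_1\xRightarrow{\sigma_1}p_3\land p_3\xRightarrow{\sigma_2'}p_6$ for some $p_3$, and then the defining clause of $\xRightarrow{\sigma_2'\cdot\ell}$ recombines everything after $p_3$ into $p_3\xRightarrow{\sigma_2}p_2$.

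\medskip

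\noindent\emph{Main obstacle.} Nothing deep is happening: the only mildly delicate point is absorbing a trailing $\xRightarrow{\epsilon}$ into the preceding arrow in the base case, which is why that case needs its own induction on $\sigma_1$. Once that is dispatched, the inductive step for part~\ref{item:general_properties_trans_transitive} is a mechanical unfolding of \cref{def:arrowdefs}, and part~\ref{item:general_properties_prefix_closed} is a one-line rewriting.
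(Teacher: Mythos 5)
Your proof is correct, and it matches the paper's (unstated) intent exactly: the paper offers no written proof of \cref{prop:general_properties}, remarking only that both items ``follow directly from the definitions,'' which is precisely the routine unfolding of \cref{def:arrowdefs,def:uioco} that you carry out. The one mildly non-trivial point --- absorbing the trailing $\xRightarrow{\epsilon}$ in the base case of part~\ref{item:general_properties_trans_transitive} --- is handled correctly by your side argument on $\sigma_1$.
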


\Cref{prop:general_properties} shows some general properties that are used in the remainder of this paper. All of them follow directly from the definitions.
\end{toappendix}

\begin{toappendix}  
\begin{lemmarep}
let $s,e$ be $\composable$ $LTS$, $p,p'\in Q_s$, $q,q' \in Q_e, \sigma \in  {L_{s\parcomp e}^{\delta}}^{*}$
\[p \xRightarrow{\project{\sigma}{L_s^\delta}} p' \land q \xRightarrow{\project{\sigma}{L_e^\delta}} q' \implies p \parcomp q \xRightarrow{\sigma} p' \parcomp q'\]
\label{lemma:project_from_parcomp_light}
\end{lemmarep}
\begin{proof}
% Proof for lemma:project_from_parcomp_light
% Checked by: no-one. TODO
Proof by induction on $\sigma$.
    \begin{case_distinction}
        \item[Base case:]\ \\
        $\sigma = \epsilon$. Let $\sigma_\tau,\sigma_\tau', \sigma_\tau'' \in \tau^*$\\
        $p \xRightarrow{\project{\epsilon}{L_s^\delta}} p' \land q \xRightarrow{\project{\epsilon}{L_e^\delta}}q'$\\
        \proofstep{\Cref{def:projection}: $\projectop$}\\
        $p \xRightarrow{\epsilon} p' \land q \xRightarrow{\epsilon}q'$\\
        \proofstep{\Cref{def:arrowdefs}: $\xRightarrow{\epsilon}$}\\
        $p \xrightarrow{\sigma_\tau'} p' \land q \xrightarrow{\sigma_\tau''}q'$\\
        \proofstep{\Itemref{lemma:parcomp_base_properties}{step_tau}}\\
        $p \parcomp q \xrightarrow{\sigma_\tau} p' \parcomp q'$\\
        \proofstep{\Cref{def:arrowdefs}: $\xRightarrow{\epsilon}$}\\
        $p \parcomp q \xRightarrow{\epsilon} p' \parcomp q'$\\
        
        \item[Induction step:]  Assume the proposition holds for $\sigma'\in  {L_{s\parcomp e}^{\delta}}^{*}$.\\
        To proof: the proposition holds for $\sigma$, where $\sigma =\sigma' \cdot a$ with $a \in L_{s \parcomp e}^\delta$. This is divided into three cases based on $a$:
        \begin{case_distinction}
            \item[$a \in L_s \setminus L_e$:]\ \\
            $p \xRightarrow{\project{\sigma'\cdot a}{L_s^\delta}} p' \land q \xRightarrow{\project{\sigma'\cdot a}{L_e^\delta}} q' $\\
            \proofstep{\Cref{def:projection}: $\projectop$}\\
            $p \xRightarrow{\project{\sigma'}{L_s^\delta}\cdot a} p' \land q \xRightarrow{\project{\sigma'}{L_e^\delta}} q' $\\
            \proofstep{\Cref{def:arrowdefs}: $\xRightarrow{\project{\sigma'}{L_s^\delta}\cdot a}$}\\
            $\exists p_1, p_2 \in Q_s: p \xRightarrow{\project{\sigma'}{L_s^\delta}} p_1 \land p_1 \xrightarrow{a} p_2 \land p_2 \xRightarrow{\epsilon} p' \land q \xRightarrow{\project{\sigma'}{L_e^\delta}} q' $\\
            \proofstep{\Cref{def:arrowdefs}: $\xRightarrow{\epsilon}$}\\
            $\exists p_1, p_2 \in Q_s: p \xRightarrow{\project{\sigma'}{L_s^\delta}} p_1 \land p_1 \xrightarrow{a} p_2 \land p_2 \xRightarrow{\epsilon} p' \land q' \xRightarrow{\epsilon} q' \land q \xRightarrow{\project{\sigma'}{L_e^\delta}} q' $\\
            \proofstep{Apply basic step}\\
            $\exists p_1, p_2 \in Q_s: p \xRightarrow{\project{\sigma'}{L_s^\delta}} p_1 \land p_1 \xrightarrow{a} p_2 \land p_2 \parcomp q' \xRightarrow{\epsilon} p' \parcomp q' \land q \xRightarrow{\project{\sigma'}{L_e^\delta}} q' $\\
            \proofstep{\Itemref{lemma:parcomp_base_properties}{step_left}}\\
            $\exists p_1, p_2 \in Q_s: p \xRightarrow{\project{\sigma'}{L_s^\delta}} p_1 \land p_1 \parcomp q' \xrightarrow{a} p_2 \parcomp q' \land p_2 \parcomp q' \xRightarrow{\epsilon} p' \parcomp q' \land q \xRightarrow{\project{\sigma'}{L_e^\delta}} q' $\\
            \proofstep{\Cref{def:arrowdefs}: $\xRightarrow{a}$}\\
            $\exists p_1 \in Q_s: p \xRightarrow{\project{\sigma'}{L_s^\delta}} p_1 \land p_1 \parcomp q' \xRightarrow{a} p' \parcomp q' \land q \xRightarrow{\project{\sigma'}{L_e^\delta}} q' $\\
            \proofstep{Apply IH}\\
            $\exists p_1 \in Q_s: p \parcomp q \xRightarrow{\sigma'} p_1 \parcomp q' \land p_1 \parcomp q' \xRightarrow{a} p' \parcomp q' $\\
            \proofstep{\Itemref{prop:general_properties}{trans_transitive}}\\
            $p \parcomp q \xRightarrow{\sigma' \cdot a} p' \parcomp  q' $\\

            \item[$a \in L_e \setminus L_s$:] Symmetric with the previous case.\\

            \item[$a\in (L_s \cap L_e) \cup \{\delta\}$:]\ \\
             $p \xRightarrow{\project{\sigma'\cdot a}{L_s^\delta}} p' \land q \xRightarrow{\project{\sigma'\cdot a}{L_e^\delta}} q' $\\
            \proofstep{\Cref{def:projection}: $\projectop$}\\
            $p \xRightarrow{\project{\sigma'}{L_s^\delta}\cdot a} p' \land q \xRightarrow{\project{\sigma'}{L_e^\delta}\cdot a} q' $\\
            \proofstep{\Cref{def:arrowdefs}: $\xRightarrow{\project{\sigma'}{L_s^\delta}\cdot a}$}\\
            $\exists p_1, p_2 \in Q_s: p \xRightarrow{\project{\sigma'}{L_s^\delta}} p_1 \land p_1 \xrightarrow{a} p_2 \land p_2 \xRightarrow{\epsilon} p' \land q \xRightarrow{\project{\sigma'}{L_e^\delta}\cdot a} q' $\\
            \proofstep{\Cref{def:arrowdefs}: $\xRightarrow{\project{\sigma'}{L_e^\delta}\cdot a}$}\\
            $\exists p_1, p_2 \in Q_s, q_1, q_2 \in Q_e:$\\ 
            $p \xRightarrow{\project{\sigma'}{L_s^\delta}} p_1 \land 
            p_1 \xrightarrow{a} p_2 \land 
            p_2 \xRightarrow{\epsilon} p' \land 
            q \xRightarrow{\project{\sigma'}{L_e^\delta}} q_1 \land 
            q_1 \xrightarrow{a} q_2 \land
            q_2 \xRightarrow{\epsilon} q' $\\
            \proofstep{Apply basic step}\\
            $\exists p_1, p_2 \in Q_s, q_1, q_2 \in Q_e:$\\ 
            $p \xRightarrow{\project{\sigma'}{L_s^\delta}} p_1 \land 
            p_1 \xrightarrow{a} p_2 \land 
            q \xRightarrow{\project{\sigma'}{L_e^\delta}} q_1 \land 
            q_1 \xrightarrow{a} q_2 \land
            p_2 \parcomp q_2 \xRightarrow{\epsilon} p' \parcomp q' $\\
            \proofstep{Apply IH}\\
            $\exists p_1, p_2 \in Q_s, q_1, q_2 \in Q_e:  
            p \parcomp q \xRightarrow{\sigma'} p_1 \parcomp q_1 \land p_1 \xrightarrow{a} p_2 \land q_1 \xrightarrow{a} q_2 \land
            p_2 \parcomp q_2 \xRightarrow{\epsilon} p' \parcomp q' $\\
            \proofstep{\Itemref{lemma:parcomp_base_properties}{both_to_parcomp}}\\
            $\exists p_1, p_2 \in Q_s, q_1, q_2 \in Q_e: 
            p \parcomp q \xRightarrow{\sigma'} p_1 \parcomp q_1 \land 
            p_1 \parcomp q_1 \xrightarrow{a} p_2 \parcomp q_2 \land 
            p_2 \parcomp q_2 \xRightarrow{\epsilon} p' \parcomp  q' $\\
            \proofstep{\Cref{def:arrowdefs}: $\xRightarrow{a}$}\\
            $\exists p_1 \in Q_s, q_1 \in Q_e:
            p \parcomp q \xRightarrow{\sigma'} p_1 \parcomp q_1 \land
             p_1 \parcomp q_1 \xRightarrow{a} p' \parcomp q'$\\
            \proofstep{\Itemref{prop:general_properties}{trans_transitive}}\\
            $p \parcomp q \xRightarrow{\sigma'\cdot a} p' \parcomp  q' $\\
        \end{case_distinction}
    \end{case_distinction}
\end{proof}
\end{toappendix}

\Cref{lemma:project_from_parcomp_IOTS} shows how for composable, input complete systems, traces in the composed system can be transformed into traces in the component systems, and the other way around. This allows for compositional model-based testing in input complete systems. 
\Cref{lemma:project_from_parcomp} then goes on to show that for $\utraces$, the same is also possible as long as the two specifications are mutually accepting. 

This is also where the $\composable$ requirement becomes important. It enforces that all labels are either synchronised or only present in one of the two label sets. This means that every trace $\sigma$ can be split into a unique pair of two projected traces $\project{\sigma}{L_s^\delta}$ and $\project{\sigma}{L_e^\delta}$ which can be replayed in $s$ and $e$, respectively. Without this requirement, it would be unclear what to do with unsynchronised shared labels.

\begin{lemmarep}
let $i_s,i_e$ be $\composable$ $IOTS$, $i_s'\in Q_{i_s}$, $i_e' \in Q_{i_e}, \sigma \in {L_{i_s\parcomp i_e}^\delta}^*$.
\centermath{i_s \parcomp i_e \xRightarrow{\sigma} i_s' \parcomp i_e' \ \iff\ 
i_s \xRightarrow{\project{\sigma}{L_{i_s}^\delta}} i_s' \:\land\:
i_e \xRightarrow{\project{\sigma}{L_{i_e}^\delta}} i_e'}
\label{lemma:project_from_parcomp_IOTS}
\end{lemmarep}
\begin{proof}
%proof for lemma:project_from_parcomp_IOTS
% Checked by: no-one. TODO

~\begin{case_distinction}
    \item[($\Longrightarrow$):] Proof by induction on $\sigma$. 
    \begin{case_distinction}
        \item[Base case:] $\sigma = \epsilon$. Let $\sigma_\tau,\sigma_\tau', \sigma_\tau'' \in \tau^*$\\
        We will proof a stronger statement, the main ($\Longrightarrow$) goal for all states, instead of just for the initial state: \[\forall p_1 \in Q_{i_s}, q_1 \in Q_{i_e}: p_1\parcomp q_1 \xRightarrow{\epsilon} p \parcomp q \implies p_1 \xRightarrow{\project{\epsilon}{L_{i_s}^\delta}} p \land q_1 \xRightarrow{\project{\epsilon}{L_{i_e}^\delta}} q)\]
        $p_1 \parcomp q_1 \xRightarrow{\epsilon} p \parcomp q$\\
        \proofstep{\Cref{def:arrowdefs}: $\xRightarrow{\epsilon}$}\\
        $p_1 \parcomp q_1 \xrightarrow{\sigma_\tau} p \parcomp q$\\
        %You: This step is a bit complex, but comes directly from the compositional testing with ioco paper. Is it ok to use it like this or does it need more elaboration? This step is also used in several other places of the proof
        \proofstep{\Itemref{lemma:parcomp_base_properties}{step_tau}}\\
        $p_1 \xrightarrow{\sigma_\tau'} p \land q_1 \xrightarrow{\sigma_\tau''} q$\\
        \proofstep{\Cref{def:arrowdefs}: $\xRightarrow{\epsilon}$}\\
        $p_1 \xRightarrow{\epsilon} p \land q_1 \xRightarrow{\epsilon} q$\\
        \proofstep{\Cref{def:projection}: $\projectop$}\\
        $p_1 \xRightarrow{\project{\epsilon}{L_{i_s}^\delta}} p \land q_1 \xRightarrow{\project{\epsilon}{L_{i_e}^\delta}} q$\\
        
        \item[Induction step:]Assume the proposition holds for $\sigma'\in  {L_{i_s\parcomp i_e}^{\delta}}^{*}$.\\
        To proof: the proposition holds for $\sigma$, where $\sigma =\sigma' \cdot a$ with $a \in L_{i_s \parcomp i_e}^\delta$. This is divided into three cases based on $a$:
        \begin{itemize}[align=left]

            \item[$a \in L_{i_s} \setminus L_{i_e}$:]\ \\
            $i_s \parcomp i_e \xRightarrow{\sigma' \cdot a} i_s' \parcomp i_e'$\\
            \proofstep{\Cref{def:arrowdefs}: $\xRightarrow{\sigma}$}\\
            $\exists p_1, p_2 \in Q_{i_s}, q_1,q_2 \in Q_{i_e} :
            i_s \parcomp i_e \xRightarrow{\sigma'} p_1 \parcomp q_1 \land
            p_1 \parcomp q_1 \xrightarrow{a} p_2 \parcomp q_2 \land
            p_2 \parcomp q_2 \xRightarrow{\epsilon} i_s' \parcomp i_e' $\\
            \proofstep{Apply base case}\\
            $\exists p_1, p_2 \in Q_{i_s}, q_1,q_2 \in Q_{i_e} :
            i_s \parcomp i_e \xRightarrow{\sigma'} p_1 \parcomp q_1 \land
            p_1 \parcomp q_1 \xrightarrow{a} p_2 \parcomp q_2 \land
            p_2 \xRightarrow{\epsilon} s' \land
            q_2 \xRightarrow{\epsilon} e'$\\
            \proofstep{\cref{def:parcomp}: $q_1 = q_2$.
            From induction on structure of $T_{s\parcomp e}$ with $a \in L_{i_s} \setminus L_{i_e}$.}\\
        $\exists p_1, p_2 \in Q_{i_s}, q_1\in Q_{i_e} :
            i_s \parcomp i_e \xRightarrow{\sigma'} p_1 \parcomp q_1 \land
            p_1 \parcomp q_1 \xrightarrow{a} p_2 \parcomp q_1 \land
            p_2 \xRightarrow{\epsilon} s' \land
            q_1 \xRightarrow{\epsilon} e'$\\
            \proofstep{\Itemref{lemma:parcomp_base_properties}{step_left}}\\
            $\exists p_1, p_2 \in Q_{i_s}, q_1 \in Q_{i_e} :
            i_s \parcomp i_e \xRightarrow{\sigma'} p_1 \parcomp q_1 \land
            p_1  \xrightarrow{a} p_2 \land
            p_2 \xRightarrow{\epsilon} s' \land
            q_1 \xRightarrow{\epsilon} e'$\\
            \proofstep{\Cref{def:arrowdefs}: $\xRightarrow{a}$}\\
            $\exists p_1 \in Q_{i_s}, q_1 \in Q_{i_e} :
            i_s \parcomp i_e \xRightarrow{\sigma'} p_1 \parcomp q_1 \land
            p_1 \xRightarrow{a} s' \land
            q_1 \xRightarrow{\epsilon} e'$\\
            \proofstep{Apply IH}\\
            $\exists p_1 \in Q_{i_s}, q_1 \in Q_{i_e} :
            s \xRightarrow{\project{\sigma'}{L_{i_s}^\delta}} p_1 \land
            e \xRightarrow{\project{\sigma'}{L_{i_e}^\delta}} q_1 \land
            p_1 \xRightarrow{a} s' \land
            q_1 \xRightarrow{\epsilon} e'$\\
            \proofstep{\Itemref{prop:general_properties}{trans_transitive}}\\
            $s \xRightarrow{\project{\sigma'}{L_{i_s}^\delta}\cdot a} s' \land
            e \xRightarrow{\project{\sigma'}{L_{i_e}^\delta}} e'$\\
            \proofstep{\Cref{def:projection}: $\projectop$}\\
            $s \xRightarrow{\project{\sigma'\cdot a}{L_{i_s}^\delta}} s' \land
            e \xRightarrow{\project{\sigma'\cdot a}{L_{i_e}^\delta}} e'$\\
            
            \item[$a \in L_{i_e} \setminus L_{i_s}$:]Symmetric with the previous case.
            
            \item[$a\in (L_{i_s} \cap L_{i_e}) \cup \{\delta\}$:]\ \\
            $i_s \parcomp i_e \xRightarrow{\sigma' \cdot a} i_s' \parcomp i_e'$\\
            \proofstep{\Cref{def:arrowdefs}: $\xRightarrow{\sigma}$}\\
            $\exists p_1, p_2 \in Q_{i_s}, q_1,q_2 \in Q_{i_e} :
            i_s \parcomp i_e \xRightarrow{\sigma'} p_1 \parcomp q_1 \land
            p_1 \parcomp q_1 \xrightarrow{a} p_2 \parcomp q_2 \land
            p_2 \parcomp q_2 \xRightarrow{\epsilon} i_s' \parcomp i_e' $\\
            \proofstep{Apply base case}\\
            $\exists p_1, p_2 \in Q_{i_s}, q_1,q_2 \in Q_{i_e} :
            i_s \parcomp i_e \xRightarrow{\sigma'} p_1 \parcomp q_1 \land
            p_1 \parcomp q_1 \xrightarrow{a} p_2 \parcomp q_2 \land
            p_2 \xRightarrow{\epsilon} s' \land
            q_2 \xRightarrow{\epsilon} e' $\\
            \proofstep{\Cref{lemma:IOTS_step_both}}\\
            $\exists p_1, p_2 \in Q_{i_s}, q_1,q_2 \in Q_{i_e} :
            i_s \parcomp i_e \xRightarrow{\sigma'} p_1 \parcomp q_1 \land
            p_1 \xrightarrow{a} p_2 \land
            q_1 \xrightarrow{a} q_2 \land
            p_2 \xRightarrow{\epsilon} s' \land
            q_2 \xRightarrow{\epsilon} e' $\\
            \proofstep{\Cref{def:arrowdefs}: $\xRightarrow{a}$}\\
            $\exists p_1 \in Q_{i_s}, q_1 \in Q_{i_e} :
            i_s \parcomp i_e \xRightarrow{\sigma'} p_1 \parcomp q_1 \land
            p_1 \xRightarrow{a} s' \land
            q_1 \xRightarrow{a} e' $\\
            \proofstep{Apply IH}\\
            $\exists p_1 \in Q_{i_s}, q_1 \in Q_{i_e} :
            s \xRightarrow{\project{\sigma'}{L_{i_s}^\delta}} p_1 \land e \xRightarrow{\project{\sigma'}{L_{i_e}^\delta}} q_1 \land
            p_1 \xRightarrow{a} s' \land
            q_1 \xRightarrow{a} e' $\\
            \proofstep{\Itemref{prop:general_properties}{trans_transitive}}\\
            $s \xRightarrow{\project{\sigma'}{L_{i_s}^\delta}\cdot a} s' \land e \xRightarrow{\project{\sigma'}{L_{i_e}^\delta} \cdot a} e' $\\
            \proofstep{\Cref{def:projection}: $\projectop$}\\
            $s \xRightarrow{\project{\sigma'\cdot a}{L_{i_s}^\delta}} s' \land e \xRightarrow{\project{\sigma'\cdot a}{L_{i_e}^\delta}} e' $\\
          
        \end{itemize}
    \end{case_distinction}
    \item[($\Longleftarrow$):] Covered by \cref{lemma:project_from_parcomp_light}.

\end{case_distinction}
\end{proof}

\begin{lemmarep}
let $s,e$ be $\composable$ $LTS$, $s'\in Q_s$, $e' \in Q_e, \sigma \in  \utraces[s\parcomp e]$.
\centermath{s \mutuallyaccepts e \ \implies\ 
(\ s \parcomp e \xRightarrow{\sigma} s' \parcomp e' \ \iff\ 
s \xRightarrow{\project{\sigma}{L_s^\delta}} s' \:\land\:
e \xRightarrow{\project{\sigma}{L_e^\delta}} e'\ )}
\label{lemma:project_from_parcomp}
\end{lemmarep}

\begin{proof}
% Proof for lemma:project_from_parcomp
% Checked by: gijs, 8-02-23
~\begin{case_distinction}
    \item[($\Longrightarrow$):] Proof by induction on $\sigma$. 
    \begin{case_distinction}
        \item[Base case:] $\sigma = \epsilon$. Let $\sigma_\tau,\sigma_\tau', \sigma_\tau'' \in \tau^*$\\
        We will proof a stronger statement, the main ($\Longrightarrow$) goal for all states, instead of just for the initial state: \[\forall p_1 \in Q_s, q_1 \in Q_e: p_1\parcomp q_1 \xRightarrow{\epsilon} p \parcomp q \implies p_1 \xRightarrow{\project{\epsilon}{L_s^\delta}} p \land q_1 \xRightarrow{\project{\epsilon}{L_e^\delta}} q)\]
        $p_1 \parcomp q_1 \xRightarrow{\epsilon} p \parcomp q$\\
        \proofstep{\Cref{def:arrowdefs}: $\xRightarrow{\epsilon}$}\\
        $p_1 \parcomp q_1 \xrightarrow{\sigma_\tau} p \parcomp q$\\
        %You: This step is a bit complex, but comes directly from the compositional testing with ioco paper. Is it ok to use it like this or does it need more elaboration? This step is also used in several other places of the proof
        \proofstep{\Itemref{lemma:parcomp_base_properties}{step_tau}}\\
        $p_1 \xrightarrow{\sigma_\tau'} p \land q_1 \xrightarrow{\sigma_\tau''} q$\\
        \proofstep{\Cref{def:arrowdefs}: $\xRightarrow{\epsilon}$}\\
        $p_1 \xRightarrow{\epsilon} p \land q_1 \xRightarrow{\epsilon} q$\\
        \proofstep{\Cref{def:projection}: $\projectop$}\\
        $p_1 \xRightarrow{\project{\epsilon}{L_s^\delta}} p \land q_1 \xRightarrow{\project{\epsilon}{L_s^\delta}} q$\\
        
        \item[Induction step:] Assume the proposition holds for $\sigma'\in  \utraces[s \parcomp e]$.\\
        To proof: the proposition holds for $\sigma$, where $\sigma =\sigma' \cdot a$ with $a \in L_{s \parcomp e}^\delta$. Assume $\sigma \in  \utraces[s \parcomp e]$, otherwise the proposition trivially holds. This is divided into three cases based on $a$:
        \begin{case_distinction}

            \item[$a \in L_s \setminus L_e$:]\ \\
            $s \parcomp e \xRightarrow{\sigma' \cdot a} s' \parcomp e'$\\
            \proofstep{\Cref{def:arrowdefs}: $\xRightarrow{\sigma}$}\\
            $\exists p_1, p_2 \in Q_s, q_1,q_2 \in Q_e :
            s \parcomp e \xRightarrow{\sigma'} p_1 \parcomp q_1 \land
            p_1 \parcomp q_1 \xrightarrow{a} p_2 \parcomp q_2 \land
            p_2 \parcomp q_2 \xRightarrow{\epsilon} s' \parcomp e' $\\
            \proofstep{Apply base case}\\
            $\exists p_1, p_2 \in Q_s, q_1,q_2 \in Q_e :
            s \parcomp e \xRightarrow{\sigma'} p_1 \parcomp q_1 \land
            p_1 \parcomp q_1 \xrightarrow{a} p_2 \parcomp q_2 \land
            p_2 \xRightarrow{\epsilon} s' \land
            q_2 \xRightarrow{\epsilon} e'$\\
            \proofstep{\cref{def:parcomp}: $q_1 = q_2$.
            From induction on structure of $T_{s\parcomp e}$ with $a \in L_s \setminus L_e$.}\\
        $\exists p_1, p_2 \in Q_s, q_1\in Q_e :
            s \parcomp e \xRightarrow{\sigma'} p_1 \parcomp q_1 \land
            p_1 \parcomp q_1 \xrightarrow{a} p_2 \parcomp q_1 \land
            p_2 \xRightarrow{\epsilon} s' \land
            q_1 \xRightarrow{\epsilon} e'$\\
            \proofstep{\Itemref{lemma:parcomp_base_properties}{step_left}}\\
            $\exists p_1, p_2 \in Q_s, q_1 \in Q_e :
            s \parcomp e \xRightarrow{\sigma'} p_1 \parcomp q_1 \land
            p_1  \xrightarrow{a} p_2 \land
            p_2 \xRightarrow{\epsilon} s' \land
            q_1 \xRightarrow{\epsilon} e'$\\
            \proofstep{\Cref{def:arrowdefs}: $\xRightarrow{a}$}\\
            $\exists p_1 \in Q_s, q_1 \in Q_e :
            s \parcomp e \xRightarrow{\sigma'} p_1 \parcomp q_1 \land
            p_1 \xRightarrow{a} s' \land
            q_1 \xRightarrow{\epsilon} e'$\\
            \proofstep{Apply IH}\\
            $\exists p_1 \in Q_s, q_1 \in Q_e :
            s \xRightarrow{\project{\sigma'}{L_s^\delta}} p_1 \land
            e \xRightarrow{\project{\sigma'}{L_e^\delta}} q_1 \land
            p_1 \xRightarrow{a} s' \land
            q_1 \xRightarrow{\epsilon} e'$\\
            \proofstep{\Itemref{prop:general_properties}{trans_transitive}}\\
            $s \xRightarrow{\project{\sigma'}{L_s^\delta}\cdot a} s' \land
            e \xRightarrow{\project{\sigma'}{L_e^\delta}} e'$\\
            \proofstep{\Cref{def:projection}: $\projectop$}\\
            $s \xRightarrow{\project{\sigma'\cdot a}{L_s^\delta}} s' \land
            e \xRightarrow{\project{\sigma'\cdot a}{L_e^\delta}} e'$\\
            
            \item[$a \in L_e \setminus L_s$:]Symmetric with the previous case.
            
            \item[$a\in (L_s \cap L_e) \cup \{\delta\}$:]\ \\
            $s \parcomp e \xRightarrow{\sigma' \cdot a} s' \parcomp e'$\\
            \proofstep{\Cref{def:arrowdefs}: $\xRightarrow{\sigma}$}\\
            $\exists p_1, p_2 \in Q_s, q_1,q_2 \in Q_e :
            s \parcomp e \xRightarrow{\sigma'} p_1 \parcomp q_1 \land
            p_1 \parcomp q_1 \xrightarrow{a} p_2 \parcomp q_2 \land
            p_2 \parcomp q_2 \xRightarrow{\epsilon} s' \parcomp e' $\\
            \proofstep{Apply base case}\\
            $\exists p_1, p_2 \in Q_s, q_1,q_2 \in Q_e :
            s \parcomp e \xRightarrow{\sigma'} p_1 \parcomp q_1 \land
            p_1 \parcomp q_1 \xrightarrow{a} p_2 \parcomp q_2 \land
            p_2 \xRightarrow{\epsilon} s' \land
            q_2 \xRightarrow{\epsilon} e' $\\
            \proofstep{\Itemref{lemma:parcomp_base_properties}{step_both}}\\
            $\exists p_1, p_2 \in Q_s, q_1,q_2 \in Q_e :
            s \parcomp e \xRightarrow{\sigma'} p_1 \parcomp q_1 \land
            p_1 \xrightarrow{a} p_2 \land
            q_1 \xrightarrow{a} q_2 \land
            p_2 \xRightarrow{\epsilon} s' \land
            q_2 \xRightarrow{\epsilon} e' $\\
            \proofstep{\Cref{def:arrowdefs}: $\xRightarrow{a}$}\\
            $\exists p_1 \in Q_s, q_1 \in Q_e :
            s \parcomp e \xRightarrow{\sigma'} p_1 \parcomp q_1 \land
            p_1 \xRightarrow{a} s' \land
            q_1 \xRightarrow{a} e' $\\
            \proofstep{Apply IH}\\
            $\exists p_1 \in Q_s, q_1 \in Q_e :
            s \xRightarrow{\project{\sigma'}{L_s^\delta}} p_1 \land e \xRightarrow{\project{\sigma'}{L_e^\delta}} q_1 \land
            p_1 \xRightarrow{a} s' \land
            q_1 \xRightarrow{a} e' $\\
            \proofstep{\Itemref{prop:general_properties}{trans_transitive}}\\
            $s \xRightarrow{\project{\sigma'}{L_s^\delta}\cdot a} s' \land e \xRightarrow{\project{\sigma'}{L_e^\delta} \cdot a} e' $\\
            \proofstep{\Cref{def:projection}: $\projectop$}\\
            $s \xRightarrow{\project{\sigma'\cdot a}{L_s^\delta}} s' \land e \xRightarrow{\project{\sigma'\cdot a}{L_e^\delta}} e' $\\
          
        \end{case_distinction}
    \end{case_distinction}
    \item[($\Longleftarrow$):] Covered by \cref{lemma:project_from_parcomp_light}.

\end{case_distinction}
\end{proof}

The $\mutuallyaccepts$ relation, and by extension \cref{lemma:project_from_parcomp}, only consider $\utraces$ and not arbitrary traces because only states reachable by $\utraces$ are important for $\uioco$.
This does, however, create the extra requirement of checking that after projecting a trace to a specific component, it is still part of the $\utraces$ for that component. \Cref{lemma:project_utraces} shows that the $\mutuallyaccepts$ relation ensures that $\utraces$ are preserved when projecting from a composed system, in both directions. This is not trivial, as the special label $\delta$ is not normally preserved under composition.

\begin{lemmarep} Let $s,e \in LTS$ be $\composable$, $\sigma \in {L_{s\parcomp e}^\delta}^*$.
\centermath{\begin{array}[t]{ll}
s \mutuallyaccepts e \ \implies \\
~(\ \sigma \in \utraces[s \parcomp e] \ \iff\ \project{\sigma}{L_s^\delta} \in  \utraces[s] \;\land\; \project{\sigma}{L_e^\delta}\in  \utraces[e]\ )
\end{array}}
\label{lemma:project_utraces}
\end{lemmarep}

\begin{proof}
\ \\
% Proof for lemma:project_utraces
% Checked by: gijs, 10-02-23.
% This one changed so much during the checking that it needs a double-check later.
Assume $s \mutuallyaccepts e$. 
\begin{case_distinction}
    \item[($\Longrightarrow$):]\ \\
    If $ \utraces[s\parcomp e] = \emptyset$, then the lemma trivially holds.\\
    Assume $\sigma \in  \utraces[s\parcomp e]$, then the proof follows by induction on $\sigma$.\\
    \begin{case_distinction}
        \item[Base case:] $\sigma = \epsilon$\\
        % Item checked by: gijs, 8-02-23
        $\epsilon \in  \utraces[s\parcomp e]$\\
        \proofstep{\Cref{def:uioco}: $Utraces$}\\
        $\exists p_1 \in Q_s, q_1 \in Q_e : s \parcomp e \xRightarrow{\epsilon} p_1 \parcomp q_1$\\
        \proofstep{\Cref{lemma:project_from_parcomp}}\\
        $\exists p_1 \in Q_s, q_1 \in Q_e : s\xRightarrow{\project{\epsilon}{L_s^\delta}} p_1 \land e\xRightarrow{\project{\epsilon}{L_e^\delta}} q_1$\\
        \proofstep{\Cref{def:uioco}: $Utraces$}\\
        $\project{\epsilon}{L_s^\delta} \in  \utraces[s] \land \project{\epsilon}{L_e^\delta} \in  \utraces[e]$\\
        
        \item[Induction step:] Assume the proposition holds for $\sigma'$. To proof: the proposition holds for $\sigma$  where $\sigma = \sigma' \cdot a$, $a \in L_{s \parcomp e}^\delta $.\\
        We do a case distinction on $a$, splitting $L_{s \parcomp e}^\delta$ into\\
        $(I_s \setminus L_e), (I_e \setminus L_s), (I_e \cap U_s), (I_s \cap U_e), (I_s \cap I_e), (U_s \setminus L_e), (U_e \setminus L_s), \{\delta\}$. Note that $U_s \cap U_e$ is missing because it is empty (\cref{def:composable}). This fine grained distinction is required because what it means for a trace to be part of $\utraces$ changes based on if the last label is an input or output. Additionally, the behaviour of $\projectop$ changes based on if a label is part of $L_s$, $L_e$ or both. 
        \begin{case_distinction}
        \item[$a \in I_s \setminus L_e$:]\ \\
            % Item checked by: gijs, 8-02-23
            $\sigma'\cdot a \in  \utraces[s\parcomp e]$\\
            \proofstep{\Itemref{prop:general_properties}{prefix_closed}}\\
            $\sigma'\in  \utraces[s\parcomp e] \land \sigma'\cdot a \in  \utraces[s\parcomp e]$\\
            \proofstep{\Cref{def:uioco}: $Utraces$}\\
            $\sigma'\in  \utraces[s\parcomp e]\; \land $\\
            $\exists p \in Q_s, q \in Q_e : s\parcomp e \xRightarrow{\sigma'\cdot a} p \parcomp q\; \land$\\ 
            $\forall p_2 \in Q_s, q_2 \in Q_e : s \parcomp e \xRightarrow{\sigma'} p_2 \parcomp q_2 \implies p_2 \parcomp q_2 \xRightarrow{a}$\\
            \proofstep{\Itemref{prop:general_properties}{trans_transitive}}\\
            $\sigma'\in  \utraces[s\parcomp e] \land $\\
            $\exists p_1 \in Q_s, q_1 \in Q_e : s\parcomp e \xRightarrow{\sigma'} p_1 \parcomp q_1\; \land$\\
            $\forall p_2 \in Q_s, q_2 \in Q_e : s \parcomp e \xRightarrow{\sigma'} p_2 \parcomp q_2 \implies p_2 \parcomp q_2 \xRightarrow{a}$\\
            \proofstep{\Cref{lemma:project_from_parcomp}}\\
            $\sigma'\in  \utraces[s\parcomp e]\; \land $\\
            $\exists q_1 \in Q_e : e \xRightarrow{\project{\sigma'}{L_e^\delta}} q_1\; \land$\\
            $ \forall p_2 \in Q_s, q_2 \in Q_e : s \xRightarrow{\project{\sigma'}{L_s^\delta}} p_2 \land e \xRightarrow{\project{\sigma'}{L_e^\delta}}  q_2 \implies p_2 \parcomp q_2 \xRightarrow{a}$\\
            \proofstep{$\forall$ elimination}\\
            $\sigma'\in  \utraces[s\parcomp e]\; \land $\\
            $\exists q_1 \in Q_e : e \xRightarrow{\project{\sigma'}{L_e^\delta}} q_1\; \land$\\
            $ \forall p_2 \in Q_s : s \xRightarrow{\project{\sigma'}{L_s^\delta}} p_2 \land e \xRightarrow{\project{\sigma'}{L_e^\delta}}  q_1 \implies p_2 \parcomp q_1 \xRightarrow{a} $\\
            \proofstep{($A \land B \implies C) \land B \implies (A \implies C)$}\\
            $\sigma'\in  \utraces[s\parcomp e]\; \land $\\
            $\exists q_1 \in Q_e, \forall p_2 \in Q_s : s \xRightarrow{\project{\sigma'}{L_s^\delta}} p_2 \implies p_2 \parcomp q_1 \xRightarrow{a}$\\
            \proofstep{\cref{def:parcomp}: $\parcomp$}\\
            $\sigma'\in  \utraces[s\parcomp e]\; \land $\\
            $\forall p_2 \in Q_s : s \xRightarrow{\project{\sigma'}{L_s^\delta}} p_2 \implies p_2 \xRightarrow{a}$\\
            \proofstep{Apply IH}\\
            $\project{\sigma'}{L_s^\delta}\in  \utraces[s] \land \project{\sigma'}{L_e^\delta} \in  \utraces[e]\; \land$\\
            $\forall p_2 \in Q_s : s \xRightarrow{\project{\sigma'}{L_s^\delta}} p_2 \implies p_2 \xRightarrow{a}$\\
            \proofstep{\Cref{def:uioco}: $Utraces$}\\
            $\project{\sigma'}{L_s^\delta}\cdot a \in  \utraces[s] \land \project{\sigma'}{L_e^\delta} \in  \utraces[e]$\\
            \proofstep{\Cref{def:projection}: $\projectop$}\\
            $\project{\sigma'\cdot a}{L_s^\delta} \in  \utraces[s] \land \project{\sigma'\cdot a}{L_e^\delta} \in  \utraces[e]$
        \item[$a \in I_e \setminus L_s$:] Symmetric to previous case\\

        \item[$a \in I_s \cap U_e$:]\ \\
        % Item Checked by: gijs, 9-02-23
            $\sigma'\cdot a \in  \utraces[s\parcomp e]$\\
            \proofstep{\Itemref{prop:general_properties}{prefix_closed}}\\
            $\sigma'\in  \utraces[s\parcomp e] \land \sigma'\cdot a \in  \utraces[s\parcomp e]$\\
            \proofstep{\Cref{def:uioco}: $Utraces$}\\
            $\sigma'\in  \utraces[s\parcomp e]\; \land $\\
            $\exists p \in Q_s, q \in Q_e : s\parcomp e \xRightarrow{\sigma'\cdot a} p \parcomp q$\\
            \proofstep{\Cref{lemma:project_from_parcomp} }\\
            $\sigma'\in  \utraces[s\parcomp e]\; \land $\\
            $\exists q \in Q_e :  e \xRightarrow{\project{\sigma'\cdot a}{L_e^\delta}} q$\\
            \proofstep{\Cref{def:mutually_accepts,def:accepting}: $\mutuallyaccepts$ and $\accepts$}\\
            $\sigma'\in  \utraces[s\parcomp e]\; \land $\\
            $\exists q \in Q_e :  e \xRightarrow{\project{\sigma'\cdot a}{L_e^\delta}} q\; \land$\\
            $(\forall \sigma \in  \utraces[s\parcomp e], p_1 \in Q_s, q_1 \in Q_e: s \parcomp e \xRightarrow{\sigma} p_1 \parcomp q_1 \implies$\\
            $out(q_1)\cap I_e \subseteq in(p_1) \cap U_s$)\\
            \proofstep{$\forall$ elimination}\\
            $\sigma'\in  \utraces[s\parcomp e]\; \land $\\
            $\exists q \in Q_e :  e \xRightarrow{\project{\sigma'\cdot a}{L_e^\delta}} q\; \land$\\
            $\forall p_1, \in Q_s, q_1 \in Q_e : s \parcomp e \xRightarrow{\sigma'} p_1 \parcomp q_1 \implies out(q_1) \cap I_s \subseteq in(p_1) \cap U_e$\\
            \proofstep{\Cref{lemma:project_from_parcomp}}\\
            $\sigma'\in  \utraces[s\parcomp e]\; \land $\\
            $\exists q \in Q_e :  e \xRightarrow{\project{\sigma'\cdot a}{L_e^\delta}} q\; \land$\\
            $\forall p_1, \in Q_s, q_1 \in Q_e : s \xRightarrow{\project{\sigma'}{L_s^\delta}} p_1 \land e \xRightarrow{\project{\sigma'}{L_e^\delta}} q_1 \implies out(q_1) \cap I_s \subseteq in(p_1) \cap U_e$\\
            \proofstep{\Cref{def:out,def:inset}: $in$ and $out$, and $a \in I_s \cap U_e$}\\
            $\sigma'\in  \utraces[s\parcomp e]\; \land $\\
            $\exists q \in Q_e :  e \xRightarrow{\project{\sigma'\cdot a}{L_e^\delta}} q\; \land$\\
            $\forall p_1, \in Q_s, q_1 \in Q_e : s \xRightarrow{\project{\sigma'}{L_s^\delta}} p_1 \land e \xRightarrow{\project{\sigma'}{L_e^\delta}} q_1 \implies (q_1 \xrightarrow{a} \implies p_1 \xRightarrow{a})$\\
            \proofstep{\Cref{def:projection}: $\projectop$}\\
            $\sigma'\in  \utraces[s\parcomp e]\; \land $\\
            $\exists q \in Q_e :  e \xRightarrow{\project{\sigma'}{L_e^\delta}\cdot a} q\; \land$\\
            $\forall p_1, \in Q_s, q_1 \in Q_e : s \xRightarrow{\project{\sigma'}{L_s^\delta}} p_1 \land e \xRightarrow{\project{\sigma'}{L_e^\delta}} q_1 \implies (q_1 \xrightarrow{a} \implies p_1 \xRightarrow{a})$\\
            \proofstep{\cref{def:arrowdefs}: $e \xRightarrow{\project{\sigma'}{L_e^\delta}\cdot a} q$}\\
            $\sigma'\in  \utraces[s\parcomp e]\; \land $\\
            $\exists q, q_2 \in Q_e :  e \xRightarrow{\project{\sigma'}{L_e^\delta}\cdot a} q \land e \xRightarrow{\project{\sigma'}{L_e^\delta}} q_2 \land  q_2 \xrightarrow{a}\; \land$\\
            $\forall p_1, \in Q_s, q_1 \in Q_e : s \xRightarrow{\project{\sigma'}{L_s^\delta}} p_1 \land e \xRightarrow{\project{\sigma'}{L_e^\delta}} q_1 \implies (q_1 \xrightarrow{a} \implies p_1 \xRightarrow{a})$\\
            \proofstep{$\forall$ elimination}\\
            $\sigma'\in  \utraces[s\parcomp e]\; \land $\\
            $\exists q, q_2 \in Q_e : e \xRightarrow{\project{\sigma'}{L_e^\delta}\cdot a} q \land e \xRightarrow{\project{\sigma'}{L_e^\delta}} q_2 \land  q_2 \xrightarrow{a}\;\land$\\
            $\forall p_1, \in Q_s : s \xRightarrow{\project{\sigma'}{L_s^\delta}} p_1 \land e \xRightarrow{\project{\sigma'}{L_e^\delta}} q_2 \implies (q_2 \xrightarrow{a} \implies p_1 \xRightarrow{a})$\\
            \proofstep{($A \land B \implies C) \land B \implies (A \implies C)$}\\
            $\sigma'\in  \utraces[s\parcomp e]\; \land $\\
            $\exists q, q_2 \in Q_e : e \xRightarrow{\project{\sigma'}{L_e^\delta}\cdot a} q \land q_2 \xrightarrow{a}\; \land$\\
            $\forall p_1, \in Q_s: s \xRightarrow{\project{\sigma'}{L_s^\delta}} p_1 \implies (q_2 \xrightarrow{a} \implies p_1 \xRightarrow{a})$\\
            \proofstep{($A \implies B) \land A \implies B$}\\
            $\sigma'\in  \utraces[s\parcomp e]\; \land $\\
            $\exists q \in Q_e : e \xRightarrow{\project{\sigma'}{L_e^\delta}\cdot a} q\; \land$\\
            $\forall p_1, \in Q_s : s \xRightarrow{\project{\sigma'}{L_s^\delta}} p_1 \implies p_1 \xRightarrow{a}$\\
            \proofstep{Apply IH}\\
            $\project{\sigma'}{L_s^\delta}\in  \utraces[s] \land \project{\sigma'}{L_e^\delta} \in  \utraces[e] \land$\\
            $\exists q \in Q_e : e \xRightarrow{\project{\sigma'}{L_e^\delta}\cdot a} q\; \land$\\
            $\forall p_1, \in Q_s : s \xRightarrow{\project{\sigma'}{L_s^\delta}} p_1 \implies p_1 \xRightarrow{a}$\\
            \proofstep{\Cref{def:uioco}: $Utraces$}\\
            $\project{\sigma'}{L_s^\delta}\cdot a \in  \utraces[s] \land \project{\sigma'}{L_e^\delta}\cdot a \in  \utraces[e]$\\
            \proofstep{\Cref{def:projection}: $\projectop$}\\
            $\project{\sigma'\cdot a}{L_s^\delta} \in  \utraces[s] \land \project{\sigma'\cdot a}{L_e^\delta} \in  \utraces[e]$
            
        \item[$a \in I_e \cap U_s$:] Symmetric to previous case.\\

        \item[$a \in I_e \cap I_s$:]\ \\
            $\sigma'\cdot a \in  \utraces[s\parcomp e]$\\
            \proofstep{\Itemref{prop:general_properties}{prefix_closed}}\\
            $\sigma'\in  \utraces[s\parcomp e] \land \sigma'\cdot a \in  \utraces[s\parcomp e]$\\
            \proofstep{\Cref{def:uioco}: $Utraces$}\\
            $\sigma'\in  \utraces[s\parcomp e]\; \land $\\
            $\exists p \in Q_s, q \in Q_e : s\parcomp e \xRightarrow{\sigma'\cdot a} p \parcomp q\; \land$\\ 
            $\forall p_2 \in Q_s, q_2 \in Q_e : s \parcomp e \xRightarrow{\sigma'} p_2 \parcomp q_2 \implies p_2 \parcomp q_2 \xRightarrow{a}$\\
            \proofstep{\Itemref{prop:general_properties}{trans_transitive}}\\
            $\sigma'\in  \utraces[s\parcomp e] \land $\\
            $\exists p_1 \in Q_s, q_1 \in Q_e : s\parcomp e \xRightarrow{\sigma'} p_1 \parcomp q_1\; \land$\\
            $\forall p_2 \in Q_s, q_2 \in Q_e : s \parcomp e \xRightarrow{\sigma'} p_2 \parcomp q_2 \implies p_2 \parcomp q_2 \xRightarrow{a}$\\
            \proofstep{\Cref{lemma:project_from_parcomp}}\\
            $\sigma'\in  \utraces[s\parcomp e]\; \land $\\
            $\exists p_1 \in Q_s, q_1 \in Q_e : s \xRightarrow{\project{\sigma'}{L_s^\delta}} p_1 \land e \xRightarrow{\project{\sigma'}{L_e^\delta}} q_1\; \land$\\
            $ \forall p_2 \in Q_s, q_2 \in Q_e : s \xRightarrow{\project{\sigma'}{L_s^\delta}} p_2 \land e \xRightarrow{\project{\sigma'}{L_e^\delta}}  q_2 \implies p_2 \parcomp q_2 \xRightarrow{a}$\\
            \proofstep{$\forall$ elimination X2}\\
            $\sigma'\in  \utraces[s\parcomp e]\; \land $\\
            $\exists p_1 \in Q_s, q_1 \in Q_e  : s \xRightarrow{\project{\sigma'}{L_s^\delta}} p_1 \land e \xRightarrow{\project{\sigma'}{L_e^\delta}} q_1\; \land$\\
            $ \forall p_2 \in Q_s : s \xRightarrow{\project{\sigma'}{L_s^\delta}} p_2 \land e \xRightarrow{\project{\sigma'}{L_e^\delta}}  q_1 \implies p_2 \parcomp q_1 \xRightarrow{a}\;\land$\\
            $ \forall q_2 \in Q_e : s \xRightarrow{\project{\sigma'}{L_s^\delta}} p_1 \land e \xRightarrow{\project{\sigma'}{L_e^\delta}}  q_2 \implies p_1 \parcomp q_2 \xRightarrow{a}$\\
            \proofstep{($A \land B \implies C) \land B \implies (A \implies C)$}\\
            $\sigma'\in  \utraces[s\parcomp e]\; \land $\\
            $\exists q_1 \in Q_e, \forall p_2 \in Q_s : s \xRightarrow{\project{\sigma'}{L_s^\delta}} p_2 \implies p_2 \parcomp q_1 \xRightarrow{a}\;\land$\\
            $\exists p_1 \in Q_s, \forall q_2 \in Q_e : e \xRightarrow{\project{\sigma'}{L_e^\delta}}  q_2 \implies p_1 \parcomp q_2 \xRightarrow{a}$\\
            \proofstep{\cref{def:parcomp}: $\parcomp$}\\
            $\sigma'\in  \utraces[s\parcomp e]\; \land $\\
            $\forall p_2 \in Q_s : s \xRightarrow{\project{\sigma'}{L_s^\delta}} p_2 \implies p_2 \xRightarrow{a}\;\land$\\
            $\forall q_2 \in Q_e : e \xRightarrow{\project{\sigma'}{L_e^\delta}}  q_2 \implies q_2 \xRightarrow{a}$\\
            \proofstep{Apply IH}\\
            $\project{\sigma'}{L_s^\delta}\in  \utraces[s] \land \project{\sigma'}{L_e^\delta} \in  \utraces[e]\; \land$\\
            $\forall p_2 \in Q_s : s \xRightarrow{\project{\sigma'}{L_s^\delta}} p_2 \implies p_2 \xRightarrow{a}\;\land$\\
            $\forall q_2 \in Q_e : e \xRightarrow{\project{\sigma'}{L_e^\delta}}  q_2 \implies q_2 \xRightarrow{a}$\\\
            \proofstep{\Cref{def:uioco}: $Utraces$}\\
            $\project{\sigma'}{L_s^\delta}\cdot a \in  \utraces[s] \land \project{\sigma'}{L_e^\delta}\cdot a \in  \utraces[e]$\\
            \proofstep{\Cref{def:projection}: $\projectop$}\\
            $\project{\sigma'\cdot a}{L_s^\delta} \in  \utraces[s] \land \project{\sigma'\cdot a}{L_e^\delta} \in  \utraces[e]$
            
        \item[$a \in U_s \setminus L_e$:]\ \\
            % Item Checked by: gijs, 10-02-23
            $\sigma'\cdot a \in  \utraces[s\parcomp e]$\\
            \proofstep{\Itemref{prop:general_properties}{prefix_closed}}\\
            $\sigma'\in  \utraces[s\parcomp e] \land \sigma'\cdot a \in  \utraces[s\parcomp e]$\\
            \proofstep{\Cref{def:uioco}: $Utraces$}\\
            $\sigma'\in  \utraces[s\parcomp e] \land \sigma'\cdot a \in  \utraces[s\parcomp e]\; \land $\\
            $\exists p \in Q_s, q \in Q_e : s\parcomp e \xRightarrow{\sigma'\cdot a} p \parcomp q$\\
            \proofstep{\Cref{lemma:project_from_parcomp} }\\
            $\sigma'\in  \utraces[s\parcomp e]\; \land $\\
            $\exists p \in Q_s : s \xRightarrow{\project{\sigma'\cdot a}{L_s^\delta}} p$\\
            \proofstep{\Cref{def:projection}: $\projectop$}\\
            $\sigma'\in  \utraces[s\parcomp e]\; \land $\\
            $\exists p \in Q_s: s \xRightarrow{\project{\sigma'}{L_s^\delta}\cdot a} p$\\
            \proofstep{Apply IH}\\
            $\project{\sigma'}{L_s^\delta}\in  \utraces[s] \land \project{\sigma'}{L_e^\delta} \in  \utraces[e]\; \land$\\
            $\exists p \in Q_s : s \xRightarrow{\project{\sigma'}{L_s^\delta}\cdot a} p$\\
            \proofstep{\Cref{def:uioco}: $Utraces$}\\
            $\project{\sigma'}{L_s^\delta}\cdot a \in  \utraces[s] \land \project{\sigma'}{L_e^\delta}\in  \utraces[e]$\\
            \proofstep{\Cref{def:projection}: $\projectop$}\\
            $\project{\sigma'\cdot a}{L_s^\delta} \in  \utraces[s] \land \project{\sigma'\cdot a}{L_e^\delta} \in  \utraces[e]$
            
        \item[$a \in U_e \setminus L_s$:]Symmetric to previous case.\\
            
        \item [$a = \delta$:]\ \\
            % Item Checked by: gijs, 9-02-23
            $\sigma'\cdot \delta \in  \utraces[s\parcomp e]$\\
            \proofstep{\Itemref{prop:general_properties}{prefix_closed}}\\
            $\sigma'\in  \utraces[s\parcomp e] \land \sigma'\cdot \delta \in  \utraces[s\parcomp e]$\\
            \proofstep{\Cref{def:uioco}: $Utraces$}\\
            $\sigma'\in  \utraces[s\parcomp e] \land \sigma'\cdot \delta \in  \utraces[s\parcomp e] \;\land $\\
            $\exists p \in Q_s, q \in Q_e : s\parcomp e \xRightarrow{\sigma'\cdot \delta} p \parcomp q$\\
            \proofstep{\Cref{lemma:project_from_parcomp} }\\
            $\sigma'\in  \utraces[s\parcomp e]\; \land $\\
            $\exists p \in Q_s, q \in Q_e : s \xRightarrow{\project{\sigma'\cdot \delta}{L_s^\delta}} p \land e \xRightarrow{\project{\sigma'\cdot \delta}{L_e^\delta}} q$\\
            \proofstep{\Cref{def:projection}: $\projectop$}\\
            $\sigma'\in  \utraces[s\parcomp e]\; \land $\\
            $\exists p \in Q_s, q \in Q_e : s \xRightarrow{\project{\sigma'}{L_s^\delta}\cdot \delta} p \land e \xRightarrow{\project{\sigma'}{L_e^\delta}\cdot \delta} q$\\
            \proofstep{Apply IH}\\
            $\project{\sigma'}{L_s^\delta}\in  \utraces[s] \land \project{\sigma'}{L_e^\delta} \in  \utraces[e]\; \land$\\
            $\exists p \in Q_s, q \in Q_e : s \xRightarrow{\project{\sigma'}{L_s^\delta}\cdot \delta} p \land e \xRightarrow{\project{\sigma'}{L_e^\delta}\cdot \delta} q$\\
            \proofstep{\Cref{def:uioco}: $Utraces$}\\
            $\project{\sigma'}{L_s^\delta}\cdot \delta \in  \utraces[s] \land \project{\sigma'}{L_e^\delta}\cdot \delta\in  \utraces[e]$\\
            \proofstep{\Cref{def:projection}: $\projectop$}\\
            $\project{\sigma'\cdot \delta}{L_s^\delta} \in  \utraces[s] \land \project{\sigma'\cdot \delta}{L_e^\delta} \in  \utraces[e]$
        \end{case_distinction}
    \end{case_distinction}
    \item[($\Longleftarrow$):]\ \\
     Assume $\project{\sigma}{L_s^\delta} \in  \utraces[s]\; \land\; \project{\sigma}{L_e^\delta} \in  \utraces[e]$, then the proof follows by induction on $\sigma$.\\
    \begin{case_distinction}
        \item[Base case:] $\sigma = \epsilon$\\
        
        $\project{\epsilon}{L_s^\delta} \in  \utraces[s] \land \project{\epsilon}{L_e^\delta} \in  \utraces[e]$\\
        \proofstep{\Cref{def:uioco}: $Utraces$}\\
        $\exists p_1 \in Q_s, q_1 \in Q_e : s\xRightarrow{\project{\epsilon}{L_s^\delta}} p_1 \land e\xRightarrow{\project{\epsilon}{L_e^\delta}} q_1$\\
        \proofstep{\Cref{lemma:project_from_parcomp_light}}\\
        $\exists p_1 \in Q_s, q_1 \in Q_e : s \parcomp e \xRightarrow{\epsilon} p_1 \parcomp q_1$\\
        \proofstep{\Cref{def:uioco}: $Utraces$}\\
        $\epsilon \in  \utraces[s\parcomp e]$\\
        
        \item[Induction step:]\ \\
        IH: the proposition holds for $\sigma'$. To proof: the proposition holds for $\sigma$ where $\sigma = \sigma' \cdot a$, $a \in L_{s \parcomp e}^\delta $.
        We do a case distinction on $a$, splitting $L_{s \parcomp e}^\delta$ into\\
        $(I_s \setminus L_e), (I_e \setminus L_s), ((I_s \cap U_e)\cup (I_e \cap U_s)\cup\{\delta\}), (I_s \cap I_e), (U_s \setminus L_e), (U_e \setminus L_s)$. Note that $U_s \cap U_e$ is missing because it is empty (\cref{def:composable}).:
        \begin{case_distinction}
            \item[$a \in I_s \setminus L_e$:]\ \\
            % Item Checked by: gijs, 10-02-23
            $\project{\sigma'\cdot a}{L_s^\delta} \in  \utraces[s] \land \project{\sigma'\cdot a}{L_e^\delta} \in  \utraces[e]$\\
            \proofstep{\Cref{def:projection}: $\projectop$}\\
            $\project{\sigma'}{L_s^\delta}\cdot a \in  \utraces[s] \land \project{\sigma'}{L_e^\delta} \in  \utraces[e]$\\
            \proofstep{\Cref{def:uioco}: $Utraces$}\\
            $\project{\sigma'}{L_s^\delta}\cdot a\in  \utraces[s] \land \project{\sigma'}{L_e^\delta} \in  \utraces[e]\; \land$\\
            $\exists p_1 \in Q_s, q_1 \in Q_e: s \xRightarrow{\project{\sigma'}{L_s^\delta}\cdot a} p_1\land e \xRightarrow{\project{\sigma'}{L_e^\delta}} q_1$\\
            $\forall p_2 \in Q_s : s \xRightarrow{\project{\sigma'}{L_s^\delta}} p_2 \implies p_2 \xRightarrow{a}$\\
            \proofstep{\Itemref{prop:general_properties}{prefix_closed}}\\
            $\project{\sigma'}{L_s^\delta}\in  \utraces[s] \land \project{\sigma'}{L_e^\delta} \in  \utraces[e]\; \land$\\
            $\exists p_1 \in Q_s, q_1 \in Q_e: s \xRightarrow{\project{\sigma'}{L_s^\delta}\cdot a} p_1\land e \xRightarrow{\project{\sigma'}{L_e^\delta}} q_1$\\
            $\forall p_2 \in Q_s : s \xRightarrow{\project{\sigma'}{L_s^\delta}} p_2 \implies p_2 \xRightarrow{a}$\\
            \proofstep{Apply IH}\\
            $\sigma'\in  \utraces[s\parcomp e]\; \land $\\
            $\exists p_1 \in Q_s, q_1 \in Q_e: s \xRightarrow{\project{\sigma'}{L_s^\delta}\cdot a} p_1\land e \xRightarrow{\project{\sigma'}{L_e^\delta}} q_1$\\
            $\forall p_2 \in Q_s : s \xRightarrow{\project{\sigma'}{L_s^\delta}} p_2 \implies p_2 \xRightarrow{a}$\\
            \proofstep{\Cref{def:projection}: $\projectop$}\\
            $\sigma'\in  \utraces[s\parcomp e]\; \land $\\
            $\exists p_1 \in Q_s, q_1 \in Q_e : s\xRightarrow{\project{\sigma'\cdot a}{L_s^\delta}} p_1 \land e \xRightarrow{\project{\sigma'\cdot a}{L_e^\delta}} q_1\; \land$\\
            $\forall p_2 \in Q_s: s \xRightarrow{\project{\sigma'}{L_s^\delta}} p_2 \implies p_2  \xRightarrow{\project{a}{L_s^\delta}}$\\
            \proofstep{$(A \implies C) \implies (A \land B \implies C)$}\\
            $\sigma'\in  \utraces[s\parcomp e]\; \land $\\
            $\exists p_1 \in Q_s, q_1 \in Q_e : s\xRightarrow{\project{\sigma'\cdot a}{L_s^\delta}} p_1 \land e \xRightarrow{\project{\sigma'\cdot a}{L_e^\delta}} q_1\; \land$\\
            $\forall p_2 \in Q_s, q_2 \in Q_e : s \xRightarrow{\project{\sigma'}{L_s^\delta}} p_2 \land  e \xRightarrow{\project{\sigma'}{L_e^\delta}} q_2 \implies p_2  \xRightarrow{\project{a}{L_s^\delta}}$\\
            \proofstep{\Cref{def:arrowdefs}: $\forall p: p \xRightarrow{\epsilon}$}\\
            $\sigma'\in  \utraces[s\parcomp e]\; \land $\\
            $\exists p_1 \in Q_s, q_1 \in Q_e : s\xRightarrow{\project{\sigma'\cdot a}{L_s^\delta}} p_1 \land e \xRightarrow{\project{\sigma'\cdot a}{L_e^\delta}} q_1\; \land$\\
            $\forall p_2 \in Q_s, q_2 \in Q_e : s \xRightarrow{\project{\sigma'}{L_s^\delta}} p_2 \land  e \xRightarrow{\project{\sigma'}{L_e^\delta}} q_2 \implies p_2  \xRightarrow{\project{a}{L_s^\delta}} \land q_2 \xRightarrow{\epsilon}$\\
            \proofstep{\Cref{def:projection}: $\projectop$}\\
            $\sigma'\in  \utraces[s\parcomp e]\; \land $\\
            $\exists p_1 \in Q_s, q_1 \in Q_e : s\xRightarrow{\project{\sigma'\cdot a}{L_s^\delta}} p_1 \land e \xRightarrow{\project{\sigma'\cdot a}{L_e^\delta}} q_1\; \land$\\
            $\forall p_2 \in Q_s, q_2 \in Q_e : s \xRightarrow{\project{\sigma'}{L_s^\delta}} p_2 \land  \xRightarrow{\project{\sigma'}{L_e^\delta}} q_2 \implies p_2  \xRightarrow{\project{a}{L_s^\delta}} \land q_2 \xRightarrow{\project{ a}{L_e^\delta}}$\\
            \proofstep{\Cref{lemma:project_from_parcomp_light}}\\
            $\sigma'\in  \utraces[s\parcomp e]\; \land $\\
            $\exists p_1 \in Q_s, q_1 \in Q_e : s\parcomp e \xRightarrow{\sigma'\cdot a} p_1 \parcomp q_1\; \land$\\ 
            $\forall p_2 \in Q_s, q_2 \in Q_e : s \parcomp e \xRightarrow{\sigma'} p_2 \parcomp q_2 \implies p_2 \parcomp q_2 \xRightarrow{a}$\\
            \proofstep{\Cref{def:uioco}: $Utraces$}\\
            $\sigma'\cdot a \in  \utraces[s\parcomp e]$\\

            \item[$a \in I_e \setminus L_s$:] Symmetric to previous case\\
            
            \item[$a \in (I_s \cap U_e)\cup (I_e \cap U_s)\cup\{\delta\}$:]\ \\
            % Item Checked by: gijs, 10-02-23
            $\project{\sigma'\cdot a}{L_s^\delta} \in  \utraces[s] \land \project{\sigma'\cdot a}{L_e^\delta} \in  \utraces[e]$\\
            \proofstep{\Cref{def:uioco}: $Utraces$}\\
            $\project{\sigma'\cdot a}{L_s^\delta}\in  \utraces[s] \land \project{\sigma'\cdot a}{L_e^\delta}\in  \utraces[e] \land$\\
            $\exists p_1 \in Q_s, q_1 \in Q_e: s \xRightarrow{\project{\sigma'\cdot a}{L_s^\delta}} p_1\land e \xRightarrow{\project{\sigma'\cdot a}{L_e^\delta}} q_1$\\
            \proofstep{\Cref{def:projection}: $\projectop$}\\
            $\project{\sigma'}{L_s^\delta}\cdot a\in  \utraces[s] \land \project{\sigma'}{L_e^\delta}\cdot a\in  \utraces[e] \land$\\
            $\exists p_1 \in Q_s, q_1 \in Q_e: s \xRightarrow{\project{\sigma'\cdot a}{L_s^\delta}} p_1\land e \xRightarrow{\project{\sigma'\cdot a}{L_e^\delta}} q_1$\\
            \proofstep{\Itemref{prop:general_properties}{prefix_closed}}\\
            $\project{\sigma'}{L_s^\delta}\in  \utraces[s] \land \project{\sigma'}{L_e^\delta} \in  \utraces[e] \land$\\
            $\exists p_1 \in Q_s, q_1 \in Q_e: s \xRightarrow{\project{\sigma'\cdot a}{L_s^\delta}} p_1\land e \xRightarrow{\project{\sigma'\cdot a}{L_e^\delta}} q_1$\\
            \proofstep{Apply IH}\\
            $\sigma'\in  \utraces[s\parcomp e] \land $\\
            $\exists p_1 \in Q_s, q_1 \in Q_e: s \xRightarrow{\project{\sigma'\cdot a}{L_s^\delta}} p_1\land e \xRightarrow{\project{\sigma'\cdot a}{L_e^\delta}} q_1$\\
            \proofstep{\Cref{lemma:project_from_parcomp_light}}\\
            $\sigma'\in  \utraces[s\parcomp e] \land $\\
            $\exists p_1 \in Q_s, q_1 \in Q_e : s\parcomp e \xRightarrow{\sigma'\cdot a} p_1 \parcomp q_1$\\ 
            \proofstep{\Cref{def:uioco}: $Utraces + a \notin I_{s\parcomp e}$}\\
            $\sigma'\cdot a \in  \utraces[s\parcomp e]$\\

            \item[$a \in I_s \cap I_e$:]\ \\
            % Item Checked by: gijs, 10-02-23
            $\project{\sigma'\cdot a}{L_s^\delta} \in  \utraces[s] \land \project{\sigma'\cdot a}{L_e^\delta} \in  \utraces[e]$\\
            \proofstep{\Cref{def:projection}: $\projectop$}\\
            $\project{\sigma'}{L_s^\delta}\cdot a \in  \utraces[s] \land \project{\sigma'}{L_e^\delta}\cdot a \in  \utraces[e]$\\
            \proofstep{\Cref{def:uioco}: $Utraces$}\\
            $\project{\sigma'}{L_s^\delta}\cdot a\in  \utraces[s] \land \project{\sigma'}{L_e^\delta}\cdot a \in  \utraces[e]\; \land$\\
            $\exists p_1 \in Q_s, q_1 \in Q_e: s \xRightarrow{\project{\sigma'}{L_s^\delta}\cdot a} p_1\land e \xRightarrow{\project{\sigma'}{L_e^\delta}\cdot a} q_1\;\land$\\
            $\forall p_2 \in Q_s : s \xRightarrow{\project{\sigma'}{L_s^\delta}} p_2 \implies p_2 \xRightarrow{a} \;\land$\\
            $\forall q_2 \in Q_e : e \xRightarrow{\project{\sigma'}{L_s^\delta}} q_2 \implies q_2 \xRightarrow{a}$\\
            \proofstep{\Itemref{prop:general_properties}{prefix_closed}}\\
            $\project{\sigma'}{L_s^\delta}\in  \utraces[s] \land \project{\sigma'}{L_e^\delta} \in  \utraces[e]\; \land$\\
            $\exists p_1 \in Q_s, q_1 \in Q_e: s \xRightarrow{\project{\sigma'}{L_s^\delta}\cdot a} p_1\land e \xRightarrow{\project{\sigma'}{L_e^\delta}\cdot a} q_1\;\land$\\
            $\forall p_2 \in Q_s : s \xRightarrow{\project{\sigma'}{L_s^\delta}} p_2 \implies p_2 \xRightarrow{a} \;\land$\\
            $\forall q_2 \in Q_e : e \xRightarrow{\project{\sigma'}{L_s^\delta}} q_2 \implies q_2 \xRightarrow{a}$\\
            \proofstep{Apply IH}\\
            $\sigma'\in  \utraces[s\parcomp e]\; \land $\\
            $\exists p_1 \in Q_s, q_1 \in Q_e: s \xRightarrow{\project{\sigma'}{L_s^\delta}\cdot a} p_1\land e \xRightarrow{\project{\sigma'}{L_e^\delta}\cdot a} q_1\;\land$\\
            $\forall p_2 \in Q_s : s \xRightarrow{\project{\sigma'}{L_s^\delta}} p_2 \implies p_2 \xRightarrow{a} \;\land$\\
            $\forall q_2 \in Q_e : e \xRightarrow{\project{\sigma'}{L_s^\delta}} q_2 \implies q_2 \xRightarrow{a}$\\
            \proofstep{$(A \implies C) \land (B \implies D) \implies (A \land B \implies C \land D)$}\\
            $\sigma'\in  \utraces[s\parcomp e]\; \land $\\
             $\exists p_1 \in Q_s, q_1 \in Q_e: s \xRightarrow{\project{\sigma'}{L_s^\delta}\cdot a} p_1\land e \xRightarrow{\project{\sigma'}{L_e^\delta}\cdot a} q_1\;\land$\\
            $\forall p_2 \in Q_s, q_2 \in Q_e : s \xRightarrow{\project{\sigma'}{L_s^\delta}} p_2 \land  e \xRightarrow{\project{\sigma'}{L_e^\delta}} q_2 \implies p_2  \xRightarrow{a}\land\; q_2 \xRightarrow{a}$\\
            \proofstep{\cref{def:projection}: $\projectop$}\\
            $\sigma'\in  \utraces[s\parcomp e]\; \land $\\
             $\exists p_1 \in Q_s, q_1 \in Q_e: s \xRightarrow{\project{\sigma'\cdot a}{L_s^\delta}} p_1\land e \xRightarrow{\project{\sigma'\cdot a}{L_e^\delta}} q_1\;\land$\\
            $\forall p_2 \in Q_s, q_2 \in Q_e : s \xRightarrow{\project{\sigma'}{L_s^\delta}} p_2 \land  e \xRightarrow{\project{\sigma'}{L_e^\delta}} q_2 \implies p_2  \xRightarrow{\project{a}{L_s^\delta}}\land q_2 \xRightarrow{\project{a}{L_e^\delta}}$\\
            \proofstep{\Cref{lemma:project_from_parcomp_light}}\\
            $\sigma'\in  \utraces[s\parcomp e]\; \land $\\
            $\exists p_1 \in Q_s, q_1 \in Q_e : s\parcomp e \xRightarrow{\sigma'\cdot a} p_1 \parcomp q_1\; \land$\\ 
            $\forall p_2 \in Q_s, q_2 \in Q_e : s \parcomp e \xRightarrow{\sigma'} p_2 \parcomp q_2 \implies p_2  \parcomp q_2 \xRightarrow{a}$\\
            \proofstep{\Cref{def:uioco}: $Utraces$}\\
            $\sigma'\cdot a \in  \utraces[s\parcomp e]$\\
            
            \item[$a \in (U_s \setminus L_e)$:]\ \\
            % Item Checked by: gijs, 10-02-23
            $\project{\sigma'\cdot a}{L_s^\delta} \in  \utraces[s] \land \project{\sigma'\cdot a}{L_e^\delta} \in  \utraces[e]$\\
            \proofstep{\Cref{def:uioco}: $Utraces$}\\
            $\project{\sigma'\cdot a}{L_s^\delta} \in  \utraces[s] \land \project{\sigma'\cdot a}{L_e^\delta} \in  \utraces[e]\;\land$\\
            $\exists p \in Q_s, q \in Q_e : s \xRightarrow{\project{\sigma'\cdot a}{L_s^\delta}} p \land e \xRightarrow{\project{\sigma'\cdot a}{L_e^\delta}} q$\\
            \proofstep{\Cref{def:projection}: $\projectop$}\\
            $\project{\sigma'}{L_s^\delta}\cdot a \in  \utraces[s] \land \project{\sigma'}{L_e^\delta}\in  \utraces[e]\;\land$\\
            $\exists p \in Q_s, q \in Q_e : s \xRightarrow{\project{\sigma'\cdot a}{L_s^\delta}} p \land e \xRightarrow{\project{\sigma'\cdot a}{L_e^\delta}} q$\\
            \proofstep{\Itemref{prop:general_properties}{prefix_closed}}\\
            $\project{\sigma'}{L_s^\delta} \in  \utraces[s] \land \project{\sigma'}{L_e^\delta} \in  \utraces[e]\; \land$\\
            $\exists p \in Q_s, q \in Q_e : s \xRightarrow{\project{\sigma'\cdot a}{L_s^\delta}} p \land e \xRightarrow{\project{\sigma'\cdot a}{L_e^\delta}} q$\\
            \proofstep{Apply IH}\\
            $\sigma'\in  \utraces[s\parcomp e]\; \land $\\
            $\exists p \in Q_s, q \in Q_e : s \xRightarrow{\project{\sigma'\cdot a}{L_s^\delta}} p \land e \xRightarrow{\project{\sigma'\cdot a}{L_e^\delta}} q$\\
            \proofstep{\Cref{lemma:project_from_parcomp_light} }\\
            $\sigma'\in  \utraces[s\parcomp e]\; \land $\\
            $\exists p \in Q_s, q \in Q_e : s\parcomp e \xRightarrow{\sigma'\cdot a} p \parcomp q$\\
            \proofstep{\Cref{def:uioco}: $Utraces$}\\
            $\sigma'\cdot a \in  \utraces[s\parcomp e]$\\
            
            \item[$a \in (U_e \setminus L_s)$:]Symmetric to previous case.\\
        \end{case_distinction}
    \end{case_distinction}
\end{case_distinction}
\end{proof}

We now present \Cref{theo:eco_comp_accepting}, which is the main statement of this paper. It states that for mutually accepting specifications, $\uioco$ is preserved under parallel composition. The other way around, if there is a problem that causes two composed implementations to be $\uioco$-incorrect to their composed specifications, then this problem can also be found by testing with at least one of the components. The reverse of this implication, however, does not hold, even if both specifications are mutually accepting.

The reason for this is that the mutual acceptance relation only guarantees that no invalid outputs are communicated. It does not enforce that something is actually communicated. Therefore, it is possible for one of the two implementations to produce quiescence when this is not allowed, which is then masked in the combined system by the outputs generated by the other component. This highlights a property of the $\uioco$ relation: presence of specific outputs cannot be enforced. One possible way to deal with this might be to extend the $\uioco$ theory with a more fine grained concept of quiescence, allowing the detection of quiescence in specific components, instead of only over the whole system. This is further explored in \cite{noroozi_ImprovingInputOutputConformance_2014}.

\begin{theoremrep}
Let $s,e \in LTS$ be $\composable$, $i_s, i_e\in IOTS$, then\\
    \[s \mutuallyaccepts e \ \land\  
    i_s \uioco s \ \land\  i_e \uioco e \ \implies
    \ i_s \parcomp i_e \uioco s \parcomp e\]
    
\label{theo:eco_comp_accepting}
\end{theoremrep}

\begin{proof}
\ \\
% Proof for theo:eco_comp_accepting
% Checked by: gijs, 10-02-23.
Assume $e \mutuallyaccepts s$. Take $\sigma \in  \utraces[s\parcomp e], x \in out(i_s \parcomp i_e \after \sigma)$. If either $\sigma$ or $x$ does not exist ($ \utraces[s\parcomp e] = \emptyset$ or $out(i_s \parcomp i_e \after \sigma) = \emptyset$), then the theorem trivially holds. To prove: $x \in out(s\parcomp e \after \sigma)$\\\\
$i_s \uioco s \land i_e \uioco e \land$\\
$\sigma \in  \utraces[s\parcomp e]\land x \in out(i_s \parcomp i_e \after \sigma)$\\
\proofstep{\Cref{def:out,def:after}: $out$ and $\after$}\\
$i_s \uioco s \land i_e \uioco e \land \sigma \in  \utraces[s\parcomp e]\;\land$\\
$\exists p_1 \in Q_{i_s}, q_1 \in Q_{i_e} : i_s \parcomp i_e \xRightarrow{\sigma\cdot x} p_1 \parcomp q_1$\\
\proofstep{\Cref{lemma:project_from_parcomp_IOTS}}\\
$i_s \uioco s \land i_e \uioco e \land \sigma \in  \utraces[s\parcomp e]\;\land$\\
$\exists p_1 \in Q_{i_s}, q_1 \in Q_{i_e} : i_s \xRightarrow{\project{\sigma\cdot x}{L_{i_s}^\delta}} p_1  \land  i_e \xRightarrow{\project{\sigma\cdot x}{L_{i_e}^\delta}}  q_1$\\
\proofstep{Case destinction on $x$}
\begin{case_distinction}
    \item[$x = \delta$:]\ \\
    \proofstep{\Cref{def:projection}: $\projectop$}\\
    $i_s \uioco s \land i_e \uioco e \land \sigma \in  \utraces[s\parcomp e]\;\land$\\
    $\exists p_1 \in Q_{i_s}, q_1 \in Q_{i_e} : i_s \xRightarrow{\project{\sigma}{L_{i_s}^\delta}\cdot \delta} p_1  \land  i_e \xRightarrow{\project{\sigma}{L_{i_e}^\delta}\cdot \delta}  q_1$\\
    \proofstep{\Cref{def:out,def:after}: $out$ and $\after$ }\\
    $i_s \uioco s \land i_e \uioco e \land \sigma \in  \utraces[s\parcomp e]\;\land$\\
    $\delta \in out(i_s \after \project{\sigma}{L_{i_s}^\delta}) \land \delta \in out(i_e \after \project{\sigma}{L_{i_e}^\delta})$\\
    \proofstep{\Cref{lemma:project_utraces}}\\
    $i_s \uioco s \land i_e \uioco e \land \sigma \in  \utraces[s\parcomp e]\land \project{\sigma}{L_s^\delta} \in  \utraces[s]\land \project{\sigma}{L_e^\delta} \in  \utraces[e]\; \land $\\
    $\delta \in out(i_s \after \project{\sigma}{L_{i_s}^\delta}) \land \delta \in out(i_e \after \project{\sigma}{L_{i_e}^\delta})$\\
    \proofstep{\Cref{def:uioco}: $\uioco$ and $L_{i_s} = L_s$}\\
    $\sigma \in  \utraces[s\parcomp e]\; \land $\\
    $\delta \in out(s \after \project{\sigma}{L_s^\delta}) \land \delta \in out(e \after \project{\sigma}{L_e^\delta})$\\
    \proofstep{\Cref{def:out,def:after}: $out$ and $\after$ }\\
    $\sigma \in  \utraces[s\parcomp e] \land $\\
    $\exists p_1 \in Q_s, q_1 \in Q_e : s \xRightarrow{\project{\sigma}{L_s^\delta}\cdot \delta} p_1  \land  e \xRightarrow{\project{\sigma}{L_e^\delta}\cdot \delta}  q_1$\\
    \proofstep{\Cref{def:arrowdefs}: $\xRightarrow{\sigma \cdot \delta}$ }\\
    $\sigma \in  \utraces[s\parcomp e]\; \land $\\
    $\exists p_1, p_2 \in Q_s, q_1, q_2 \in Q_e : s \xRightarrow{\project{\sigma}{L_s^\delta}} p_2 \land p_2 \xrightarrow{\delta} p_1  \land  e \xRightarrow{\project{\sigma}{L_e^\delta}}  q_2 \land q_2 \xrightarrow{\delta} q_1$\\
    \proofstep{\Cref{lemma:project_from_parcomp}}\\
    $\sigma \in  \utraces[s\parcomp e]\; \land $\\
    $\exists p_1, p_2 \in Q_s, q_1, q_2 \in Q_e : s\parcomp e \xRightarrow{\sigma} p_2 \parcomp q_2 \land p_2 \xrightarrow{\delta} p_1 \land q_2 \xrightarrow{\delta} q_1$\\
    \proofstep{\Cref{def:parcomp}: $\parcomp$}\\
    $\sigma \in  \utraces[s\parcomp e]\; \land $\\
    $\exists p_1, p_2 \in Q_s, q_1, q_2 \in Q_e : s\parcomp e \xRightarrow{\sigma} p_2 \parcomp q_2 \land p_2 \parcomp q_2 \xrightarrow{\delta} p_1 \parcomp q_1$\\
    \proofstep{\Cref{def:out,def:after}: $out$ and $\after$ }\\
    $\delta \in out(s\parcomp e \after \sigma)$

    \item[$x \neq \delta$:]\ \\
    \proofstep{\cref{def:projection}: $\projectop$ + $x \in U_{i_s} \lor x \in U_{i_e}$}\\
    $i_s \uioco s \land i_e \uioco e \land \sigma \in  \utraces[s\parcomp e]\;\land$\\
    $\exists p_1 \in Q_{i_s}, q_1 \in Q_{i_e} : (i_s \xRightarrow{\project{\sigma}{L_{i_s}^\delta}\cdot x} p_1 \land x \in U_{I_s})  \lor (i_e \xRightarrow{\project{\sigma}{L_{i_e}^\delta}\cdot x}  q_1\land x \in U_{I_e})$\\
    \proofstep{\Cref{def:out,def:after}: $out$ and $\after$}\\
    $i_s \uioco s \land i_e \uioco e \land \sigma \in  \utraces[s\parcomp e]\;\land$\\
    $(x \in out(i_s\after \project{\sigma}{L_{i_s}^\delta}) \lor x \in out(i_e \after \project{\sigma}{L_{i_e}^\delta}))$\\
    \proofstep{\Cref{lemma:project_utraces}}\\
    $i_s \uioco s \land i_e \uioco e \land \sigma \in  \utraces[s\parcomp e]\land \project{\sigma}{L_s^\delta} \in  \utraces[s]\land \project{\sigma}{L_e^\delta} \in  \utraces[e]\;\land$\\
    $(x \in out(i_s\after \project{\sigma}{L_{i_s}^\delta}) \lor x \in out(i_e \after \project{\sigma}{L_{i_e}^\delta}))$\\
    \proofstep{\Cref{def:uioco}: $\uioco$ and $L_{i_s} = L_s$}\\
    $\sigma \in  \utraces[s\parcomp e]\;\land$\\
    $(x \in out(s\after \project{\sigma}{L_s^\delta}) \lor x \in out(e \after \project{\sigma}{L_e^\delta}))$\\
    \proofstep{\Cref{def:uioco}: $\utraces$ }\\
    $\sigma \in  \utraces[s\parcomp e]\;\land$\\
    $\exists p_1 \in Q_s, q_1 \in Q_e: s \parcomp e \xRightarrow{\sigma} p_1 \parcomp q_1$\\
    $(x \in out(s\after \project{\sigma}{L_s^\delta}) \lor x \in out(e \after \project{\sigma}{L_e^\delta}))$\\
    \proofstep{\Cref{lemma:project_from_parcomp}: }\\
    $\sigma \in  \utraces[s\parcomp e]\;\land$\\
    $\exists p_1 \in Q_s, q_1 \in Q_e: s \xRightarrow{\project{\sigma}{L_s^\delta}} p_1  \land s \xRightarrow{\project{\sigma}{L_e^\delta}} q_1 $\\
    $(x \in out(s\after \project{\sigma}{L_s^\delta}) \lor x \in out(e \after \project{\sigma}{L_e^\delta}))$\\
    \proofstep{\Cref{def:out,def:after}: $out$ and $\after$ }\\
    $\sigma \in  \utraces[s\parcomp e]\;\land$\\
    $\exists p_1 \in Q_s, q_1 \in Q_e: s \xRightarrow{\project{\sigma}{L_s^\delta}} p_1  \land e \xRightarrow{\project{\sigma}{L_e^\delta}} q_1 $\\
    $\exists p_2 \in Q_s, q_2 \in Q_e: ( (s \xRightarrow{ \project{\sigma}{L_s^\delta}} p_2 \land p_2 \xrightarrow{x}) \lor (e \xRightarrow{ \project{\sigma}{L_e^\delta}} q_2 \land q_2 \xrightarrow{x}))$\\
    \proofstep{\Cref{lemma:project_from_parcomp}: }\\
    $\sigma \in  \utraces[s\parcomp e]\;\land$\\
    $\exists p_1, p_2 \in Q_s, q_1, q_2 \in Q_e: ( (s \parcomp e \xRightarrow{ \sigma} p_2 \parcomp q_1 \land p_2 \xrightarrow{x}) \lor (s \parcomp e \xRightarrow{ \sigma} p_1 \parcomp q_2 \land q_2 \xrightarrow{x}))$\\
    \proofstep{Further case distinction on $x$: $x \in U_{s\parcomp e} \implies x \in (U_s \setminus L_e) \cup (U_e \setminus L_s) \cup (L_s \cap L_e)$}
    \begin{case_distinction}
        \item[$x \in U_s \setminus L_e$:]\ \\
        $\exists p_1, p_2 \in Q_s, q_1, q_2 \in Q_e: (s \parcomp e \xRightarrow{ \sigma} p_2 \parcomp q_1 \land p_2 \xrightarrow{x})$\\
        \proofstep{\Itemref{lemma:parcomp_base_properties}{step_left}}\\
        $\exists p_1, p_2 \in Q_s, q_1, q_2 \in Q_e: (s \parcomp e \xRightarrow{ \sigma} p_2 \parcomp q_1 \land p_2 \parcomp q_1 \xrightarrow{x})$\\
        \proofstep{\Cref{def:out,def:after}: $out$ and $after$}\\
        $x \in out(s\parcomp e \after \sigma)$\\
        
        \item[$x \in U_e \setminus L_s$:] Symmetric to previous case.
        \item[$x \in L_s \cup L_e$:]\ \\
        \proofstep{\Cref{def:mutually_accepts,def:accepting}: $\accepts$ and $\mutuallyaccepts$}\\
        $\exists p_1, p_2 \in Q_s, q_1, q_2 \in Q_e: out(p_2) \cap I_e \subseteq in(q_1) \cap U_s \land out(q_2) \cap i_s \subseteq in(p_1) \cap U_s\;\land$\\
        $( (s \parcomp e \xRightarrow{ \sigma} p_2 \parcomp q_1 \land p_2 \xrightarrow{x}) \lor (s \parcomp e \xRightarrow{ \sigma} p_1 \parcomp q_2 \land q_2 \xrightarrow{x}))$\\
        \proofstep{\Cref{def:out,def:inset}: $out$ and $in$}\\
        $\exists p_1, p_2 \in Q_s, q_1, q_2 \in Q_e:$\\
        \tab$(s \parcomp e \xRightarrow{ \sigma} p_2 \parcomp q_1 \land p_2 \xrightarrow{x} \land\; q_1 \xRightarrow{x})\; \lor$\\
        \tab$ (s \parcomp e \xRightarrow{ \sigma} p_1 \parcomp q_2 \land q_2 \xrightarrow{x} \land\; p_1 \xRightarrow{x})$\\
        \proofstep{\Cref{def:arrowdefs}: $\xRightarrow{x}$}\\
        $\exists p_1, p_2, p_3 \in Q_s, q_1, q_2, q_3 \in Q_e:$\\
        \tab$(s \parcomp e \xRightarrow{ \sigma} p_2 \parcomp q_1 \land p_2 \xrightarrow{x} \land\; q_1 \xRightarrow{\epsilon} q_3 \land q_3 \xrightarrow{x})\; \lor$\\
        \tab$ (s \parcomp e \xRightarrow{ \sigma} p_1 \parcomp q_2 \land q_2 \xrightarrow{x} \land\; p_1 \xRightarrow{\epsilon} p_3 \land p_3 \xrightarrow{x})$\\
        \proofstep{\Itemref{lemma:parcomp_base_properties}{step_tau}}\\
        $\exists p_1, p_2, p_3 \in Q_s, q_1, q_2, q_3 \in Q_e:$\\
        \tab$ (s \parcomp e \xRightarrow{ \sigma} p_2 \parcomp q_1 \land p_2 \xrightarrow{x} \land\; p_2 \parcomp q_1 \xRightarrow{\epsilon} p_2 \parcomp q_3 \land q_3 \xrightarrow{x})\;\lor$\\
        \tab$(s \parcomp e \xRightarrow{ \sigma} p_1 \parcomp q_2 \land q_2 \xrightarrow{x} \land\; p_1 \parcomp q_2 \xRightarrow{\epsilon} p_3 \parcomp q_2\land p_3 \xrightarrow{x})$\\
        \proofstep{\Itemref{prop:general_properties}{trans_transitive}}\\
        $\exists p_1, p_2, p_3 \in Q_s, q_1, q_2, q_3 \in Q_e:$\\
        \tab$(s \parcomp e \xRightarrow{ \sigma} p_2 \parcomp q_3 \land p_2 \xrightarrow{x} \land\; q_3 \xrightarrow{x})\; \lor$\\
        \tab$(s \parcomp e \xRightarrow{ \sigma} p_3 \parcomp q_2 \land q_2 \xrightarrow{x} \land\; p_3 \xrightarrow{x})$\\
        \proofstep{\Itemref{lemma:parcomp_base_properties}{both_to_parcomp}}\\
        $\exists p_1, p_2, p_3 \in Q_s, q_1, q_2, q_3 \in Q_e:$\\
        \tab$(s \parcomp e \xRightarrow{ \sigma} p_2 \parcomp q_3 \land p_2 \parcomp q_3 \xrightarrow{x})\;\lor$\\
        \tab$(s \parcomp e \xRightarrow{ \sigma} p_3 \parcomp q_2 \land p_3 \parcomp q_2 \xrightarrow{x})$\\
        \proofstep{\Cref{def:out,def:after}: $out$ and $after$}\\
        $x \in out(s\parcomp e \after \sigma)$\\
    \end{case_distinction}
\end{case_distinction}
\end{proof}

Another thing to note is that when applying \cref{theo:eco_comp_accepting} in practice, this makes the implicit assumption that you can correctly compose components. In order to guarantee the correctness of the composed system, the composition of components $i_s$ and $i_e$ must actually behave as $i_s\parcomp i_e$. This means that any communicating channels must be connected as described in $s$ and $e$, and that there must not be some hidden implicit environment part of the composition setup that further influences the behaviour of either of the components.

\equalizeCounters{}

\section{The Parking System Revisited}
\label{subsec:acceptingExample}

\equalizeCounters
\begin{figure}[t!]
\begin{subfigure}[b]{.4\linewidth}
\phantomcaption\label{subfig:carsensor_Adapted}
\speclabel{sensor_Adapted}
\implabel{sensor_Adapted}
\begin{tikzpicture}[LTS]
\node[state,initial] (1) {1};
\node[state, below right= .5\nodedistance and \nodedistance of 1] (2) {2};
\node[state, below=of 1] (3) {3};

\path[->] 
    (1) edge [bend left] node [auto] {$\mathit{?off}$} (2)
        edge [] node [auto] {$?obs$} (3)
        edge [loop above] node [below left=-2mm and 2mm] {$\mathit{!safe}$} (1)
    (2) edge [loop right, imp only] node [below left=2mm and -2mm] {$?obs$\\$\mathit{?off}$} (2)
    (3) edge [bend right] node [auto] {$\mathit{?off}$} (2)
        edge [bend left] node [auto] {$!beep$} (1)
        edge [loop left, imp only] node [auto] {$?obs$} (3);

\end{tikzpicture}
\captiontext{Adapted \currentspec{} and \currentimp{} : car sensor component}

\end{subfigure}
\begin{subfigure}[b]{.59\linewidth}
\phantomcaption\label{subfig:carauto_park_Adapted}
\speclabel{auto_park_Adapted}
\implabel{auto_park_Adapted}
\begin{tikzpicture}[LTS]
\node[initial,state] (A) {A};
\node[state, below=of A] (B) {B};
\node[state, below right=.5\nodedistance and \nodedistance of A] (C) {C};
\node[state, above right=.5\nodedistance and \nodedistance of C] (D) {D};
\node[state, below=of D] (E) {E};

\path[->] 
    (A) edge node [left] {$\mathit{?safe}$} (B)
        edge node [auto] {$?beep$} (C)
    (B) edge [bend right] node [right] {$!park$} (A)
        edge [loop right] node [auto] {$\mathit{?safe}$} (B)
        edge node [below right= -1mm] {$?beep$} (C)
    (C) edge node [auto] {$!stop$} (D)
        edge [loop below right] node [below] {$\mathit{?safe}$\\$?beep$} (C)
    (D) edge node [auto] {$\mathit{!off}$} (E)
        edge [loop right] node [auto] {$\mathit{?safe}$\\$?beep$} (D)
    (E) edge [loop right, imp only] node [auto] {$\mathit{?safe}$\\$?beep$} (E);

\end{tikzpicture}
\captiontext{Adapted \currentspec{} and \currentimp{}: automated parking component}

\end{subfigure}
\caption{Mutually accepting versions of \cref{fig:carParkSensorComponents}. (spec: $\tikzarrow$, imp: $\tikzarrow[imp only]$)}
\label{fig:carParkSensorComponentsAdapted}
\end{figure}

Using the results from \cref{sec:definitions} and \cref{sec:acceptingSystems}, the problems with the parking system from \cref{sec:exampleIntro} can be explained: the two specifications in \cref{fig:carParkSensorComponents} are not mutually accepting. A counterexample is the trace $\mathit{safe}\cdot\mathit{obs}$, which is in the $\utraces$ of \cref{spec:composed_park}, and goes to state $B3$. In state 3, however, \cref{spec:sensor} can perform output $\mathit{beep}$, while \cref{spec:auto_park} does not accept input $\mathit{beep}$ in state $B$. A number of other counterexamples can also be given, each one corresponding to one of the dashed output transitions of \cref{spec:composed_park}. These are the states where the composed implementation produces unspecified outputs. Using these counterexamples, the points where the specifications have to be extended can be identified. The result can be seen in \cref{fig:carParkSensorComponentsAdapted}. \Cref{subfig:carauto_park_Adapted} now has several extra transitions for $\mathit{safe}$ and $\mathit{beep}$ defined in the specification, exactly in those places where the sensor might supply these inputs. The developer is now forced to think about what actually should happen there, while the developer is still free to not specify inputs that should not occur in normal operation. This is especially relevant for the $\mathit{beep}$ transition originating from state $B$, which was previously unspecified. On further inspection, it is revealed that a simple self loop is not desired here, because after a $\mathit{beep}$ the car should stop, and not continue to park. This would have resulted in undesired behaviour if the specifications were simply made input enabled in an automatic way, as was done in previous approaches \cite{vanderbijl_CompositionalTestingIoco_2004}. Using a self-loop here would mean that implementations are possible which pass all tests, but still do not stop when an object is detected.
\composedlabelPreloadFormat{spec:sensor_Adapted}{spec:auto_park_Adapted}
\composedlabelPreloadFormat{imp:sensor_Adapted}{imp:auto_park_Adapted}

The result of composing the adapted specifications from \cref{fig:carParkSensorComponentsAdapted} is shown in \cref{fig:carParkSensorComposedAdapted}. This specification now correctly finds that \cref{imp:composed_park} $\notuioco$ \cref{spec:composed_park_Adapted}, which can be seen with the trace $\mathit{safe}\cdot\mathit{obs}\cdot\mathit{beep}$ which is present in the $\utraces$ of \cref{spec:composed_park_Adapted}. After this trace, \cref{imp:composed_park} can produce the output $\mathit{park}$, which is undesirable after detecting an object, and also not allowed by \cref{spec:composed_park_Adapted}. But if each individual implementation is updated to be $\uioco$ correct according to its own adapted specification, as is done with \cref{imp:sensor_Adapted} and \cref{imp:auto_park_Adapted}, then their composition is again correct with respect to the composed specifications, i.e.\ \cref{imp:composed_park_Adapted} $\uioco$ \cref{spec:composed_park_Adapted}. 

\begin{figure}[t!]
\centering
\begin{tikzpicture}[LTS]
\node[initial,state] (A1) {A1};
\node[state, right of =  A1] (B1) {B1};
\node[state, below of= A1] (A3) {A3};
\node[state, right of= A3] (B3) {B3};
\node[state, right of= B3] (C1) {C1};
\node[state, above of= C1] (C3) {C3};
\node[state, right of= C1] (D1) {D1};
\node[state, right of= C3] (D3) {D3};
\node[state, below right = 0.5\nodedistance and \nodedistance of D3] (E2) {E2};

\path[->] 
    (A1)    edge[bend left] node [above] {$\mathit{!safe}$} (B1)
            edge node [left] {$?obs$} (A3)
    (B1)    edge[bend left] node [below] {$!park$} (A1)
            edge node [right] {$?obs$} (B3)
            edge[loop above] node [above] {$\mathit{!safe}$} (B1) 
    (B3)    edge node [above] {$!park$} (A3)
            edge node [above] {$!beep$} (C1)
            edge [loop below, imp only] node [below] {$?obs$} (B3)
    (A3)    edge [bend right=50, looseness=1.3] node [below] {$!beep$} (C1)
            edge [loop below,imp only] node[below] {$?obs$} (A3)
    (C1)    edge [bend left] node [above left] {$?obs$} (C3)
            edge node [below] {$!stop$} (D1)
            edge[loop below] node [below] {$\mathit{!safe}$} (C1)
    (C3)    edge[bend left] node [below right] {$!beep$} (C1)
            edge node [above] {$!stop$} (D3)
            edge [loop above, imp only] node [above] {$?obs$} (C3)
    (D1)    edge [bend left] node [above left] {$?obs$} (D3)
            edge node [below right] {$\mathit{!off}$} (E2)
            edge[loop below] node [below] {$\mathit{!safe}$} (D1)
    (D3)    edge [bend left] node [right] {$!beep$} (D1)
            edge node [above right] {$\mathit{!off}$} (E2)
            edge [loop above, imp only] node [above] {$?obs$} (D3)
    (E2)    edge [loop right, imp only] node [right] {$?obs$} (E2)
    ;

\end{tikzpicture}
\composedlabelNoFormat{spec:sensor_Adapted}{spec:auto_park_Adapted}{spec:composed_park_Adapted}
\composedlabelNoFormat{imp:sensor_Adapted}{imp:auto_park_Adapted}{imp:composed_park_Adapted}
\caption{Adapted car autopark and sensor composed (\cref{spec:sensor_Adapted}$\parcomp$\cref{spec:auto_park_Adapted} $\tikzarrow$) and (\cref{imp:sensor_Adapted}$\parcomp$\cref{imp:auto_park_Adapted} $\tikzarrow[imp only]$)}
\label{fig:carParkSensorComposedAdapted}
\end{figure}

The example shows how the $\mutuallyaccepts$ relation can be used to find integration problems between components using their specifications. Possible problems are prevented by expanding the specification, without requiring a full specification of all inputs. This does not yet require any actual implementations, as the reasoning is done over the domain of all possible valid implementations. Finding these integration problems before starting integration testing, allows for fixing them earlier in development. 

\section{Component Substitution and Diagnosis}
\label{sec:componentSubstitution}

In addition to providing compositionality in development and testing, mutual acceptance has benefits when retesting systems with updated components, and when diagnosing systems consisting of components.
A common situation is that one component becomes deprecated and needs to be replaced. This traditionally has a high cost, because even if the new component is well tested, there is a chance using it will cause problems with the other components already in use. These issues mainly occur because replacing a component changes the environment for the other components. This means the other components, which are the environment of the replaced component, might be called with new inputs which have not yet been tested. A well known example where reuse of an old, well tested component in a new environment caused the whole system to fail is the crash of the Ariane-5 rocket \cite{lions_ArianeFlight501_1996,weyuker_TestingComponentbasedSoftware_1998}. Here, an important subsystem put implicit requirements on the environment which were not documented or checked to hold. Correctness was inferred from extensive testing, but after changing the environment this testing became invalid, and the component failed anyway.

These problems can be reduced by using a specification-based analysis like $\mutuallyaccepts$ in combination with model-based testing. Model-based testing can generate tests for every defined sequence of inputs. If two specifications are mutually accepting, then they only communicate outputs which are defined inputs for the intended communication partner. These two points together mean that all the model-based testing done up to the point of replacing a component is still useful, because it was testing for all possible inputs, and not just the ones that were in current use. This can give a much higher confidence that a component switch will not cause any problems, because testing does not have to start from square one. If the specification of the new component is not mutually accepting with all the rest of the system, then the counterexamples point to all the places where undefined inputs are given. This information can be used to improve the specifications, and focus testing toward these possible problem areas.

The correctness reasoning made possible by the $\mutuallyaccepts$ relation can also be used during diagnosis, by taking the converse of \cref{theo:eco_comp_accepting}.  If the whole system contains a problem, and one or more components are found to be $\uioco$ correct, then the problem must be located within one of the remaining components. Together with \cref{lemma:project_utraces,lemma:project_from_parcomp} this can then be used to narrow down a trace showing $\uioco$ incorrectness of the whole system to a shorter trace showing $\uioco$ incorrectness of one specific component. This idea is expressed in \cref{lemma:project_diagnosis}. Since $\composable$ requires each label to be part of at most one output set, the last output of the counterexample uniquely identifies the problem component. This does not work if the last output was $\delta$, which could have been caused by a number of components. In this case we can still find the faulty component by replaying the projected traces in all components until the faulty one is found. This can, for example, more accurately determine the source of bugs from gathered logs containing full system traces.

\begin{lemmarep}
 Let $s,e \in LTS$, $i_s,i_e \in \IOTS$, $s\mutuallyaccepts e$, $\sigma \in \utraces[s \parcomp e]$.
\[\begin{array}[t]{lll}
    \sigma & \textit{is a counterexample for } i_s\parcomp i_e \uioco s \parcomp e & \implies \\
    & \project{\sigma}{L_s^\delta} \textit{ is a counterexample for } i_s \uioco s  & \ \lor \\
    & \project{\sigma}{L_e^\delta} \textit{ is a counterexample for } i_e \uioco e & \\
\end{array}\]
\label{lemma:project_diagnosis}
\end{lemmarep}
\begin{proof}
    \ \\
    We have $\sigma$ is a counterexample for $i_s\parcomp i_e \uioco s \parcomp e$. This means we have:
    $\exists \ell \in U_{s\parcomp e}^\delta, p_i \in Q_{s_i}, q_i \in Q_{e_i}, i_s\parcomp i_e \Trans{\sigma} p_i \parcomp q_i \land p_i\parcomp q_i \trans{\ell} \land$\\
    $\exists p \in Q_s, q \in Q_e, s\parcomp e \Trans{\sigma} p \parcomp q \land p \parcomp q \nottrans{\ell}$.\\
    By \cref{lemma:project_utraces} we have $\project{\sigma}{L_s^\delta} \in \utraces[s] \land \project{\sigma}{L_e^\delta} \in \utraces[e]$.\\
    Then with \cref{lemma:project_from_parcomp,lemma:project_from_parcomp_IOTS}, we have that after projection we still reach the same states. This gives:\\
    $\exists \ell \in U_{s\parcomp e}^\delta, p_i \in Q_{s_i}, q_i \in Q_{e_i}, i_s \Trans{\project{\sigma}{L_s^\delta}} p_i \land  i_e \Trans{\project{\sigma}{L_e^\delta}} q_i\land p_i\parcomp q_i \trans{\ell} \land$\\
    $\exists p \in Q_s, q \in Q_e, s \Trans{\project{\sigma}{L_s^\delta}} p  \land e \Trans{\project{\sigma}{L_e^\delta}} q   \land p \parcomp q \nottrans{\ell}$.\\
    Finally, we do structural induction on the definition of $p_i\parcomp q_i \trans{\ell}$ and $p \parcomp q \nottrans{\ell}$ (\cref{def:parcomp}). This leads to several cases. If $\ell$ is an unsynchronised output from $U_s$, this gives $p_i\trans{\ell}$ and $p\nottrans{\ell}$, which is a counterexample for $i_s\uioco s$. The same is true for $i_e \uioco e$ if we assume $\ell$ is an unsynchronised output from $U_e$. If we assume $\ell$ is synchronised instead, we follow the same reasoning, but now with the extra step that due to $s\mutuallyaccepts e$, $p\parcomp q \nottrans{\ell}$ still implies $\ell \notin \outset{p} \land \ell \notin \outset{q}$. The final case is quiescence. In this case we know either $p$ or $q$ or both are not quiescent: $p\nottrans{\ell} \lor\; q \nottrans{\ell}$. For $p_i$ and $q_i$ there are two options according to \cref{def:parcomp}: they are both quiescent or there is a communication problem. If they are both quiescent than this gives a counterexample for $\uioco$ correctness for at least one of them. If there is a communication problem then either $p_i$ or $q_i$ produces an output, and the other one does not accept this output as an input. This is not actually possible however, because both $s_i$ and $e_i$ are input complete, and therefore all their states accept all inputs.
\end{proof}

\section{Related Work}
\label{sec:related_work}
The work in this paper is closely related to ideas already discussed in the context of interface automata \cite{dealfaro_InterfaceAutomata_2001,dealfaro_InterfaceBasedDesign_2005,dealfaro_InterfaceTheoriesComponentBased_2001}. Interface automata are a type of labelled transition system which can be used to model both the behaviour of a component and the constraints it puts on its environment. These constraints are encoded in the form of missing input transitions, which then signify that the component can only be used in an environment that does not give these inputs. This closely resembles the main idea behind the $\mutuallyaccepts$ relation. Apart from a slightly different composability requirement which makes it associative, our definition for parallel composition coincides with the one from \cite{dealfaro_InterfaceAutomata_2001}. Our definition for $\mutuallyaccepts$ also seems to coincide with the absence of reachable (by $\utraces$) error states as defined  in \cite{dealfaro_InterfaceAutomata_2001}. The solution to reachable error states taken for interface automata is to apply a pruning algorithm. This will remove input transitions to further restrict the valid environments until all error states become unreachable. A downside of this approach is that it becomes easy to generate composed models that after pruning no longer give errors, but also no longer express the desired correct behaviour. This is noted in \cite{dealfaro_InterfaceAutomata_2001} as the observation that the environment that does not give any inputs at all, always avoids all avoidable error states. The interface automata approach consists of removing transitions from the composed specification until problem areas are unreachable. We instead choose to add transitions to the component specifications until the problem areas no longer exist. Another contribution of our work is the inclusion of quiescence, and the direct link to the $\uioco$ implementation relation. This makes the theory easier to apply in practice in the context of existing MBT tools.

An earlier attempt at formalising the correctness of a component with respect to its environment was developed in \cite{frantzen_ModelBasedTestingEnvironmental_2007}. It defines the $\eco$ (environmental conformance) relation with similar semantics to the accepts relation. The relation $\eco$, however, works on a specification for the environment, and a black box implementation of the component. This means that $\eco$ conformance can only be checked by testing, and this needs to be redone completely whenever a component changes. Additionally, all labels of the component and its environment have to communicate, i.e., there is no external communication,  which further restricts applicability. The $\eco$ approach also has a couple of advantages. Since $\eco$ is checked using testing, it can be done on the fly. It also does not require how a component calls other components as part of its input specifications. Instead, this information is gathered while testing and compared against the specifications of the components being called. This makes a possible combination of our work with the $\eco$ theory and algorithms interesting.

In this paper, we describe when a component is a valid environment for another component. Earlier work looking into the set of valid environments for a given component was done in the field of contract-based design. A detailed overview of this field can be found in \cite{benveniste_ContractsSystemDesign_2018}. A contract is defined as a tuple of a set of valid environments and a set of valid implementations, where every combination of environment and implementation can be composed. The definition of what it means for an environment to be composable with an implementation is very similar to our definitions, and it also describes how a labelled transition system can be seen as a contract. The scope under consideration in \cite{benveniste_ContractsSystemDesign_2018}, however, is limited to receptive environments with the same label set as the components. All components also have to be deterministic, and internal transitions or quiescence are not discussed. A more recent addition to contract theory extends the scope of \cite{benveniste_ContractsSystemDesign_2018} to hyper-contracts \cite{incer_InterfaceAutomataHypercontracts_2022}. While this extends the scope of properties that can be expressed as contracts, the current instantiation of the meta-theory for labelled transition systems still has many of the restrictions imposed in \cite{benveniste_ContractsSystemDesign_2018}. In contrast to the bottom up approach of combining component contracts into a composed contract, a top down approach is also possible and sometimes desired. Decomposing a set of requirements into individual component contracts has been studied in \cite{kaiser_ContractBasedDesignEmbedded_2015}.

Another way of describing compatible components is defining a specification for the most permissive communication partner. All concrete communication partners are then in some form of a refinement relation with this "operating guideline". This approach is outlined in \cite{massuthe_OperatingGuidelinesAutomatatheoretic_2005} for acyclic finite labelled transition systems. It assumes all communications to be asynchronous, while we assume synchronous communication.

\section{Future Work}
\label{sec:future_work}

Making specifications mutually accepting involves defining extra behaviour. Some of this extra specification is desirable, for instance the $\textit{beep}$ transition from state $B$ in \cref{spec:auto_park_Adapted}. This transition represents interesting behaviour that was missed in the specification phase. Most other added transitions, however, are just simple self-loops, which represent that the input has to be ignored. If receiving an input that was not specified is considered undefined behaviour, this is required to ensure correct behaviour. Another possible interpretation would be that unspecified inputs are buffered, until the other component is ready to receive them. In such a setting, it would not be required that every input that can be given is specified immediately. It would then be enough that such inputs are specified always eventually, after some amount of internal actions of the receiving component. In general, it can be investigated how to (automatically) repair non-mutually accepting systems.

In this paper, we have defined mutual acceptance, but no ways for practically checking it have been given. Algorithms to efficiently check mutual acceptance between specifications, or testing procedures to test mutual acceptance, analogous to $\eco$, need to be developed.

The theory introduced so far works on two components. Larger systems consist of many components. Mutual acceptance can still be inferred by repeatedly applying the parallel composition operator and \cref{theo:eco_comp_accepting}. For example, when combining specifications $s1,s2$ and $s3$ into $(s1\parcomp s2)\parcomp s3$, we must check that $s1\parcomp s2 \mutuallyaccepts s3$. Doing this directly using $s1\parcomp s2$ might be complicated due to the increasing number of parallel components. We postulate that multiway-mutual acceptance can be inferred from pairwise-mutual acceptance. In general, mutual acceptance of many components, with complicated communication structures, should be further investigated.

Requiring $\mutuallyaccepts$ for all intermediate steps means that there cannot be any unexpected outputs. For real systems however, these outputs are only a problem if they appear in the  final composition of all the components. The fact that two components do not work well in all environments is not a problem if you plan to use them together with other components that will prevent this. Therefore, a different definition of mutual acceptance for more than two components at a time might be investigated. 

To apply the theory in this paper to a practical use case, it will need to be extended with the concept of data. Real systems can seldom be modelled with a finite set of labels, but will instead send instances of data types to each other. This has been formalised in the theory of symbolic transition systems ($STS$) \cite{frantzen_TestGenerationBased_2005,vandenbos_CoverageBasedTestingSymbolic_2019}, which is the underlying formalism of several MBT tools. The concepts in this paper could be extended to $STS$ which would bring them closer to being applied in practice.

\section{Conclusion}
\label{sec:conclusion}

Model-based testing is a promising technology for increasing the efficiency and effectiveness of testing. The applicability of MBT, however, is limited by the availability of models. Larger system models are hard to create, but can be composed from multiple smaller component models. 
In this paper, we have defined the mutual acceptance relation $\mutuallyaccepts$ between specifications, which guarantees that model-based testing is compositional,
i.e.\ if two components have been tested for $\uioco$-correctness with respect to their respective specifications, then the composition of these implementations is also $\uioco$-correct with respect to the composition of their specifications, under the assumption that the parallel composition operator itself is faithfully implemented. This is an improvement over previous results which obtained the same conclusion with a stricter requirement, viz.\ that all specifications must be input-enabled \cite{vanderbijl_CompositionalTestingIoco_2004}.
In addition, we have shown that this result can also help when updating older components with newer ones, and when localising a faulty component during diagnosis of a large, component-based system. 

\newpage
\printbibliography

\end{document}